\documentclass{article}

\usepackage{arxiv}

\usepackage[utf8]{inputenc}
\usepackage[T1]{fontenc}
\usepackage[american]{babel}
\usepackage{csquotes}
\usepackage{microtype}
\usepackage{hyperref}
\usepackage{url}
\usepackage{graphicx}
\usepackage{amsmath,amssymb,amsfonts,mathtools}
\usepackage{amsthm}
\usepackage{bm}
\usepackage{booktabs}
\usepackage{array}
\usepackage{adjustbox}
\usepackage{longtable}
\usepackage{tabularx}
\usepackage{multirow}
\usepackage{multicol}
\usepackage{enumitem}
\usepackage{xcolor}
\usepackage{comment}
\usepackage{caption}
\usepackage{pdflscape}
\usepackage{setspace}
\usepackage[style=apa,backend=biber,natbib=true,sortcites=true,maxcitenames=2,maxbibnames=99]{biblatex}
\allowdisplaybreaks
\hypersetup{
  colorlinks=true,
  linkcolor=blue,
  citecolor=blue,
  urlcolor=blue,
  pdftitle={Doubly robust estimation of while-alive estimands in individually-randomized and cluster-randomized trials},
  pdfauthor={Xi Fang, Da Zhao, and Fan Li}
}

\theoremstyle{plain}
\newtheorem{theorem}{Theorem}
\newtheorem{lemma}{Lemma}
\newtheorem{proposition}{Proposition}
\newtheorem{corollary}{Corollary}

\theoremstyle{definition}

\newtheorem{assumption}{Assumption}

\theoremstyle{remark}
\newtheorem{remark}{Remark}

\newcommand{\bZ}{\mathbf{Z}}
\newcommand{\bN}{\mathbf{N}}
\newcommand{\bw}{\mathbf{w}}
\newcommand{\bh}{\mathbf{h}}
\newcommand{\bO}{\mathbf{O}}

\newcommand{\bbeta}{\boldsymbol{\beta}}
\newcommand{\balpha}{\boldsymbol{\alpha}}
\newcommand{\bgamma}{\boldsymbol{\gamma}}

\newcommand{\Ebb}{\operatorname{E}}
\newcommand{\Pbb}{\operatorname{P}}
\providecommand{\Var}{}
\renewcommand{\Var}{\operatorname{Var}}

\newcommand{\ind}{\mathbb{I}}
\newcommand{\dd}{\,d}
\providecommand{\tr}{}
\renewcommand{\tr}{^{\mathsf T}}
\newcommand{\Pn}{\mathbb{P}_n}
\newcommand{\PM}{\mathbb{P}_M}
\newcommand{\Gn}{\mathbb{G}_n}

\newcommand{\R}{\mathrm{R}}
\newcommand{\cF}{\mathcal{F}}

\newcommand{\cO}{\mathcal{O}}
\newcommand{\cC}{\mathcal{C}}

\newcommand{\cV}{\mathcal{V}}

\title{Doubly robust estimation of while-alive estimands in individually-randomized and cluster-randomized trials}

\author{
Xi Fang\\
Data Science Institute\\
Medical College of Wisconsin\\
Milwaukee, WI, USA\\
\texttt{xfang@mcw.edu}
\And
Da Zhao\\
Department of Biostatistics\\
Yale School of Public Health\\
New Haven, CT, USA\\
\texttt{dzhao@yale.edu}
\And
Fan Li\\
Department of Biostatistics\\
Yale School of Public Health\\
New Haven, CT, USA\\
\texttt{fan.f.li@yale.edu}\\
Corresponding author
}

\begin{document}
\maketitle
\onehalfspacing

\begin{abstract}
Randomized trials in chronic disease settings often measure treatment benefit through recurrent nonfatal events that are truncated by death, where conventional summaries either discard recurrences, conflate the treatment effect with survival, or treat death as censoring and forfeit a causal interpretation. While-alive estimands measure event burden per unit time alive, but doubly robust estimation for the exposure-weighted while-alive rate remains undeveloped, particularly in cluster-randomized trials (CRTs). We develop a doubly robust estimator based on a local Nelson--Aalen representation, with augmented estimating equations targeting the marginal hazard of the terminal event and the weighted recurrent event rate among those alive; the estimator accommodates multiple event types through prespecified clinical weights and remains consistent if either the censoring model or the outcome working models are correctly specified. For CRTs, we define a new pair of individual-average and cluster-average estimands under informative cluster size, with inference based on cluster-level influence functions. We establish component-wise double robustness and asymptotic normality, corroborate the theory in simulations, and illustrate the methods with reanalyses of data from two completed randomized trials.
\end{abstract}

\keywords{Causal inference \and cluster-randomized trial \and doubly robust estimation \and informative censoring \and Nelson--Aalen estimator \and recurrent events \and terminal event}

\noindent\textbf{MSC 2020:} Primary 62N02; Secondary 62G05, 62P10.

\section{Introduction} \label{sec:introduction}

A growing number of randomized trials measure treatment benefit through a stream of recurrent nonfatal events, such as repeated hospitalizations in heart failure, repeated exacerbations in pulmonary disease, repeated relapses in neurology, or repeated falls and injuries in health system studies. In each of  these settings, the recurrent events are subject to a terminal event, typically death, which usually complicates the definition and interpretation of a summary treatment effect. Death is not an ordinary censoring event. It disproportionately affects the patients who are most prone to events, so that the sickest patients contribute both the largest event counts and the shortest follow-up. The following two traditional analyses therefore each have their own limitations. Reducing the outcome to the time of the first event ignores the recurrences that are meaningful for patient disease trajectory, whereas summarizing it through the cumulative number of events confounds the treatment effect with survival, since an arm that keeps patients alive longer is penalized with more events \citep{ema2020qualification,cook2002analysis,mao2022whilealive}. On the other hand, a summary that measures the burden of events relative to the time a patient is actually alive is of increasing interest. 

Statistical methods for recurrent event analysis have been well developed under individually-randomized trials (IRTs). For example, built upon counting process arguments \citep{nelson1972hazard,aalen1978nonparametric,andersen1982cox,gill1990product,fleming1991counting,andersen1993statistical}, early developments include the conditional regression models of \citet{prentice1981regression}, the robust marginal rate and mean function methods of \citet{lawless1995simple}, and the semiparametric regression models of \citet{lin2000semiparametric}. When the recurrent events are subject to a terminal event such as death, two major approaches can be considered. The first targets the marginal rate or mean function of recurrent events, acknowledging that no recurrent events can occur after death \citep{cook1997marginal,ghosh2000nonparametric,ghosh2002marginal,cook2009robust}. The second approach jointly models the recurrent event process and the terminal event process, typically through a shared frailty inducing dependence between processes \citep{wang2001analyzing,huang2004joint,liu2004shared,lin2017multitype}. Relatedly, \citet{mao2016semiparametric} proposed semiparametric regression for a weighted composite endpoint of recurrent and terminal events, where prespecified weights reflect the relative clinical importance of each event type. More recently, recurrent events with a terminal event have also been studied from a causal inference perspective. \citet{genetti2025efficient} developed efficient estimators of the marginal mean of recurrent events in randomized trials. \citet{huang2026model} proposed model-assisted causal inference for the treatment effect on recurrent events in the presence of terminal events. \citet{mork2025structural} introduced a structural nested rate model for time varying exposures. In parallel, \citet{lyu2023principal} and \citet{ohnishi2025principal} adopted the principal stratification framework \citep{frangakis2002principal} and restricted inference to the subpopulation of individuals who would always survive under either treatment assignment. However, none of these methods directly addresses the event burden accrued per unit of time alive. The marginal and joint modeling approaches characterize how often events occur, whether through a cumulative mean function, a rate function, or a regression coefficient, and are therefore inevitably influenced by the distribution of the terminal event. The principal stratification approach instead redefines the target population free of the terminal event, and hence the resulting inference no longer applies to the original, full trial population.

The while-alive estimand directly addresses this limitation. Under the estimands framework of the ICH E9(R1) addendum \citep{iche9r12019}, death is treated as an intercurrent event, and the while-alive strategy defines the outcome only over the period during which a participant remains alive \citep{ema2020qualification,schmidli2023estimands}. By normalizing the event burden by the survival time, the while-alive estimand avoids rewarding a treatment merely for shortening survival. Two versions of the while-alive estimand are available, and they differ in how each participant contributes to the population summary. The patient-weighted estimand averages each participant's own ratio of event count to time alive and therefore assigns equal weight to every participant. The exposure-weighted estimand takes the ratio of the expected event burden to the expected survival time and therefore weights each participant in proportion to the length of survival \citep{mao2022whilealive,wei2023properties}. Since the formalization of the estimands framework, several estimation and inference techniques have been developed. \citet{mao2022whilealive} first studied a general class of while-alive estimands under a nonparametric framework. \citet{wei2023properties} examined the properties of two specific while-alive estimands and their estimators. \citet{fang2025whilealive} developed regression methods for while-alive summaries of composite survival endpoints. \citet{ragni2026patient} derived the efficient influence function and proposed an efficient nonparametric estimator for the patient-weighted estimand. Despite these advances, robust methods for estimating the exposure-weighted while-alive estimand remain relatively underdeveloped. This estimand is particularly relevant when the scientific objective is to quantify the population-level event burden per unit of aggregate survival time, as may be appropriate for assessing recurrent morbidity, healthcare utilization, or multiple event types assigned prespecified weights according to their clinical importance. We therefore focus on developing robust estimators for this estimand. The work most closely related to ours is \citet{baer2025causal}, who embedded recurrent events and death within a multiply robust causal framework by targeting a landmark-indexed vector of expected event counts and survival probabilities. Their estimators, however, are constructed directly on the survival function scale, and the resulting survival estimates are not guaranteed to be monotone across time. We instead formulate the problem through local hazard increments and develop a Nelson-Aalen-type construction. By building the survival function from an estimated cumulative hazard, this approach preserves its required monotonicity while accommodating multiple, prespecified clinically weighted event types. Table \ref{tab:lit_compare} transparently situates this contribution relative to the existing literature.

A separate complication arises when randomization occurs at the level of clusters rather than individuals, such as when hospitals, clinics, primary care practices, or communities are assigned to intervention groups. In cluster-randomized trials (CRTs), treatment effects can be defined under different weighting schemes. The cluster-average treatment effect assigns equal weight to each cluster regardless of its size, whereas the individual-average treatment effect assigns equal weight to each participant across all clusters. These two estimands coincide when cluster sizes are constant or only randomly varying, but they diverge in the presence of informative cluster size, where cluster size is marginally associated with the cluster specific treatment effect. The choice between the two estimands therefore carries direct implications for interpretation and inference \citep{kahan2023estimands,kahan2023ics,kahan2024demystifying}. This recognition has motivated a growing body of causal inference methods for CRTs with noncensored outcomes \citep{balzer2019hierarchical,balzer2023twostage,wang2024modelrobust,li2025standardization} and, more recently, with right-censored survival outcomes \citep{fang2026estimands,fang2026rmtif}. In addition, \citet{grant2025recurrentcrt} recently investigated the analysis of recurrent events in CRTs. However, no existing method for CRTs targets a while-alive summary of the recurrent event burden, nor provides doubly robust estimation for such a summary under censoring that depends on baseline covariates. As a result, the analysis of a CRT with recurrent and terminal events must confront two challenges simultaneously. A while-alive estimand is required to appropriately account for death, and the distinction between cluster-average and individual-average estimands is required to better reflect the scientific question in mind.

In this article, we resolve both challenges within a single framework. We focus on an exposure-weighted while alive recurrent event rate, defined as the ratio of the expected weighted recurrent event count to the restricted mean survival time, where the weights are prespecified to reflect the clinical importance of each event type. In the IRT setting, we develop an estimator based on a local representation of the Nelson-Aalen type, in which one augmented estimating equation targets the marginal hazard of the terminal event and a second targets the marginal weighted event rate among participants who remain alive. The resulting estimator is doubly robust in the sense that it remains consistent when either the censoring model or the outcome models are correctly specified, but not necessarily both. We then extend the same construction to CRTs by formalizing a pair of while-alive cluster-average and individual-average estimands. 
Specifically, the remainder of the paper is organized as follows. Section \ref{sec:irt_estimand_estimation} introduces the proposed exposure-weighted while-alive estimands for IRTs, establishes their identification, and develops the proposed doubly robust local Nelson–Aalen estimator, and establishes the large-sample properties of the proposed estimator. Section \ref{sec:crt} extends the proposed framework to CRTs, develops the corresponding doubly robust estimators, and establishes the asymptotic properties of the proposed estimators under cluster randomization. Section \ref{sec:simulation} evaluates the finite-sample performance of the proposed methods through simulation studies. Section \ref{sec:data_illustration} illustrates the proposed methodology through two real data applications, and Section \ref{sec:discussion} concludes with a discussion.

\begin{table}[tbp]
\caption{Related recurrent-event methods.}
\label{tab:lit_compare}
\centering
\renewcommand{\arraystretch}{1.16}
\setlength{\tabcolsep}{3.2pt}
\begin{adjustbox}{width=\textwidth,center}
\begin{tabular}{
p{2.65cm}
p{5.75cm}
>{\centering\arraybackslash}p{1.15cm}
>{\centering\arraybackslash}p{1.35cm}
>{\centering\arraybackslash}p{1.10cm}
>{\centering\arraybackslash}p{1.35cm}
>{\centering\arraybackslash}p{1.20cm}
}
\toprule
Method
& Estimands
& \shortstack{WA\\rate}
& \shortstack{Cov.-dep.\\cens.}
& DR/MR
& \shortstack{Event\\weights}
& CRT \\
\midrule

\citet{schmidli2023estimands}
& Recurrent event estimand strategies with death, including while-alive strategies
& \(\triangle\)
& \(\text{--}\)
& \(\text{--}\)
& \(\triangle\)
& \(\times\) \\[2pt]

\citet{wei2023properties}
& Exposure-weighted and composite exposure-weighted while-alive estimands
& \(\surd\)
& \(\times\)
& \(\times\)
& \(\triangle\)
& \(\times\) \\[2pt]

\citet{mao2022whilealive}
& Generalized exposure-weighted while-alive loss rate
& \(\surd\)
& \(\times\)
& \(\times\)
& \(\surd\)
& \(\times\) \\[2pt]

\citet{ragni2026patient}
& Patient-weighted while-alive estimand
& \(\surd\)
& \(\triangle\)
& \(\triangle\)
& \(\times\)
& \(\times\) \\[2pt]

\citet{fang2025whilealive}
& Generalized exposure-weighted while-alive loss rate for composite survival endpoints
& \(\surd\)
& \(\surd\)
& \(\times\)
& \(\surd\)
& \(\triangle\) \\[2pt]

\citet{baer2025causal}
& Landmarked recurrent event means and survival probabilities
& \(\triangle\)
& \(\surd\)
& \(\surd\)
& \(\times\)
& \(\times\) \\[2pt]

\midrule

\textbf{Proposed method}
& Exposure-weighted while-alive recurrent event rate for clinically weighted recurrent events
& \(\surd\)
& \(\surd\)
& \(\surd\)
& \(\surd\)
& \(\surd\) \\
\bottomrule
\end{tabular}
\end{adjustbox}
\par\vspace{3pt}
\begin{minipage}{\textwidth}
\(\surd\), directly addressed; \(\triangle\), partially addressed, restricted, or indirectly obtainable;
\(\times\), not a primary feature; \(\text{--}\), not applicable.
WA, while-alive; Cov.-dep. cens., covariate-dependent censoring; DR/MR, doubly or multiply robust;
Event weights, prespecified clinically motivated weights for recurrent-event types or composite-event components;
CRT, cluster-randomized trial support, especially explicit while-alive cluster-average estimand and while-alive individual-average estimand.
\end{minipage}
\end{table}

\section{Exposure-weighted while-alive estimand and estimator in individually-randomized trials}
\label{sec:irt_estimand_estimation}

\subsection{Notation and estimands}\label{sec:irt_estimand}
We first consider an IRT with participants indexed by \(i=1,\ldots,n\). Let \(A_i\in\{0,1\}\) denote treatment assignment, where \(A_i=1\) denotes the intervention arm and \(A_i=0\) denotes the control arm, and let \(\bZ_i\in\R^p\) denote baseline covariates measured before randomization. Each participant may experience $K$ distinct types of recurrent events as well as a terminal event such as death. Under the potential outcomes framework \citep{rubin1974estimating,hernan2020causal}, let $D_i^a$ denote the potential terminal event time under treatment arm $a\in\{0,1\}$, and for event type $k=1,\ldots,K$, let $T_{ik1}^a<T_{ik2}^a<\cdots$ denote the potential occurrence times of type $k$ recurrent events. Following the counting process notation for event history \citep{andersen1993statistical}, define the potential terminal event counting process $N_i^{D,a}(t)=\ind(D_i^a\le t)$ and the potential at risk indicator $Y_i^a(t)=\ind(D_i^a\ge t)$. Because no recurrent events can occur after the terminal event, the potential type $k$ recurrent event counting process is $N_{ik}^a(t)=\sum_{j\ge 1}\ind(T_{ikj}^a\le t\wedge D_i^a)$, so that $N_{ik}^a(t)=N_{ik}^a(t\wedge D_i^a)$ \citep{cook2002analysis,cook2007statistical,ghosh2000nonparametric,schmidli2023estimands}. We collect the $K$ processes into the vector $\bN_i^a(t)=\{N_{i1}^a(t),\ldots,N_{iK}^a(t)\}^{\mathsf T}$, pre-specify a nonnegative weight vector $\bw=(w_1,\ldots,w_K)^{\mathsf T}$ with $w_k\ge 0$ encoding the relative clinical importance of the different event types \citep{mao2016semiparametric,mao2022whilealive}, and define the weighted recurrent event process $N_i^{a,\bw}(t)=\bw^{\mathsf T}\bN_i^a(t)=\sum_{k=1}^K w_k N_{ik}^a(t)$.

For a fixed time horizon \(\tau>0\), the two primitive arm-specific causal quantities are the marginal weighted recurrent event burden
\begin{equation}
  \mu_a(\tau;\bw)=\Ebb\{N_i^{a,\bw}(\tau)\},
  \label{eq:irt_mu}
\end{equation}
and the restricted mean survival time \(\nu_a(\tau)=\Ebb(D_i^a\wedge \tau)=\int_0^\tau S_a(t)\dd t\) with \(S_a(t)=\Pbb(D_i^a>t)\) \citep{uno2014moving}. 
The exposure-weighted while-alive recurrent-event rate is
\begin{equation}
  \psi_a(\tau;\bw)=\frac{\mu_a(\tau;\bw)}{\nu_a(\tau)},\label{eq:irt_psi}
\end{equation}
and the treatment contrast is \(\Delta(\tau;\bw)=\psi_1(\tau;\bw)-\psi_0(\tau;\bw)\).  The estimand $\psi_a(\tau;\bw)$ is the expected clinically weighted number of recurrent events accumulated by time $\tau$ per unit of restricted mean survival time under arm $a$. Because both the numerator and the denominator are population averages, $\psi_a(\tau;\bw)$ is an exposure-weighted while-alive estimand, and it differs from a patient-weighted while-alive estimand, which averages participant specific ratios of the form $N_i^{a,\bw}(\tau)/(D_i^a\wedge\tau)$ and thereby describes the event experience of the average participant \citep{schmidli2023estimands,wei2023properties,mao2022whilealive,ragni2026patient}. Because the occurrence of recurrent events is intrinsically tied to the time spent alive, the exposure-weighted version provides a natural summary of disease burden over the period during which events can occur, and it is the focus of this article. In addition, $\mu_a(\tau;\bw)$ and $\nu_a(\tau)$ remain separately interpretable marginal causal quantities, so reporting them alongside $\psi_a(\tau;\bw)$ gives a complete description of the treatment effect on the event and survival components.

The estimand $\psi_a(\tau;\bw)$ consists of two components, the weighted recurrent event burden $\mu_a(\tau;\bw)$ and the restricted mean survival time $\nu_a(\tau)$. We now show that both components, and hence the estimand itself, are determined by two marginal local quantities. Define the marginal terminal event hazard increment and the marginal weighted recurrent event rate increment among participants who remain alive as
\begin{align}
  \dd\Lambda_a^D(t)&=\frac{\Ebb\{\dd N_i^{D,a}(t)\}}{\Ebb\{Y_i^a(t)\}},\label{eq:irt_LamD}\\
  \dd\Lambda_a^{R,\bw}(t)&=\frac{\Ebb\{\dd N_i^{a,\bw}(t)\}}{\Ebb\{Y_i^a(t)\}},\label{eq:irt_LamR}
\end{align}
where $\dd\Lambda_a^D(t)$ is the instantaneous risk of the terminal event at time $t$ among participants alive at $t$, and $\dd\Lambda_a^{R,\bw}(t)$ is the expected weighted number of recurrent events at time $t$ per participant alive at $t$. Because recurrent events accrue only while participants remain alive, the numerator $\mu_a(\tau;\bw)$ accumulates the rate among survivors weighted by the probability of remaining alive. The following proposition formalizes these relations, with proof given in Appendix \ref{supp:sec:irt_dr_detailed}.
 
\begin{proposition}[Full data Nelson--Aalen representation]
\label{prop:full_data_na}
Suppose $\Ebb\{N_i^{a,\bw}(\tau)\}<\infty$, $\Ebb(D_i^a\wedge\tau)<\infty$, and $\inf_{0\le t\le\tau}R_a(t-)>0$, where $R_a(t)=\Ebb\{Y_i^a(t)\}$. Then
\[
  S_a(t)=\prod_{0<u\le t}\{1-\dd\Lambda_a^D(u)\},
  \qquad
  \nu_a(\tau)=\int_0^\tau S_a(t)\dd t,
  \qquad
  \mu_a(\tau;\bw)=\int_0^\tau S_a(t-)\dd\Lambda_a^{R,\bw}(t).
\]
\end{proposition}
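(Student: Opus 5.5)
The plan is to verify the three identities one at a time, working directly from the definitions of $\dd\Lambda_a^D$ and $\dd\Lambda_a^{R,\bw}$ as ratios of marginal expectations. The positivity condition $\inf_{0\le t\le\tau}R_a(t-)>0$ ensures both ratios are well defined on $[0,\tau]$.

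\textbf{Survival product-integral.} Since $Y_i^a(t)=\ind(D_i^a\ge t)$, we have $R_a(t)=\Pbb(D_i^a\ge t)=S_a(t-)$ and $\Ebb\{\dd N_i^{D,a}(t)\}=-\dd S_a(t)$. Hence $\dd\Lambda_a^D(t)=-\dd S_a(t)/S_a(t-)$, so $\Lambda_a^D$ is exactly the cumulative hazard of the distribution of $D_i^a$ on $[0,\tau]$. This includes the case of atoms, where $\Delta\Lambda_a^D(t)=\Pbb(D_i^a=t)/\Pbb(D_i^a\ge t)$. I will then invoke the standard product-integral result of \citet{gill1990product}. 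The function $S_a$ is the unique c\`adl\`ag solution of the Volterra equation
\[
S_a(t)=1-\int_0^t S_a(u-)\dd\Lambda_a^D(u),
\]
and that solution is $\prod_{0<u\le t}\{1-\dd\Lambda_a^D(u)\}$. The integral equation itself is just the rearranged definition of $\dd\Lambda_a^D$ integrated from $0$ to $t$, using $S_a(0)=1$ (no deaths at time $0$). Positivity of $S_a(t-)$ on $[0,\tau]$ keeps $\Lambda_a^D(\tau)$ finite, so the product-integral is well defined.

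\textbf{Restricted mean survival time.} The identity $\nu_a(\tau)=\int_0^\tau S_a(t)\dd t$ follows from Fubini/Tonelli:
\[
D_i^a\wedge\tau=\int_0^\tau \ind(D_i^a>t)\dd t,
\]
and taking expectations gives the result. The finiteness assumption is automatically met here.

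\textbf{Recurrent event burden.} The defining relation gives
\[
\Ebb\{\dd N_i^{a,\bw}(t)\}=R_a(t)\dd\Lambda_a^{R,\bw}(t)=S_a(t-)\dd\Lambda_a^{R,\bw}(t).
\]
Integrating over $(0,\tau]$ and using $N_i^{a,\bw}(0)=0$, we need
\[
\Ebb\Big\{\int_0^\tau \dd N_i^{a,\bw}(t)\Big\}=\int_0^\tau \Ebb\{\dd N_i^{a,\bw}(t)\}.
\]
This interchange of expectation and Stieltjes integration is the only delicate step. To justify it, note that $N_i^{a,\bw}$ is nondecreasing (since $w_k\ge0$), so $t\mapsto\Ebb\{N_i^{a,\bw}(t)\}$ is a finite nondecreasing function on $[0,\tau]$ by the moment assumption. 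The increment $\Ebb\{\dd N_i^{a,\bw}(t)\}$ should be read as the Lebesgue--Stieltjes measure of this mean function, and then the interchange is simply monotone convergence. This also makes precise what the ratio defining $\dd\Lambda_a^{R,\bw}$ means: the mean measure is absolutely continuous with respect to itself, and it is divided by the bounded-below function $R_a$. That gives $\mu_a(\tau;\bw)=\Ebb\{N_i^{a,\bw}(\tau)\}=\int_0^\tau S_a(t-)\dd\Lambda_a^{R,\bw}(t)$.

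The fact that recurrent events stop at death, $N_{ik}^a(t)=N_{ik}^a(t\wedge D_i^a)$, is not needed algebraically. It matters only for interpretation: $\dd\Lambda_a^{R,\bw}$ is a rate among survivors, because $\dd N_i^{a,\bw}(t)=Y_i^a(t)\dd N_i^{a,\bw}(t)$.

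I expect the main obstacle to be the first identity when $D_i^a$ has atoms. That case requires the product-integral (not $\exp(-\Lambda)$) and the uniqueness theorem for the Volterra equation. I will cite \citet{gill1990product} for both rather than reprove them.
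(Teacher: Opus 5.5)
Your proposal is correct and follows essentially the same route as the paper's proof. Like the paper, you obtain the survival identity from the Volterra equation $S_a(t)=1-\int_{(0,t]}S_a(u-)\dd\Lambda_a^D(u)$ and its product-integral solution, the restricted mean via Tonelli, and the burden by cancelling $R_a=S_a(\cdot-)$ and then interchanging expectation and integration; your bookkeeping $R_a(t)=\Pbb(D_i^a\ge t)=S_a(t-)$, with the explicit requirement $\Pbb(D_i^a>0)=1$, is slightly cleaner than the appendix argument, which equates the left-continuous $R_a$ directly with the right-continuous product integral.
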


The first identity recovers the survival function from the terminal event hazard through the product integral, which reduces to $S_a(t)=\exp\{-\Lambda_a^D(t)\}$ when $\Lambda_a^D$ is continuous \citep{gill1990product,andersen1993statistical}. The third identity builds the cumulative burden from local rate increments over the surviving population, in the same way that the Nelson--Aalen estimator builds a cumulative hazard from local increments over the at risk population \citep{nelson1972hazard,aalen1978nonparametric}, and we accordingly refer to it as a local Nelson--Aalen representation. Combining the three identities, the while-alive estimand admits the representation
\[\psi_a(\tau;\bw)=\frac{\int_0^\tau S_a(t-)\dd\Lambda_a^{R,\bw}(t)}{\int_0^\tau S_a(t)\dd t},\]
in which every element is a functional of $\Lambda_a^D$ and $\Lambda_a^{R,\bw}$. Estimation of $\psi_a(\tau;\bw)$ therefore reduces to estimation of these two local quantities. 

\subsection{Doubly robust estimation}\label{sec:irt_estimation}
In practice, the terminal event and recurrent event processes are subject to right censoring due to administrative end of follow up or loss to follow up. Let $C_i^a$ denote the potential censoring time for participant $i$ under treatment arm $a\in\{0,1\}$. We assume the following assumptions for identification.

\begin{assumption}[Consistency]
\label{asm:irt_consistency}
There is no interference between participants and no hidden versions of treatment. If participant \(i\) is assigned \(A_i=a\), then \(D_i=D_i^a\), \(C_i=C_i^a\), and \(\bN_i(t)=\bN_i^a(t)\) for all \(t\in[0,\tau]\).
\end{assumption}

\begin{assumption}[Randomization]
\label{asm:irt_randomization}
The treatment assignment probability is known by design and bounded away from zero. Write \(\pi_a=\Pbb(A_i=a)\) for \(a \in \{0,1\}\). Then \(0<\pi_a<1\), and \(  A_i\perp\!\!\!\perp\{D_i^0,D_i^1,C_i^0,C_i^1,\bN_i^0(\cdot),\bN_i^1(\cdot)\}\mid \bZ_i\). For simple equal randomization, \(\pi_0=\pi_1=1/2\).
\end{assumption}

Under Assumptions \ref{asm:irt_consistency} and \ref{asm:irt_randomization}, only the potential outcomes under the assigned arm are observed, subject to right censoring, with $D_i=D_i^{A_i}$, $C_i=C_i^{A_i}$, and $\bN_i(t)=\bN_i^{A_i}(t)$, and we write $T_{ikj}=T_{ikj}^{A_i}$ for the corresponding event times. The observed processes are restricted by censoring through the observed follow up time. Let $X_i=\min(D_i,C_i,\tau)$ denote the observed follow up time, with terminal event indicator $\delta_i=\ind(D_i\le C_i\wedge\tau)$ and censoring indicator $\delta_i^C=\ind(C_i<D_i\wedge\tau)$, where $\tau$ represents the target follow up window. For the terminal event, define the observed counting process $N_i^D(t)=\ind(X_i\le t,\delta_i=1)$, the observed censoring counting process $N_i^C(t)=\ind(X_i\le t,\delta_i^C=1)$, and the observed at risk indicator $Y_i(t)=\ind(X_i\ge t)$. For the recurrent events, the observed type $k$ counting process is $N_{ik}(t)=\sum_{j\ge 1}\ind(T_{ikj}\le t\wedge X_i)$, which equals $N_{ik}^{A_i}(t\wedge C_i)$ on $[0,\tau]$ because the potential process is stopped at the terminal event. Let $\bN_i(t)=\{N_{i1}(t),\ldots,N_{iK}(t)\}^{\mathsf T}$ and $N_i^{\bw}(t)=\bw^{\mathsf T}\bN_i(t)$, so that $\dd N_i^{\bw}(t)=\ind(C_i\ge t)\dd N_i^{A_i,\bw}(t)$. When the terminal event and censoring are tied, we adopt the convention that the terminal event precedes censoring, and the censoring risk indicator is $Y_i^\dagger(t)=\ind(X_i>t,\delta_i=1)+\ind(X_i\ge t,\delta_i\ne 1)$, which removes a participant from the censoring risk set at an observed terminal event. To ensure identifiability of the two local quantities under censoring, we impose the following assumption.

\begin{assumption}[Covariate dependent censoring]
\label{asm:irt_censoring}
For each treatment arm \(a\in\{0,1\}\), \(  C_i^a\perp\!\!\!\perp\{D_i^a,\bN_i^a(\cdot)\}\mid \bZ_i\), and the conditional censoring survival \(K_i^a(t)=\Pbb(C_i^a\ge t\mid\bZ_i)\) is uniformly bounded away from zero on \([0,\tau]\).
\end{assumption}

Assumption \ref{asm:irt_censoring} allows the censoring distribution to depend on the treatment arm and the baseline covariates, but requires censoring to be conditionally independent of the terminal and recurrent event processes given the covariates. The observed data are $\bO_i=(A_i,\bZ_i,X_i,\delta_i,\delta_i^C,\bN_i)$ for $i=1,\ldots,n$, assumed independent and identically distributed across participants, and we write $\Pn f=n^{-1}\sum_{i=1}^n f(\bO_i)$ for the empirical measure.

We now construct a doubly robust estimator of $\psi_a(\tau;\bw)$ by estimating the two local quantities in \eqref{eq:irt_LamD} and \eqref{eq:irt_LamR}. For the terminal event component, the augmented Nelson--Aalen estimating equation for the marginal terminal event hazard can be expressed as
\begin{equation}
\Pn\!\left[
\begin{aligned}
&\xi_i^a\{K_i^a(t-)\}^{-1}
\{\dd N_i^D(t)-Y_i(t)\dd\Lambda_a^D(t)\} \\
&\quad+\{1-\xi_i^a U_i^a(t)\}\{\dd q_i^D(t;a)-r_i^a(t)\dd\Lambda_a^D(t)\}
\end{aligned}
\right]=0.
\label{eq:irt_DR_eq_D}
\end{equation}
Each quantity in \eqref{eq:irt_DR_eq_D} is defined as follows. $\xi_i^a=\ind(A_i=a)/\pi_a$ is the inverse probability of treatment weight. Under Assumption \ref{asm:irt_randomization}, $\Ebb(\xi_i^a\mid\bZ_i)=1$, and the propensity score is known by design. $K_i^a(t)$ is the conditional censoring survival function introduced in Assumption \ref{asm:irt_censoring}, with conditional cumulative hazard $\Lambda_i^C(t;a)=-\log K_i^a(t)$. $U_i^a(t)$ is the censoring martingale augmentation process \(U_i^a(t)=1-\int_{(0,t)}\frac{\dd M_i^C(u;a)}{K_i^a(u-)H_i^a(u-)}\), where $\dd M_i^C(t;a)=\dd N_i^C(t)-Y_i^\dagger(t)\dd\Lambda_i^C(t;a)$ is the censoring martingale increment and $H_i^a(t)=\Pbb(D_i^a\ge t\mid\bZ_i)$ is the conditional terminal event survival function, with conditional cumulative hazard $\Lambda_i^D(t;a)$. This process mirrors the inverse probability of censoring weighting structure in the semiparametric theory for censored data \citep{robins1994estimation,robins2000correcting,tsiatis2006semiparametric}. Denote $r_i^a(t)$ and $\dd q_i^D(t;a)$ as the conditional at risk probability and the conditional terminal event increment, where \(r_i^a(t)=H_i^a(t-)=\Ebb\{Y_i^a(t)\mid\bZ_i\}\), and  \(\dd q_i^D(t;a)=H_i^a(t-)\dd\Lambda_i^D(t;a)=\Ebb\{\dd N_i^{D,a}(t)\mid\bZ_i\}\). Thus, in \eqref{eq:irt_DR_eq_D}, the first term is an inverse probability of censoring weighted version of the Nelson--Aalen estimating function, and the second term augments it with conditional full data increments to recover information from censored participants and to confer robustness. In particular, under Assumptions \ref{asm:irt_consistency} through \ref{asm:irt_censoring}, the estimating function in \eqref{eq:irt_DR_eq_D} has mean zero at the true $\dd\Lambda_a^D(t)$ when either the censoring related quantities or the terminal event related quantities are evaluated at their true values, which is the source of the double robustness established below.

In practice, the nuisance functions $K_i^a$, $H_i^a$, and $\Lambda_i^D(\cdot;a)$ are unknown. We replace them with estimates $\widehat K_i^a$, $\widehat H_i^a$, and $\widehat\Lambda_i^D(\cdot;a)$ obtained from working regression models described at the end of this section, and write $\widehat U_i^a$, $\widehat r_i^a$, and $\widehat q_i^D$ for the resulting plug in versions of $U_i^a$, $r_i^a$, and $q_i^D$. Substituting these estimates into \eqref{eq:irt_DR_eq_D} and solving for $\dd\Lambda_a^D(t)$ gives
\begin{equation}
\dd\widehat\Lambda_a^{D,\mathrm{DR}}(t)=\frac{\Pn\!\left[\xi_i^a\{\widehat K_i^a(t-)\}^{-1}\dd N_i^D(t)+\{1-\xi_i^a\widehat U_i^a(t)\}\dd\widehat q_i^D(t;a)\right]}{\Pn\!\left[\xi_i^a\{\widehat K_i^a(t-)\}^{-1}Y_i(t)+\{1-\xi_i^a\widehat U_i^a(t)\}\widehat r_i^a(t)\right]}.
\label{eq:irt_LamD_DR}
\end{equation}
Following the first two identities of Proposition \ref{prop:full_data_na}, the marginal survival function and restricted mean survival time estimators are
\begin{align}
  \widehat S_a^{\mathrm{DR}}(t)&=\prod_{0<u\le t}\{1-\dd\widehat\Lambda_a^{D,\mathrm{DR}}(u)\},\notag\\
  \widehat\nu_a^{\mathrm{DR}}(\tau)&=\int_0^\tau \widehat S_a^{\mathrm{DR}}(t-)\dd t.
  \label{eq:irt_S_nu_DR}
\end{align}

For the recurrent event component, we propose the analogous augmented Nelson--Aalen estimating equation for the marginal weighted recurrent event rate,
\begin{equation}
\Pn\!\left[
\begin{aligned}
&\xi_i^a\{K_i^a(t-)\}^{-1}\{\dd N_i^{\bw}(t)-Y_i(t)\dd\Lambda_a^{R,\bw}(t)\} \\
&\quad+\{1-\xi_i^a U_i^a(t)\}\{\dd q_i^{R,\bw}(t;a)-r_i^a(t)\dd\Lambda_a^{R,\bw}(t)\}
\end{aligned}
\right]=0,
\label{eq:irt_DR_eq_R}
\end{equation}
in which the only new quantity is the conditional weighted recurrent event increment \(\dd q_i^{R,\bw}(t;a)=H_i^a(t-)\dd\Lambda_i^{R,\bw}(t;a)=\Ebb\{\dd N_i^{a,\bw}(t)\mid\bZ_i\}\), where $\dd\Lambda_i^{R,\bw}(t;a)$ denotes the conditional weighted recurrent event rate among participants alive at $t$. Replacing the nuisance functions by their estimates, with $\widehat q_i^{R,\bw}$ denoting the plug in version of $q_i^{R,\bw}$, and solving \eqref{eq:irt_DR_eq_R} gives
\begin{equation}
\dd\widehat\Lambda_a^{R,\mathrm{DR},\bw}(t)=\frac{\Pn\!\left[\xi_i^a\{\widehat K_i^a(t-)\}^{-1}\dd N_i^{\bw}(t)+\{1-\xi_i^a\widehat U_i^a(t)\}\dd\widehat q_i^{R,\bw}(t;a)\right]}{\Pn\!\left[\xi_i^a\{\widehat K_i^a(t-)\}^{-1}Y_i(t)+\{1-\xi_i^a\widehat U_i^a(t)\}\widehat r_i^a(t)\right]}.
\label{eq:irt_LamR_DR}
\end{equation}
The denominators of \eqref{eq:irt_LamD_DR} and \eqref{eq:irt_LamR_DR} are identical, so the terminal event hazard and the weighted recurrent event rate are estimated on the same marginal alive risk set. Following the third identity of Proposition \ref{prop:full_data_na}, the doubly robust estimator of the recurrent event burden is
\[
  \widehat\mu_a^{\mathrm{DR}}(\tau;\bw)=\int_0^\tau\widehat S_a^{\mathrm{DR}}(t-)\dd\widehat\Lambda_a^{R,\mathrm{DR},\bw}(t).
\]
The arm specific while-alive rate and the treatment contrast at each time horizon $\tau$ are estimated by
\begin{align}
   \widehat\psi_a^{\mathrm{DR}}(\tau;\bw)&=\frac{\widehat\mu_a^{\mathrm{DR}}(\tau;\bw)}{\widehat\nu_a^{\mathrm{DR}}(\tau)},\notag\\
   \widehat\Delta^{\mathrm{DR}}(\tau;\bw)&=\widehat\psi_1^{\mathrm{DR}}(\tau;\bw)-\widehat\psi_0^{\mathrm{DR}}(\tau;\bw).
  \label{eq:irt_psi_delta_DR}
\end{align}

We now formalize the double robustness of the proposed estimator. Let $K_i^{a,\star}(t)$, $H_i^{a,\star}(t)$, $\dd q_i^{D,\star}(t;a)$, and $\dd q_i^{R,\bw,\star}(t;a)$ denote the uniform probability limits of $\widehat K_i^a(t)$, $\widehat H_i^a(t)$, $\dd\widehat q_i^D(t;a)$, and $\dd\widehat q_i^{R,\bw}(t;a)$, respectively. These limits coincide with the corresponding true conditional quantities under correct specification, and are otherwise pseudo true limits under misspecification. The censoring model is correctly specified for arm $a$, denoted by $\mathcal{C}_a$, if $K_i^{a,\star}(t)=K_i^a(t)$ uniformly on $[0,\tau]$. The terminal event outcome model is correctly specified, denoted by $\mathcal{O}_a^D$, if $H_i^{a,\star}(t)=H_i^a(t)$ and $\dd q_i^{D,\star}(t;a)=\dd q_i^D(t;a)$ uniformly on $[0,\tau]$. The recurrent event outcome model is correctly specified for the weighted full data increment, denoted by $\mathcal{O}_a^R$, if $\dd q_i^{R,\bw,\star}(t;a)=\dd q_i^{R,\bw}(t;a)$ uniformly on $[0,\tau]$.
 
\begin{theorem}[Componentwise double robustness]
\label{thm:irt_double_robustness}
Under Assumptions \ref{asm:irt_consistency}--\ref{asm:irt_censoring} and the regularity conditions in Section \ref{sec:irt_asymptotics}, the following properties hold for treatment arm $a$. If $\mathcal{C}_a$ or $\mathcal{O}_a^D$ holds, then $\widehat\nu_a^{\mathrm{DR}}(\tau)\overset{p}{\longrightarrow}\nu_a(\tau)$. If $\mathcal{C}_a$ or $(\mathcal{O}_a^D\cap\mathcal{O}_a^R)$ holds, then $\widehat\mu_a^{\mathrm{DR}}(\tau;\bw)\overset{p}{\longrightarrow}\mu_a(\tau;\bw)$. If, in addition, $\nu_a(\tau)>0$, then $\widehat\psi_a^{\mathrm{DR}}(\tau;\bw)\overset{p}{\longrightarrow}\psi_a(\tau;\bw)$. If the same condition holds for both treatment arms, then $\widehat\Delta^{\mathrm{DR}}(\tau;\bw)\overset{p}{\longrightarrow}\Delta(\tau;\bw)$.
\end{theorem}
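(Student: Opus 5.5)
The plan is to first prove consistency of the two local estimators $\dd\widehat\Lambda_a^{D,\mathrm{DR}}$ and $\dd\widehat\Lambda_a^{R,\mathrm{DR},\bw}$ on $[0,\tau]$, and then push this through the maps of Proposition~\ref{prop:full_data_na}. For the first part, write the numerator and denominator of \eqref{eq:irt_LamD_DR} as cumulative processes in $t$. Under the regularity conditions of Section~\ref{sec:irt_asymptotics}, the plug-in nuisance estimates converge uniformly to their limits $K_i^{a,\star},H_i^{a,\star},q_i^{D,\star},q_i^{R,\bw,\star}$, and the relevant function classes are Glivenko--Cantelli. Hence the cumulative numerator and the denominator converge uniformly in probability to
\[
\mathcal N^\star(t)=\int_0^t\Ebb\bigl[\xi_i^a\{K_i^{a,\star}(u-)\}^{-1}\dd N_i^D(u)+\{1-\xi_i^aU_i^{a,\star}(u)\}\dd q_i^{D,\star}(u;a)\bigr]
\]
and
\[
\mathcal R^\star(t)=\Ebb\bigl[\xi_i^a\{K_i^{a,\star}(t-)\}^{-1}Y_i(t)+\{1-\xi_i^aU_i^{a,\star}(t)\}r_i^{a,\star}(t)\bigr].
\]
The same holds for the recurrent-event numerator. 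The core claim is the population identity $\dd\mathcal N^\star(t)=\mathcal R^\star(t)\dd\Lambda_a^D(t)$ and $\mathcal R^\star(t)=R_a(t)$ (or at least that $\mathcal R^\star$ is bounded away from zero), which holds under $\mathcal C_a$ or $\mathcal O_a^D$. Together with $\inf_t R_a(t-)>0$, this gives $\sup_{t\le\tau}|\widehat\Lambda_a^{D,\mathrm{DR}}(t)-\Lambda_a^D(t)|\to0$ in probability.

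To verify the identity, I condition on $\bZ_i$ and use Assumptions~\ref{asm:irt_consistency}--\ref{asm:irt_censoring} together with $\Ebb(\xi_i^a\mid\bZ_i)=1$.

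\textbf{Case $\mathcal C_a$.} Because $\dd N_i^D(t)=\ind(C_i\ge t)\dd N_i^{D,a}(t)$ on $\{A_i=a\}$ and $C_i^a$ is conditionally independent of the event processes, the first term has conditional mean $\dd q_i^D(t;a)-r_i^a(t)\dd\Lambda_a^D(t)$. Its marginal mean is therefore $\Ebb\{\dd N_i^{D,a}(t)\}-R_a(t)\dd\Lambda_a^D(t)=0$ by \eqref{eq:irt_LamD}. For the augmentation term, $U_i^{a,\star}(t)-1$ is a stochastic integral of the predictable integrand $1/\{K_i^a H_i^{a,\star}\}$ against the true censoring martingale $M_i^C$. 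It therefore has conditional mean zero for arbitrary $H^{a,\star}$, so $\Ebb\{\xi_i^aU_i^{a,\star}(t)\mid\bZ_i\}=1$ and the second term vanishes in expectation.

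\textbf{Case $\mathcal O_a^D$.} I rewrite the estimating function as
\[
\bigl\{\dd q_i^D-r_i^a\dd\Lambda_a^D\bigr\}+\xi_i^a\Bigl[\{K_i^{a,\star}(t-)\}^{-1}\{\dd N_i^D-Y_i\dd\Lambda_a^D\}-U_i^{a,\star}(t)\{\dd q_i^D-r_i^a\dd\Lambda_a^D\}\Bigr].
\]
The first bracket has mean zero by \eqref{eq:irt_LamD}. For the second, I would prove the standard AIPCW identity: with $H_i^a$ and $q_i^D$ true, the conditional expectation given $\bZ_i$ and $A_i=a$ of the bracket is zero for any positive $K^{a,\star}$. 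The route is to integrate by parts, expressing $\ind(C_i\ge t)/K^\star(t-)$ as $1-\int_{(0,t)}\dd M^{C,\star}/K^\star$ relative to the misspecified hazard. One then uses that, conditionally on $\bZ_i$ and on survival to $u$, the expected remaining full-data increment at $t$ equals $\{\dd q_i^D-r_i^a\dd\Lambda_a^D\}/H_i^a(u-)$. The correction by $H_i^a(u-)$ in $U$ is exactly what makes the residual terms cancel.

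\textbf{Main obstacle.} I expect this $\mathcal O_a^D$ computation, handled carefully with ties, left limits and $Y_i^\dagger$, to be the main technical obstacle, and I would do it first at the full-data conditional level. The same argument also applies to the denominator: replacing $\dd N$ by $Y$ shows that $\mathcal R^\star(t)=R_a(t)$ under either model.

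The recurrent-event equation \eqref{eq:irt_DR_eq_R} is handled identically, using $\dd N_i^{\bw}(t)=\ind(C_i\ge t)\dd N_i^{a,\bw}(t)$. Under $\mathcal C_a$, nothing about outcome models is needed. Under the outcome route, the AIPCW cancellation requires both the correct $H_i^a$ (entering $U_i^a$ and $r_i^a$) and the correct $q_i^{R,\bw}$. This explains why the condition for $\widehat\mu_a^{\mathrm{DR}}$ is $\mathcal O_a^D\cap\mathcal O_a^R$.

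Finally, the product integral $\Lambda\mapsto\prod(1-\dd\Lambda)$ is continuous in sup norm on functions of uniformly bounded variation (Gill--Johansen), and the estimated cumulative hazards have bounded variation with probability tending to one. Hence $\widehat S_a^{\mathrm{DR}}\to S_a$ uniformly, and therefore $\widehat\nu_a^{\mathrm{DR}}(\tau)\to\nu_a(\tau)$. For $\widehat\mu_a^{\mathrm{DR}}$, I use continuity of $(S,\Lambda^R)\mapsto\int S(t-)\dd\Lambda^R(t)$ under uniform convergence with bounded variation of $\Lambda^R$, via the Helly--Bray/integration-by-parts argument, combined with Proposition~\ref{prop:full_data_na}. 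The statements for $\widehat\psi_a^{\mathrm{DR}}$ and $\widehat\Delta^{\mathrm{DR}}$ then follow by the continuous mapping theorem, using $\nu_a(\tau)>0$.
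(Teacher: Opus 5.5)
Your proposal is correct and has the same overall structure as the paper's proof. Both identify the probability limits of the local numerators and the common denominator by conditioning on $\bZ$, show under either condition that they equal $\Ebb\{\dd N^{D,a}(t)\}$ (resp.\ $\Ebb\{\dd N^{a,\bw}(t)\}$) and $R_a(t)$, and then pass through the product integral, the integrals, and the ratio.

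Your $\mathcal{C}_a$ case is the paper's: the IPCW term reproduces the full-data increment, and $\Ebb\{\xi^aU^\star(t)\mid\bZ\}=1$ because $U^\star-1$ is a martingale integral for any $H^\star$.

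The genuine difference is in the outcome-model case.
\begin{itemize}
\item \textbf{The paper's route.} It evaluates the two pieces separately in closed form. A conditional IPCW identity gives the IPCW term conditional mean $R_K(t)\,\dd q_0^D(t)$, where $R_K=K_0^a(t-)/K^\star(t-)$. A calibration lemma computes $\Ebb\{\dd M_C^\star(u)\mid\bZ\}$ explicitly and applies the Duhamel equation for a ratio of product integrals. This shows $\Ebb\{\xi^aU^\star(t)\mid\bZ\}=R_K(t)$ whenever $H^\star=H_0^a$, so $R_K\,\dd q_0^D+(1-R_K)\,\dd q_0^D$ collapses.
\item \textbf{Your route.} You rewrite the weight pathwise as $\ind(C\ge t)/K^\star(t-)=1-\int_{(0,t)}\dd M_C^\star(u)/K^\star(u)$ on $\{D^a\ge t\}$. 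You then match the resulting stochastic integral against the one in $U^\star$ term by term, using $\Ebb\{g(t)\mid\bZ,D^a\ge u\}=\Ebb\{g(t)\mid\bZ\}/H_0^a(u-)$ for $u<t$. This avoids both the explicit form of $\Ebb\{\dd M_C^\star\mid\bZ\}$ and Duhamel.
\end{itemize}

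What each route buys:
\begin{itemize}
\item Yours is the classical AIPCW cancellation argument, and it makes transparent why the factor $H(u-)$ must sit in the denominator of $U$.
\item The paper's yields explicit bias factors for the IPCW and augmentation pieces separately, and its calibration lemma is reused verbatim in the CRT proof.
\end{itemize}

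Two further remarks.
\begin{itemize}
\item Your pathwise identity holds with $K^\star(u)$ in the integrand, while $U^\star$ uses $K^\star(u-)$. Exact cancellation therefore needs a continuous censoring limit. The paper's Duhamel step carries the same continuous-time caveat, so this is not a gap relative to it.
\item Your final step is more careful than the paper's one-line appeal to continuity of the product-integral and integral maps. Controlling the total variation of $\widehat\Lambda_a^{D,\mathrm{DR}}$ and $\widehat\Lambda_a^{R,\mathrm{DR},\bw}$ is needed because these can have negative increments, since $1-\xi^a\widehat U$ may be negative. That control is what makes the Gill--Johansen and Helly--Bray arguments legitimate.
\end{itemize}
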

 
The numerator and denominator have different robustness requirements because they depend on different full data quantities. The restricted mean survival time depends only on the terminal event distribution, so it is consistently estimated if either the censoring model or the terminal event outcome model is correct. The recurrent event burden depends on survival up to time $t$ and on the recurrent event process among those alive at time $t$, so outcome based consistency for the numerator requires joint correctness of the terminal event and recurrent event full data increments. Correct specification of the censoring model alone is sufficient for both components. The proof of Theorem \ref{thm:irt_double_robustness} follows by evaluating the probability limits of the two local estimators \eqref{eq:irt_LamD_DR} and \eqref{eq:irt_LamR_DR}. Under $\mathcal{C}_a$, the inverse probability weighted terms identify the full data local terminal event and recurrent event increments. Under the relevant outcome model conditions, the augmentation terms recover those same full data increments even if the censoring working model converges to a pseudo true limit. This is the censored data analogue of augmented inverse probability weighted double robustness \citep{robins1994estimation,bang2005doubly,vdlrobins2003unified,tsiatis2006semiparametric}. The details are provided in Appendix \ref{supp:sec:irt_dr_detailed}. 

The nuisance estimates entering \eqref{eq:irt_LamD_DR} and \eqref{eq:irt_LamR_DR} are generated by semiparametric working models for the survival and recurrent event outcomes \citep{cox1972regression,lin2000semiparametric}. For the censoring distribution, we fit an arm-specific working Cox model \citep{cox1972regression},
\[
  \lambda_C^a(t\mid\bZ_i)=\lambda_{C0}^a(t)\exp\{\balpha_C^{a\mathsf T}\bh_C(\bZ_i)\},
\]
where $\lambda_{C0}^a(t)$ is an unspecified baseline hazard function, $\balpha_C^a$ is an unknown regression coefficient vector, and $\bh_C(\cdot)$ is a prespecified vector of covariate functions, and let $\widehat\Lambda_i^C(t;a)$ and $\widehat K_i^a(t)=\exp\{-\widehat\Lambda_i^C(t;a)\}$ denote the fitted conditional cumulative hazard and survival function. For the terminal event, we similarly fit an arm specific working Cox model,
\[
  \lambda_D^a(t\mid\bZ_i)=\lambda_{D0}^a(t)\exp\{\bbeta_D^{a\mathsf T}\bh_D(\bZ_i)\},
\]
where $\lambda_{D0}^a(t)$, $\bbeta_D^a$, and $\bh_D(\cdot)$ are defined analogously, with fitted conditional cumulative hazard $\widehat\Lambda_i^D(t;a)$ and fitted conditional survival function $\widehat H_i^a(t)=\exp\{-\widehat\Lambda_i^D(t;a)\}$. 
For the recurrent event rates, we adopt an arm specific marginal proportional rates working model \citep{lin2000semiparametric}, also called the LWYY model. For event type $k=1,\ldots,K$,
\begin{equation}
  \Ebb\{\dd N_{ik}^{a}(t)\mid\bZ_i\}=H_i^a(t-)\exp\{\bgamma_{Rk}^{a\mathsf T}\bh_R(\bZ_i)\}\dd\Lambda_{Rk0}^a(t),
  \label{eq:irt_lwyy_model}
\end{equation}
where $\Lambda_{Rk0}^a(t)$ is an unspecified baseline cumulative rate function for type $k$ events, $\bgamma_{Rk}^a$ is an unknown regression coefficient vector, and $\bh_R(\cdot)$ is a prespecified vector of covariate functions. Unlike a conditional intensity model, the specification in \eqref{eq:irt_lwyy_model} concerns only the marginal mean increment of the recurrent event process given the baseline covariates, and therefore requires neither independent recurrent event increments nor a frailty distribution. The regression coefficients and the baseline cumulative rate function are estimated by the LWYY estimating equation and the associated Breslow estimator \citep{breslow1974covariance}. Let $\dd\widehat\Lambda_i^{R_k}(t;a)=\exp\{(\widehat{\bgamma}_{Rk}^{\,a})^{\mathsf T}\bh_R(\bZ_i)\}\dd\widehat\Lambda_{Rk0}^a(t)$ denote the fitted type $k$ increment, and define the fitted weighted increment $\dd\widehat\Lambda_i^{R,\bw}(t;a)=\sum_{k=1}^K w_k\dd\widehat\Lambda_i^{R_k}(t;a)$. The fitted quantities entering the estimators are then $\widehat r_i^a(t)=\widehat H_i^a(t-)$, $\dd\widehat q_i^D(t;a)=\widehat H_i^a(t-)\dd\widehat\Lambda_i^D(t;a)$, and $\dd\widehat q_i^{R,\bw}(t;a)=\widehat H_i^a(t-)\dd\widehat\Lambda_i^{R,\bw}(t;a)$. 
Regularity conditions for the fitted censoring, terminal event, and recurrent event working models, including their probability limits under possible misspecification, are given in Appendix \ref{supp:sec:irt_asymptotic_detailed}.

Our proposed estimator is closely related to the one step estimator of \citet{baer2025causal}. When the future mean $F_a(u,\tau\mid\bZ)$, namely $\Ebb\{\ind(D_i^a>u)N_i^{a,\bw}(\tau)\mid\bZ_i\}$, in their construction is induced by the same conditional terminal event survival and local recurrent event increment nuisance functions used here, the two procedures are first order equivalent. 
The formal statement, under which the differences between the corresponding estimators of $\mu_a(\tau;\bw)$, $\nu_a(\tau)$, and $\psi_a(\tau;\bw)$ are all $o_p(n^{-1/2})$, is given in Appendix \ref{supp:sec:baer_detailed_equivalence}. The two estimators nevertheless differ in finite samples, and the local Nelson--Aalen construction offers two practical advantages. First, the proposed estimators respect the range and shape constraints of their targets. Because $\widehat S_a^{\mathrm{DR}}$ is a product integral, it is a genuine nonincreasing survival function, so $\widehat\nu_a^{\mathrm{DR}}(\tau)$ is nondecreasing in $\tau$, and $\widehat\mu_a^{\mathrm{DR}}(\tau;\bw)$ is nonnegative and nondecreasing in $\tau$ whenever the estimated local rate increments are nonnegative. One step estimators, in contrast, may violate such global constraints across time horizons and require a subsequent isotonic projection to restore monotonicity \citep{baer2025causal}. Second, the proposed estimators are simple to compute. Every quantity is a closed form functional of fitted Cox and LWYY regressions evaluated at the observed event times, and estimates over all horizons $\tau\in(0,\tau_{\max}]$ are obtained in a single pass from the same local increments. The one step construction instead requires the future mean $F_a(u,\tau\mid\bZ)$ at every censoring time $u$, which entails fitting separate remaining outcome regressions over a grid of time points and is computationally burdensome without further approximation \citep{baer2025causal}.

\subsection{Asymptotic properties}
\label{sec:irt_asymptotics}

We next establish the asymptotic distribution of the proposed estimator, using counting process arguments and semiparametric estimating equation theory \citep{fleming1991counting,andersen1993statistical,tsiatis2006semiparametric,vandervaart1998asymptotic}. Throughout, the time horizon $\tau$, the treatment arm $a$, and the event type weight vector $\bw$ are fixed, and all limits are taken as $n\to\infty$. The regularity conditions for the asymptotic analysis, together with the probability limits of the two local estimators under possible working model misspecification, are stated in Appendix \ref{supp:sec:irt_asymptotic_detailed}. Let $\phi_{\Lambda^D,a,i}(t)$ and $\phi_{\Lambda^{R,\bw},a,i}(t)$ denote the influence processes of $\widehat\Lambda_a^{D,\mathrm{DR}}(t)$ and $\widehat\Lambda_a^{R,\mathrm{DR},\bw}(t)$, which include both the empirical estimating equation terms and the first order effects of estimating the censoring and outcome working models, and let $\phi_{\nu,a,i}$ and $\phi_{\mu,a,i}$ denote the induced influence functions of $\widehat\nu_a^{\mathrm{DR}}(\tau)$ and $\widehat\mu_a^{\mathrm{DR}}(\tau;\bw)$, obtained from the product integral delta method for the survival function and the Stieltjes integral representation of the recurrent event burden \citep{aalen1978nonparametric,gill1990product}. Explicit expressions for these influence functions, together with product integral versions allowing jumps in the terminal event cumulative hazard, are given in Appendices \ref{supp:sec:local_ratio_derivative_detailed} and \ref{supp:sec:functional_delta_detailed}.

\begin{theorem}[Asymptotic normality]
\label{thm:irt_asymptotic_normality}
Suppose Assumptions \ref{asm:irt_consistency}--\ref{asm:irt_censoring} and the regularity conditions in Appendix \ref{supp:sec:irt_asymptotic_detailed} hold. Suppose also that, for each treatment arm, $\mathcal{C}_a$ or $(\mathcal{O}_a^D\cap\mathcal{O}_a^R)$ holds, and that $\nu_a(\tau)>0$. Then
\begin{align*}
  n^{1/2}\left[
    \begin{pmatrix}
      \widehat\mu_a^{\mathrm{DR}}(\tau;\bw) \\
      \widehat\nu_a^{\mathrm{DR}}(\tau)
    \end{pmatrix}
    -
    \begin{pmatrix}
      \mu_a(\tau;\bw) \\
      \nu_a(\tau)
    \end{pmatrix}
  \right]
  &=n^{-1/2}\sum_{i=1}^n
  \begin{pmatrix}
    \phi_{\mu,a,i} \\
    \phi_{\nu,a,i}
  \end{pmatrix}
  +o_p(1).
\end{align*}
Thus, \(n^{1/2}\{\widehat\psi_a^{\mathrm{DR}}(\tau;\bw)-\psi_a(\tau;\bw)\}=n^{-1/2}\sum_{i=1}^n\phi_{\psi,a,i}+o_p(1)\), where $\phi_{\psi,a,i}=\frac{\phi_{\mu,a,i}}{\nu_a(\tau)}-\frac{\mu_a(\tau;\bw)}{\{\nu_a(\tau)\}^2}\phi_{\nu,a,i}$. For the treatment contrast, $\phi_{\Delta,i}=\phi_{\psi,1,i}-\phi_{\psi,0,i}$, and hence \(  n^{1/2}\{\widehat\Delta^{\mathrm{DR}}(\tau;\bw)-\Delta(\tau;\bw)\} = n^{-1/2}\sum_{i=1}^n\phi_{\Delta,i}+o_p(1)\). If $\sigma_\Delta^2=\Ebb(\phi_{\Delta,i}^2)$ is finite and nonzero, then \(  n^{1/2}
  \{\widehat\Delta^{\mathrm{DR}}(\tau;\bw)-\Delta(\tau;\bw)\}\overset{d}{\longrightarrow}\mathrm{N}(0,\sigma_\Delta^2)\). 
\end{theorem}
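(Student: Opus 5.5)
The proof will proceed in three layers: the two local estimators, then the integral functionals, then the ratio. \emph{First}, we will linearize the local increment estimators \eqref{eq:irt_LamD_DR} and \eqref{eq:irt_LamR_DR} uniformly on $[0,\tau]$. Write each as a ratio $\dd\widehat\Lambda=\Pn\{\dd\widehat B_i(t)\}/\Pn\{\widehat R_i(t)\}$, where $\dd\widehat B_i$ is the augmented numerator and $\widehat R_i$ the augmented alive risk set. Let $R^\star(t)$ and $\dd B^\star(t)$ be their population limits with nuisances replaced by $K^{a,\star},H^{a,\star},q^\star$. By the probability-limit computation behind Theorem \ref{thm:irt_double_robustness}, under $\mathcal C_a$ or $(\mathcal O_a^D\cap\mathcal O_a^R)$ we have $R^\star(t)=R_a(t)$ and $\dd B^\star=R_a\dd\Lambda_a$. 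This is because the augmentation terms have conditional mean zero given $\bZ_i$ whenever one of the two models is correct: $\Ebb\{\xi_i^a U_i^a(t)\mid\bZ_i\}=1$ under $\mathcal C_a$, while under the outcome models the IPCW residual is orthogonal to the censoring martingale.

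Writing $\widehat\Lambda_a(t)-\Lambda_a(t)=\int_0^t \Pn\{\dd\widehat B_i-\widehat R_i\dd\Lambda_a\}/\Pn\widehat R_i$, we decompose the integrand into three pieces. The first is the empirical process term $(\Pn-\Pbb)\{\dd B_i^\star-R_i^\star\dd\Lambda_a\}/R_a$. The second is the nuisance-estimation term $\Pbb\{\dd\widehat B_i-\widehat R_i\dd\Lambda_a\}-\Pbb\{\dd B_i^\star-R_i^\star\dd\Lambda_a\}$, evaluated at the estimated nuisances. The third is a remainder. The empirical process term is controlled by Donsker properties of the classes indexed by $t$ (bounded variation processes times bounded nuisances, with $K^\star$ and $R_a$ bounded away from zero by Assumption \ref{asm:irt_censoring} and the positivity of $R_a(t-)$). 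For the nuisance term we will use the working Cox and LWYY models: the coefficients and Breslow baselines are themselves asymptotically linear toward pseudo-true limits, so a functional delta method in $(\widehat\balpha_C^a,\widehat\Lambda_{C0}^a,\widehat\bbeta_D^a,\widehat\Lambda_{D0}^a,\widehat\bgamma_{Rk}^a,\widehat\Lambda_{Rk0}^a)$ contributes the ``first order effects of estimating the working models'' that are already folded into $\phi_{\Lambda^D,a,i}$ and $\phi_{\Lambda^{R,\bw},a,i}$. The second-order remainder is a product of nuisance errors, each $O_p(n^{-1/2})$ under parametric-rate working models, hence $o_p(n^{-1/2})$ uniformly in $t$. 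This yields
\[
n^{1/2}\{\widehat\Lambda_a^{D,\mathrm{DR}}(t)-\Lambda_a^D(t),\,\widehat\Lambda_a^{R,\mathrm{DR},\bw}(t)-\Lambda_a^{R,\bw}(t)\}=n^{-1/2}\sum_i\{\phi_{\Lambda^D,a,i}(t),\phi_{\Lambda^{R,\bw},a,i}(t)\}+o_p(1)
\]
uniformly in $t\in[0,\tau]$, i.e.\ weak convergence in $D[0,\tau]^2$.

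\emph{Second}, we will push the local expansions through the maps in Proposition \ref{prop:full_data_na}. The product integral $\Lambda\mapsto\prod(1-\dd\Lambda)$ is Hadamard differentiable on bounded-variation functions with uniformly bounded total variation \citep{gill1990product}. Its derivative at $\Lambda_a^D$ is $h\mapsto -S_a(t)\int_0^t \{S_a(u-)/S_a(u)\}\dd h(u)$, which is well defined since $S_a(t)\ge R_a$-type lower bounds on $[0,\tau]$. Then $\nu=\int_0^\tau S$ is linear and continuous, so $\phi_{\nu,a,i}=\int_0^\tau(\text{derivative applied to }\phi_{\Lambda^D,a,i})\dd t$. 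For $\mu=\int S(t-)\dd\Lambda^{R}$, the map $(S,\Lambda^R)\mapsto\int S_-\dd\Lambda^R$ is Hadamard differentiable when $\Lambda^R$ has bounded variation, with derivative $\int h_{S}(t-)\dd\Lambda_a^{R,\bw}+\int S_a(t-)\dd h_R$. Handling the second term requires integration by parts, since $h_R$ is only a sample path, and this produces $\phi_{\mu,a,i}$. The chain rule for Hadamard derivatives gives the joint expansion of $(\widehat\mu,\widehat\nu)$.

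\emph{Third}, the ratio and contrast results follow from the ordinary delta method with $\nu_a(\tau)>0$, which gives the stated $\phi_{\psi,a,i}$. Independence of participants and randomization make $\phi_{\Delta,i}=\phi_{\psi,1,i}-\phi_{\psi,0,i}$ i.i.d.\ mean zero, so the classical CLT yields $\mathrm N(0,\sigma_\Delta^2)$. The main obstacle is the first step: showing that the nuisance-estimation term is exactly linear and that the cross-remainder vanishes uniformly over $t$ under misspecification of one model. This is subtle because $U_i^a$ involves a stochastic integral of estimated quantities against $\dd M_i^C$. We will first try writing $\widehat U_i^a-U_i^{a,\star}$ as a stochastic integral of the nuisance difference and bounding it with Lenglart's inequality and uniform consistency of the Breslow estimators. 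If that fails, we will impose the Donsker conditions in Appendix \ref{supp:sec:irt_asymptotic_detailed} directly on the classes generated by the fitted $U_i^a$.
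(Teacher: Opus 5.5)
Your proposal is correct and follows essentially the same route as the paper's appendix proof. The paper likewise linearizes the two augmented local ratio estimators while retaining the first-order Cox/LWYY nuisance contributions (its $\zeta_{h,i}$ and $\zeta_{B,i}$ terms), then pushes these through the product-integral and Stieltjes-integral derivatives, and finishes with the ratio delta method and the CLT.

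Two small remarks:
\begin{itemize}
\item Under the outcome-model condition the augmentation term is not itself conditionally mean zero. Instead, the calibration identity $\Ebb\{\xi_i^aU_i^{a,\star}(t)\mid\bZ_i\}=K_i^a(t-)/K_i^{a,\star}(t-)$ makes its bias cancel that of the misweighted inverse-probability term.
\item $\widehat U_i^a$ is a pathwise Lebesgue--Stieltjes integral with at most one censoring jump per participant. The paper therefore simply differentiates it (its $\dot U_{ji}$), and no Lenglart-type argument is needed.
\end{itemize}
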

 Theorem \ref{thm:irt_asymptotic_normality} follows from the asymptotic linearity of the nuisance estimators, the first order expansion of the two local ratio estimators, the differentiability of the product integral survival map, and the delta method for the ratio $(\mu,\nu)\mapsto\mu/\nu$ \citep{andersen1993statistical,tsiatis2006semiparametric,vandervaart1998asymptotic}, with the full derivation given in Appendices \ref{supp:sec:irt_asymptotic_detailed} through \ref{supp:sec:functional_delta_detailed}.
 
\begin{corollary}[Variance estimation]
\label{cor:irt_variance}
Let $\widehat\phi_{\Delta,i}$ be the estimated influence function obtained by replacing unknown quantities in $\phi_{\Delta,i}$ with their sample analogues and fitted nuisance quantities \citep{tsiatis2006semiparametric,vandervaart1998asymptotic}. Define $\overline{\widehat\phi}_{\Delta}=n^{-1}\sum_{i=1}^n\widehat\phi_{\Delta,i}$ and $\widehat\sigma_\Delta^2=\frac{1}{n-1}\sum_{i=1}^n\left(\widehat\phi_{\Delta,i}-\overline{\widehat\phi}_{\Delta}\right)^2$. Under the conditions of Theorem \ref{thm:irt_asymptotic_normality}, $\widehat{\operatorname{se}}\{\widehat\Delta^{\mathrm{DR}}(\tau;\bw)\}=\left(\widehat\sigma_\Delta^2/n\right)^{1/2}$ is a consistent standard error estimator, and an asymptotic $95\%$ Wald confidence interval is
\[
  \widehat\Delta^{\mathrm{DR}}(\tau;\bw)\pm1.96\,\widehat{\operatorname{se}}\{\widehat\Delta^{\mathrm{DR}}(\tau;\bw)\}.
\]
The same construction yields standard errors for $\widehat\psi_a^{\mathrm{DR}}(\tau;\bw)$, $\widehat\mu_a^{\mathrm{DR}}(\tau;\bw)$, and $\widehat\nu_a^{\mathrm{DR}}(\tau)$.
\end{corollary}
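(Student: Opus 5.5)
The corollary follows from Theorem \ref{thm:irt_asymptotic_normality} once we show $\widehat\sigma_\Delta^2\overset{p}{\longrightarrow}\sigma_\Delta^2$. The asymptotically linear representation gives $n^{1/2}\{\widehat\Delta^{\mathrm{DR}}(\tau;\bw)-\Delta(\tau;\bw)\}/\sigma_\Delta\overset{d}{\longrightarrow}\mathrm{N}(0,1)$. Slutsky's lemma then delivers $n^{1/2}\{\widehat\Delta^{\mathrm{DR}}-\Delta\}/\widehat\sigma_\Delta\overset{d}{\longrightarrow}\mathrm{N}(0,1)$, which yields both the consistency of $\widehat{\operatorname{se}}$ (in the ratio sense $\widehat{\operatorname{se}}/(\sigma_\Delta^2/n)^{1/2}\to1$) and the asymptotic $95\%$ coverage of the Wald interval. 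The plan therefore reduces to a consistency argument for the empirical variance of estimated influence functions.

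I would decompose
\[
\widehat\sigma_\Delta^2=\frac{n}{n-1}\Bigl[\Pn\{(\widehat\phi_{\Delta}-\phi_{\Delta})^2\}+2\Pn\{(\widehat\phi_{\Delta}-\phi_{\Delta})\phi_\Delta\}+\Pn(\phi_\Delta^2)-\overline{\widehat\phi}_\Delta^{\,2}\Bigr].
\]
The law of large numbers gives $\Pn(\phi_\Delta^2)\to\Ebb(\phi_{\Delta,i}^2)=\sigma_\Delta^2$ by finiteness of $\sigma_\Delta^2$, and $\Ebb(\phi_{\Delta,i})=0$ gives $\overline{\widehat\phi}_\Delta\to0$ once the remainder is controlled. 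By Cauchy--Schwarz, the cross term is bounded by $\{\Pn(\widehat\phi_\Delta-\phi_\Delta)^2\}^{1/2}\{\Pn\phi_\Delta^2\}^{1/2}$. Everything therefore hinges on
\[
\Pn\{(\widehat\phi_\Delta-\phi_\Delta)^2\}=o_p(1).
\]

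To establish this, I would write $\phi_{\Delta,i}$ explicitly as a smooth map $g$ of the individual data $\bO_i$ and a finite collection of population quantities $\theta$. These quantities are $\mu_a,\nu_a$, the marginal processes $\Lambda_a^D,\Lambda_a^{R,\bw},S_a$, the limiting denominators of \eqref{eq:irt_LamD_DR}, the pseudo-true nuisance limits $K_i^{a,\star},H_i^{a,\star},q_i^{\star}$, and the limits of the working-model coefficients and information matrices that enter the nuisance-correction terms. Then $\widehat\phi_{\Delta,i}=g(\bO_i;\widehat\theta)$, where the consistency of each component of $\widehat\theta$ comes from Theorem \ref{thm:irt_double_robustness}, the uniform convergence of the Cox/LWYY/Breslow fits assumed in the regularity conditions, and the uniform consistency of the local ratio estimators shown in the proof of Theorem \ref{thm:irt_asymptotic_normality}. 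The map $g$ involves only ratios with denominators bounded away from zero (by $\inf_t R_a(t-)>0$ and the positivity of $K_i^a$ in Assumption \ref{asm:irt_censoring}), and Stieltjes integrals over $[0,\tau]$ against processes of bounded variation. It is thus Lipschitz in $\theta$ under the sup norm, with envelope $B(\bO_i)$ built from $N_i^{\bw}(\tau)$, $1/\pi_a$ and bounded covariate functions. This gives
\[
\Pn\{(\widehat\phi_\Delta-\phi_\Delta)^2\}\le\|\widehat\theta-\theta\|_\infty^2\,\Pn B^2=o_p(1)\,O_p(1).
\]

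The main obstacle is this last step: controlling terms such as $\int_0^\tau \dd\widehat M_i^C(u;a)/\{\widehat K_i^a(u-)\widehat H_i^a(u-)\}$ and the product-integral derivative terms uniformly in $i$. These involve random integrands in stochastic integrals, and the envelope must be square integrable. I would first try to show that each such term is a difference of two monotone processes in $u$ with total variation bounded by $C\{1+N_i^{\bw}(\tau)+N_i^C(\tau)\}$. Integration by parts then transfers the sup-norm convergence of the integrands to the integrals. The moment condition $\Ebb\{N_i^{a,\bw}(\tau)^2\}<\infty$, implicit in finite $\sigma_\Delta^2$, makes $\Pn B^2=O_p(1)$. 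The same argument applied to each component of $\phi_{\Delta,i}$ yields the standard errors for $\widehat\psi_a^{\mathrm{DR}}$, $\widehat\mu_a^{\mathrm{DR}}$, and $\widehat\nu_a^{\mathrm{DR}}$.
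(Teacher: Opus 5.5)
Your argument is correct and follows the same route the paper relies on. The paper gives no separate proof of this corollary: beyond citing Tsiatis and van der Vaart, it only remarks in the appendix that $\widehat\phi_{\Delta,i}$ is formed by replacing population quantities with fitted or empirical analogues. Your reduction to $\Pn\{(\widehat\phi_\Delta-\phi_\Delta)^2\}=o_p(1)$ via a Lipschitz/envelope bound, followed by Slutsky, is the standard way of making that implicit step explicit.

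Two details to tighten when you carry it out. First, the nuisance-correction terms $\dot A_{h,t}[i]$ and $\dot B_t[i]$ are expectations over an independent copy. Your $\theta$ must therefore include function-valued cross-moments estimated by empirical averages, some indexed by two time arguments through $\dot U_{ji}$. Second, the positivity you need is that of $K_i^{a,\star}$, $H_i^{a,\star}$, and $B_a^\star$ from the regularity conditions, not that of the true $K_i^a$.
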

 
The influence function in Theorem \ref{thm:irt_asymptotic_normality} is the first order influence function of the proposed local Nelson--Aalen estimator, and is not, in general, the semiparametric efficient influence function in the unrestricted recurrent event model. The efficient influence function involves conditional expectations of the remaining recurrent event burden and the remaining survival time given the full event history available just before censoring, whereas the proposed estimator replaces these history specific regressions with baseline regressions among participants who remain alive, induced by the terminal event and recurrent event working models. This replacement is a deliberate tradeoff. The history specific remaining outcome regressions are high dimensional, must be estimated at every censoring time, and require the continuous time recurrent event history, information that is not always recorded or readily available in practice \citep{baer2025causal}. The baseline regressions used here are standard Cox and LWYY fits that are stable in moderate samples, and they preserve the componentwise double robustness and the closed form computation of the proposed estimator. The efficiency cost is confined to settings in which the past recurrent event history carries additional prognostic information, beyond the baseline covariates and survival status, for the remaining numerator and denominator outcomes. Under the corresponding history sufficiency condition, the first order influence function coincides with the efficient influence function and the proposed estimator attains the semiparametric efficiency bound. The exact efficient influence function and the resulting efficiency gap are given in Appendix \ref{supp:sec:eif}.

\section{Extension to cluster-randomized trials}\label{sec:crt}

\subsection{Notation and estimands}\label{sec:crt_estimand}

We now extend the while-alive construction to cluster randomized trials. In contrast to the IRT setting, the independent unit of randomization and sampling is the cluster, whereas the recurrent event and terminal event processes are measured on participants within clusters. This distinction affects both the causal estimand and the unit used for inference. When cluster size is associated with prognosis, survival, recurrent event rates, or treatment effect heterogeneity, the average causal effect for the population of participants and that for the population of clusters are generally different causal targets \citep{kahan2023estimands,kahan2023ics,kahan2024demystifying,li2025standardization,fang2026estimands}. The CRT formulation therefore requires a matched pair of while-alive estimands, an individual average estimand that gives equal weight to participants in the target population, and a cluster average estimand that gives equal weight to clusters in the target population. Consider \(M\) independent clusters indexed by \(i=1,\ldots,M\), where cluster \(i\) contains \(n_i\) participants indexed by \(j=1,\ldots,n_i\). Let \(A_i\in\{0,1\}\) denote the cluster assigned treatment, with \(A_i=1\) for the intervention arm and \(A_i=0\) for the control arm. Let \(\mathbf L_i\) denote baseline cluster level covariates and \(\bZ_{ij}\) baseline individual level covariates, and write \(\mathcal V_{ij}=(n_i,\mathbf L_i,\bZ_{ij})\) for the all baseline information. Including \(n_i\) in \(\mathcal V_{ij}\) allows us to account for informative cluster size, because cluster size may carry prognostic information and may modify treatment effects \citep{kahan2023ics,kahan2024demystifying,li2025standardization}. For treatment arm \(a\in\{0,1\}\), let \(D_{ij}^a\) denote the potential terminal event time, and define the potential terminal event counting process \(N_{ij}^{D,a}(t)=\ind(D_{ij}^a\le t)\) and the potential at risk indicator \(Y_{ij}^a(t)=\ind(D_{ij}^a\ge t)\). For event type \(k=1,\ldots,K\), let \(T_{ijk1}^a<T_{ijk2}^a<\cdots\) denote the potential occurrence times of type \(k\) recurrent events for participant \(j\) in cluster \(i\). As in Section \ref{sec:irt_estimand}, no recurrent events can occur after the terminal event, so the potential type \(k\) recurrent event counting process is \(N_{ijk}^a(t)=\sum_{l\ge 1}\ind(T_{ijkl}^a\le t\wedge D_{ij}^a)\), which satisfies \(N_{ijk}^a(t)=N_{ijk}^a(t\wedge D_{ij}^a)\). We collect the \(K\) processes into the vector \(\bN_{ij}^a(t)=\{N_{ij1}^a(t),\ldots,N_{ijK}^a(t)\}^{\mathsf T}\), and with the same pre-specified weight vector \(\bw=(w_1,\ldots,w_K)^{\mathsf T}\) as in Section \ref{sec:irt_estimand}, define the weighted recurrent event process \(N_{ij}^{a,\bw}(t)=\bw^{\mathsf T}\bN_{ij}^a(t)=\sum_{k=1}^K w_kN_{ijk}^a(t)\).

To formalize the while-alive cluster-average estimand and the while-alive individual-average estimand within a single notation, we introduce a target population index \(\ell\in\{\mathrm{clus},\mathrm{ind}\}\) and define the target weight
\(\omega_i^\ell\), which equals \(1\) when \(\ell=\mathrm{clus}\) (the cluster-average target) and \(n_i\) when \(\ell=\mathrm{ind}\) (the individual-average target). The choice \(\omega_i^{\mathrm{clus}}=1\) first averages participants within a cluster and then averages clusters equally, so that each cluster contributes equally to the estimand. The choice \(\omega_i^{\mathrm{ind}}=n_i\) weights each cluster in proportion to its size, so that each participant contributes equally. This weighting formulation follows the estimand distinction used in model robust CRT standardization and CRT survival estimands \citep{li2025standardization,fang2026estimands}. For fixed \(\tau>0\), define
\begin{align}
  \mu_{a,\ell}(\tau;\bw)&=\frac{\Ebb\left[{\omega_i^\ell}\sum_{j=1}^{n_i}N_{ij}^{a,\bw}(\tau)/n_i\right]}{\Ebb(\omega_i^\ell)},
  \label{eq:crt_mu}\\
  \nu_{a,\ell}(\tau)&=\frac{\Ebb\left[{\omega_i^\ell}\sum_{j=1}^{n_i}(D_{ij}^a\wedge \tau)/n_i\right]}{\Ebb(\omega_i^\ell)}.
  \label{eq:crt_nu}
\end{align}
In both definitions, the inner average \(n_i^{-1}\sum_{j=1}^{n_i}\) is the within cluster mean of the participant-level outcome, and the outer ratio \(\Ebb(\omega_i^\ell\,\cdot)/\Ebb(\omega_i^\ell)\) averages these cluster-level summaries over the target population. Thus \(\mu_{a,\ell}(\tau;\bw)\) is the target specific marginal weighted recurrent event burden and \(\nu_{a,\ell}(\tau)\) is the corresponding restricted mean survival time. The exposure-weighted while-alive recurrent event rate and the treatment contrast are
\begin{equation*}
  \psi_{a,\ell}(\tau;\bw)=\frac{\mu_{a,\ell}(\tau;\bw)}{\nu_{a,\ell}(\tau)},\qquad
  \Delta_{\ell}(\tau;\bw)=\psi_{1,\ell}(\tau;\bw)-\psi_{0,\ell}(\tau;\bw).
\end{equation*}
We refer to \(\Delta_{\mathrm{clus}}(\tau;\bw)\) as the treatment effect for the while-alive cluster-average estimand and to \(\Delta_{\mathrm{ind}}(\tau;\bw)\) as the treatment effect for the while-alive individual-average estimand. The former compares the average cluster under intervention with the average cluster under control, whereas the latter compares the average participant under intervention with the average participant under control. If cluster sizes are constant, or if cluster size is unrelated to the relevant potential outcome distribution, the two targets may coincide, and when every cluster has size \(n_i=1\), both reduce to the IRT estimands in \eqref{eq:irt_mu} through \eqref{eq:irt_psi}. Under informative cluster size, however, the two estimands answer different causal questions and should not be used interchangeably \citep{kahan2023estimands,kahan2023ics,kahan2024demystifying}.

As in the IRT setting, the two components of \(\psi_{a,\ell}(\tau;\bw)\) are determined by two marginal local quantities, now averaged over clusters according to \(\omega_i^\ell\). For target \(\ell\), define the \(\ell\) weighted marginal terminal event hazard increment and the \(\ell\) weighted marginal recurrent event rate increment among participants who remain alive as
\begin{align*}
  \dd\Lambda_{a,\ell}^D(t)=\frac{\Ebb\left[\frac{\omega_i^\ell}{n_i}\sum_{j=1}^{n_i}\dd N_{ij}^{D,a}(t)\right]}{\Ebb\left[\frac{\omega_i^\ell}{n_i}\sum_{j=1}^{n_i}Y_{ij}^a(t)\right]},\qquad
  \dd\Lambda_{a,\ell}^{R,\bw}(t)=\frac{\Ebb\left[\frac{\omega_i^\ell}{n_i}\sum_{j=1}^{n_i}\dd N_{ij}^{a,\bw}(t)\right]}{\Ebb\left[\frac{\omega_i^\ell}{n_i}\sum_{j=1}^{n_i}Y_{ij}^a(t)\right]}.
\end{align*}
Let \(S_{a,\ell}(t)=\Ebb\left[\frac{\omega_i^\ell}{n_i}\sum_{j=1}^{n_i}Y_{ij}^a(t)\right]/\Ebb(\omega_i^\ell)\) denote the \(\ell\) weighted marginal survival function, namely the proportion of the target population remaining alive at time \(t\), which is recovered from \(\Lambda_{a,\ell}^D\) through the product integral \(S_{a,\ell}(t)=\prod_{0<u\le t}\{1-\dd\Lambda_{a,\ell}^D(u)\}\). Here, \(\dd\Lambda_{a,\ell}^D(t)\) is the instantaneous risk of the terminal event at time \(t\) among participants alive at \(t\) in the target population, and \(\dd\Lambda_{a,\ell}^{R,\bw}(t)\) is the corresponding expected weighted number of recurrent events at time \(t\) per participant alive at \(t\). Paralleling Proposition \ref{prop:full_data_na},
\begin{align}
  \nu_{a,\ell}(\tau)&=\int_0^\tau S_{a,\ell}(t)\dd t,\label{eq:crt_nu_NA}\\
  \mu_{a,\ell}(\tau;\bw)
  &=\int_0^\tau S_{a,\ell}(t-)\dd\Lambda_{a,\ell}^{R,\bw}(t).\label{eq:crt_mu_NA}
\end{align}
The CRT estimands therefore have the same local Nelson--Aalen structure as in the IRT setting, with alive risk sets and event increments averaged over clusters according to \(\omega_i^\ell\) \citep{aalen1978nonparametric,andersen1993statistical,li2025standardization,fang2026estimands}. The derivation of \eqref{eq:crt_nu_NA} and \eqref{eq:crt_mu_NA} is given in Appendix \ref{supp:sec:crt_dr_detailed}, and estimation of \(\psi_{a,\ell}(\tau;\bw)\) again reduces to estimation of the two local quantities.

\subsection{Doubly robust estimation}\label{sec:crt_estimation}

In practice, the terminal event and recurrent event processes are subject to right censoring. Let \(C_{ij}^a\) denote the potential censoring time for participant \(j\) in cluster \(i\) under treatment arm \(a\in\{0,1\}\). We impose the following assumptions for identification.

\begin{assumption}[Consistency]
\label{asm:crt_consistency}
There is no interference between clusters and no hidden versions of the cluster level treatment. If cluster \(i\) is assigned \(A_i=a\), then, for each \(j=1,\ldots,n_i\), \(D_{ij}=D_{ij}^a\), \(C_{ij}=C_{ij}^a\), and \(\bN_{ij}(t)=\bN_{ij}^a(t)\) for all \(t\in[0,\tau]\). Participants within the same cluster may be arbitrarily dependent.
\end{assumption}

\begin{assumption}[Cluster randomization]
\label{asm:crt_randomization}
For each cluster \(i\) and treatment arm \(a\in\{0,1\}\), let \(\pi_i^a\) denote the known randomization probability, satisfying \(\pi_i^0+\pi_i^1=1\) and \(0<\epsilon_\pi\le \pi_i^a\le 1-\epsilon_\pi<1\). The cluster treatment assignment satisfies
\(A_i\perp\!\!\!\perp\left\{D_{ij}^0,D_{ij}^1,C_{ij}^0,C_{ij}^1,\bN_{ij}^0(\cdot),\bN_{ij}^1(\cdot):j=1,\ldots,n_i\right\}\mid n_i,\mathbf L_i,\bZ_{i1},\ldots,\bZ_{in_i}\).
\end{assumption}

The randomization probability \(\pi_i^a\) is known from the trial design and is not estimated in the primary analysis. Under simple cluster randomization, \(\pi_i^a=\pi_a\) is constant, and under equal allocation, \(\pi_i^0=\pi_i^1=1/2\). Under stratified, blocked, pair matched, or covariate adaptive cluster randomization, \(\pi_i^a\) denotes the corresponding known design probability and may vary across clusters according to the prespecified randomization scheme. The conditional independence statement in Assumption \ref{asm:crt_randomization} allows the randomization design to use baseline cluster level and individual level covariates, while excluding dependence of treatment assignment on the potential event histories after conditioning on those baseline variables.

Under Assumptions \ref{asm:crt_consistency} and \ref{asm:crt_randomization}, only the potential outcomes under the assigned arm are observed, subject to right censoring, with \(D_{ij}=D_{ij}^{A_i}\), \(C_{ij}=C_{ij}^{A_i}\), and \(\bN_{ij}(t)=\bN_{ij}^{A_i}(t)\). The observed quantities are defined exactly as in Section \ref{sec:irt_estimation}, applied to each participant within each cluster. Let \(X_{ij}=\min(D_{ij},C_{ij},\tau)\) denote the observed follow up time, with terminal event indicator \(\delta_{ij}=\ind(D_{ij}\le C_{ij}\wedge\tau)\) and censoring indicator \(\delta_{ij}^C=\ind(C_{ij}<D_{ij}\wedge\tau)\). For the terminal event, define \(N_{ij}^D(t)=\ind(X_{ij}\le t,\delta_{ij}=1)\), \(N_{ij}^C(t)=\ind(X_{ij}\le t,\delta_{ij}^C=1)\), and \(Y_{ij}(t)=\ind(X_{ij}\ge t)\). For the recurrent events, the observed type \(k\) counting process is \(N_{ijk}(t)=N_{ijk}^{A_i}(t\wedge C_{ij})\), and the observed weighted process is \(N_{ij}^{\bw}(t)=\bw^{\mathsf T}\bN_{ij}(t)\) with \(\bN_{ij}(t)=\{N_{ij1}(t),\ldots,N_{ijK}(t)\}^{\mathsf T}\). When the terminal event and censoring are tied, the terminal event precedes censoring, and the censoring risk indicator is \(Y_{ij}^\dagger(t)=\ind(X_{ij}>t,\delta_{ij}=1)+\ind(X_{ij}\ge t,\delta_{ij}\ne1)\). To ensure identifiability of the two local quantities under censoring, we impose the following assumption.

\begin{assumption}[Covariate dependent censoring]
\label{asm:crt_censoring}
Let \(\mathcal F_i=(n_i,\mathbf L_i,\bZ_{i1},\ldots,\bZ_{in_i})\) denote the full baseline information in cluster \(i\). For each arm \(a\in\{0,1\}\), the joint censoring process is independent of the joint terminal event and recurrent event processes conditional on \(\mathcal F_i\), namely \(\{C_{i1}^a,\ldots,C_{in_i}^a\}\perp\!\!\!\perp\{D_{ij}^a,\bN_{ij}^a(\cdot):j=1,\ldots,n_i\}\mid\mathcal F_i\). The conditional censoring survival function \(K_{ij}^a(t)=\Pbb(C_{ij}^a\ge t\mid\mathcal F_i)\) is uniformly bounded away from zero on \([0,\tau]\).
\end{assumption}

Assumption \ref{asm:crt_censoring} is an individual level identifying condition within a clustered design. It does not require independence of event times, censoring times, or recurrent event processes among participants in the same cluster. Such dependence is addressed by treating clusters as the independent sampling units in estimation and inference \citep{balzer2019hierarchical,balzer2023twostage,wang2024modelrobust,li2025standardization,fang2026estimands}. The observed data for cluster \(i\) are \(\bO_i=\left(A_i,n_i,\mathbf L_i,\{\bZ_{ij},X_{ij},\delta_{ij},\delta_{ij}^C,\bN_{ij}:j=1,\ldots,n_i\}\right)\), assumed independent across clusters, and the empirical measure in the CRT setting is over clusters, \(\mathbb P_M f=M^{-1}\sum_{i=1}^M f(\bO_i)\).

We now construct doubly robust estimators of \(\psi_{a,\ell}(\tau;\bw)\) by estimating the two local quantities, paralleling the construction in Section \ref{sec:irt_estimation} with participant level averages replaced by \(\ell\) weighted cluster level averages. Define the inverse probability of treatment weight \(\xi_i^a=\ind(A_i=a)/\pi_i^a\). Under Assumption \ref{asm:crt_randomization}, \(\Ebb\left(\xi_i^a\mid n_i,\mathbf L_i,\bZ_{i1},\ldots,\bZ_{in_i}\right)=1\). The nuisance functions are the cluster analogues of those in Section \ref{sec:irt_estimation}. Let \(H_{ij}^a(t)\) denote the conditional terminal event survival function with conditional cumulative hazard \(\Lambda_{ij}^D(t;a)\), let \(\Lambda_{ij}^C(t;a)=-\log K_{ij}^a(t)\), and define the censoring martingale increment \(\dd M_{ij}^C(t;a)=\dd N_{ij}^C(t)-Y_{ij}^\dagger(t)\dd\Lambda_{ij}^C(t;a)\), the augmentation process \(U_{ij}^a(t)=1-\int_{(0,t)}\dd M_{ij}^C(u;a)/\{K_{ij}^a(u-)H_{ij}^a(u-)\}\), the conditional at risk probability \(r_{ij}^a(t)=H_{ij}^a(t-)\), and the conditional full data increments \(\dd q_{ij}^D(t;a)=H_{ij}^a(t-)\dd\Lambda_{ij}^D(t;a)\) and \(\dd q_{ij}^{R,\bw}(t;a)=H_{ij}^a(t-)\dd\Lambda_{ij}^{R,\bw}(t;a)\), where \(\dd\Lambda_{ij}^{R,\bw}(t;a)\) is the conditional weighted recurrent event rate among participants alive at \(t\). For a fixed target \(\ell\), the augmented Nelson--Aalen estimating equation for the \(\ell\) weighted marginal terminal event hazard is
\begin{equation}
\mathbb P_M\!\left[\frac{\omega_i^\ell}{n_i}\sum_{j=1}^{n_i}\left\{
  \begin{aligned}
  &\xi_i^a\{K_{ij}^a(t-)\}^{-1}\{\dd N_{ij}^D(t)-Y_{ij}(t)\dd\Lambda_{a,\ell}^D(t)\} \\
  &\quad+\{1-\xi_i^a U_{ij}^a(t)\}\{\dd q_{ij}^D(t;a)-r_{ij}^a(t)\dd\Lambda_{a,\ell}^D(t)\}
  \end{aligned}
  \right\}\right]=0,
\label{eq:crt_DR_eq_D}
\end{equation}
and the analogous equation for the \(\ell\) weighted marginal recurrent event rate is
\begin{equation}
\mathbb P_M\!\left[\frac{\omega_i^\ell}{n_i}\sum_{j=1}^{n_i}\left\{\begin{aligned}
  &\xi_i^a\{K_{ij}^a(t-)\}^{-1}
  \{\dd N_{ij}^{\bw}(t)-Y_{ij}(t)\dd\Lambda_{a,\ell}^{R,\bw}(t)\} \\
  &\quad+\{1-\xi_i^a U_{ij}^a(t)\}\{\dd q_{ij}^{R,\bw}(t;a)-r_{ij}^a(t)\dd\Lambda_{a,\ell}^{R,\bw}(t)\}
  \end{aligned}
  \right\}\right]=0.
\label{eq:crt_DR_eq_R}
\end{equation}
Within each cluster, the summands have the same structure as in the IRT setting, an inverse probability of censoring weighted estimating function augmented by conditional full data increments, and the target weight \(\omega_i^\ell/n_i\) matches the empirical average to the chosen population.

We now formalize the double robustness of the proposed estimators. Let \(K_{ij}^{a,\star}(t)\), \(H_{ij}^{a,\star}(t)\), \(\dd q_{ij}^{D,\star}(t;a)\), and \(\dd q_{ij}^{R,\bw,\star}(t;a)\) denote the uniform probability limits of \(\widehat K_{ij}^a(t)\), \(\widehat H_{ij}^a(t)\), \(\dd\widehat q_{ij}^D(t;a)\), and \(\dd\widehat q_{ij}^{R,\bw}(t;a)\), respectively. These limits coincide with the corresponding true conditional quantities defined earlier in this subsection under correct specification, and are otherwise pseudo true limits under misspecification. The censoring model is correctly specified for arm \(a\), denoted by \(\mathcal{C}_a\), if \(K_{ij}^{a,\star}(t)=K_{ij}^a(t)\) uniformly on \([0,\tau]\). The terminal event outcome model is correctly specified, denoted by \(\mathcal{O}_a^D\), if \(H_{ij}^{a,\star}(t)=H_{ij}^a(t)\) and \(\dd q_{ij}^{D,\star}(t;a)=\dd q_{ij}^D(t;a)\) uniformly on \([0,\tau]\). The recurrent event outcome model is correctly specified for the weighted full data increment, denoted by \(\mathcal{O}_a^R\), if \(\dd q_{ij}^{R,\bw,\star}(t;a)=\dd q_{ij}^{R,\bw}(t;a)\) uniformly on \([0,\tau]\).
 
\begin{theorem}[Componentwise double robustness]
\label{thm:crt_double_robustness}
Under Assumptions \ref{asm:crt_consistency}--\ref{asm:crt_censoring} and the regularity conditions in Appendix \ref{supp:sec:crt_variance_detailed}, the following properties hold for treatment arm \(a\) and target population \(\ell\). If \(\mathcal{C}_a\) or \(\mathcal{O}_a^D\) holds, then \(\widehat\nu_{a,\ell}^{\mathrm{DR}}(\tau)\overset{p}{\longrightarrow}\nu_{a,\ell}(\tau)\). If \(\mathcal{C}_a\) or \((\mathcal{O}_a^D\cap\mathcal{O}_a^R)\) holds, then \(\widehat\mu_{a,\ell}^{\mathrm{DR}}(\tau;\bw)\overset{p}{\longrightarrow}\mu_{a,\ell}(\tau;\bw)\). If, in addition, \(\nu_{a,\ell}(\tau)>0\), then \(\widehat\psi_{a,\ell}^{\mathrm{DR}}(\tau;\bw)\overset{p}{\longrightarrow}\psi_{a,\ell}(\tau;\bw)\). If the same condition holds for both treatment arms, then \(\widehat\Delta_{\ell}^{\mathrm{DR}}(\tau;\bw)\overset{p}{\longrightarrow}\Delta_{\ell}(\tau;\bw)\).
\end{theorem}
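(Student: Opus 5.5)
The approach is to compute the probability limits of the two local ratio estimators obtained by solving \eqref{eq:crt_DR_eq_D} and \eqref{eq:crt_DR_eq_R}, show that these limits equal $\dd\Lambda_{a,\ell}^D(t)$ and $\dd\Lambda_{a,\ell}^{R,\bw}(t)$ under the stated model conditions, and then pass the limits through the continuous maps in \eqref{eq:crt_nu_NA}--\eqref{eq:crt_mu_NA}.

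\textbf{Uniform limits.} The regularity conditions give uniform convergence of the nuisance fits to their limits $K_{ij}^{a,\star}$, $H_{ij}^{a,\star}$, $\dd q_{ij}^{D,\star}$ and $\dd q_{ij}^{R,\bw,\star}$, together with bounded total variation and positivity bounds. Clusters are i.i.d. and $\omega_i^\ell/n_i\le 1$, so a Glivenko--Cantelli argument over clusters applies. With the corresponding $U^{a,\star}_{ij}$, the numerator and denominator of each local estimator converge uniformly on $[0,\tau]$ to
\[
\Ebb\Big[\tfrac{\omega_i^\ell}{n_i}\textstyle\sum_j\big(\xi_i^a\{K_{ij}^{a,\star}(t-)\}^{-1}\dd N_{ij}^\bullet(t)+\{1-\xi_i^aU_{ij}^{a,\star}(t)\}\dd q_{ij}^{\bullet,\star}(t;a)\big)\Big],
\]
and to its at-risk analogue, where $\bullet$ stands for the terminal event or the weighted recurrent event component. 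The main claim is that the limiting numerator equals $\Ebb[\frac{\omega_i^\ell}{n_i}\sum_j \dd N_{ij}^{\bullet,a}(t)]$ and the limiting denominator equals $\Ebb[\frac{\omega_i^\ell}{n_i}\sum_j Y_{ij}^a(t)]$. Since $\omega_i^\ell/n_i$ is a function of $\mathcal F_i$, it suffices to verify, conditional on $\mathcal F_i$ and for each $j$, the individual-level identity used in the IRT proof of Theorem \ref{thm:irt_double_robustness}. That identity holds because Assumptions \ref{asm:crt_randomization} and \ref{asm:crt_censoring} reduce the within-cluster conditional structure to the IRT structure, with $\mathcal F_i$ playing the role of $\bZ_i$.

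\textbf{Case $\mathcal C_a$.} First condition on $\mathcal F_i$ and $A_i=a$; using $\Ebb(\xi_i^a\mid\mathcal F_i)=1$ and consistency replaces observed quantities by potential ones. The inverse-weighted term then satisfies $\Ebb[\ind(C_{ij}^a\ge t)\dd N_{ij}^{\bullet,a}(t)/K_{ij}^a(t-)\mid\mathcal F_i]=\Ebb[\dd N_{ij}^{\bullet,a}(t)\mid\mathcal F_i]$. Here we use that $\dd N_{ij}^{\bw}=\ind(C_{ij}\ge t)\dd N^{a,\bw}_{ij}$ and the tie convention. For the augmentation term, $U_{ij}^a(t)-1$ is a stochastic integral of the censoring martingale $M^C_{ij}$ with an $\mathcal F_i$-predictable integrand. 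This holds even when $H^{a,\star}$ is misspecified, because the integrand depends only on $\mathcal F_i$ and time. Hence $\Ebb[U_{ij}^a(t)\mid\mathcal F_i,A_i=a]=1$ and the augmentation term has mean zero. The one point needing care is that $M^C_{ij}$ is a martingale with respect to a filtration that includes $\mathcal F_i$. This follows from Assumption \ref{asm:crt_censoring} applied to the joint censoring vector. Within-cluster dependence of censoring times is harmless because each $j$ is handled marginally given $\mathcal F_i$.

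\textbf{Case $\mathcal O_a^D$ (and $\mathcal O_a^R$ for the recurrent component).} Here $\dd q^\star$ and $r^\star$ are the true conditional means. We rewrite the summand as $\dd q_{ij}+\xi_i^a[\{K^\star\}^{-1}\dd N_{ij}-U^\star_{ij}\dd q_{ij}]$ and show that the bracket has conditional mean $\Ebb[\dd N^a_{ij}-\dd q_{ij}\mid\mathcal F_i]=0$ for an arbitrary censoring limit $K^\star$. This is the step I expect to be the main obstacle. The plan is to reproduce the IRT telescoping identity $\{K^\star(t-)\}^{-1}\ind(C\ge t)=U^\star(t)$ on $\{D\ge t\}$ up to a term that is orthogonal given $\mathcal F_i$. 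This uses the integration-by-parts representation of $1/K^\star$ along the censoring process together with the true $H$ in the denominator of $U$. The leftover term is an integral of $\dd q/H$ against $\ind(D\ge u)-H(u-)$, and it vanishes in conditional expectation. For the recurrent component, a product of the true $H$ and the true rate is needed, which is why the statement requires $\mathcal O_a^D\cap\mathcal O_a^R$. The same computation applied to $Y_{ij}$ handles the common denominator.

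\textbf{Passing to the estimands.} The limiting denominator is bounded below under the regularity conditions. The ratio of the limits therefore equals $\dd\Lambda^\bullet_{a,\ell}(t)$, and the cumulative estimators converge uniformly on $[0,\tau]$. The product integral is continuous in the supremum norm on functions of bounded variation, so $\widehat S_{a,\ell}^{\mathrm{DR}}\to S_{a,\ell}$ uniformly. Applying \eqref{eq:crt_nu_NA}, dominated convergence gives $\widehat\nu_{a,\ell}^{\mathrm{DR}}\to\nu_{a,\ell}$; this needs only the terminal event component, so either $\mathcal C_a$ or $\mathcal O_a^D$ suffices. For \eqref{eq:crt_mu_NA}, Helly--Bray continuity of $(S,\Lambda)\mapsto\int S(t-)\dd\Lambda$ applies, given uniform convergence of $S$ and of $\Lambda$ with bounded variation, and yields $\widehat\mu_{a,\ell}^{\mathrm{DR}}\to\mu_{a,\ell}$. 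Finally, the continuous mapping theorem for the ratio, with $\nu_{a,\ell}>0$, and then for the difference across arms gives the remaining two claims.
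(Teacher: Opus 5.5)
Your proposal is correct and follows the same architecture as the paper's proof in Appendix \ref{supp:sec:crt_dr_detailed}:

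\begin{itemize}
\item identify the population limits of the two local numerators and the common denominator;
\item use that $\omega_i^\ell/n_i$ is baseline-measurable to reduce to member-level conditional identities, with the cluster baseline information playing the role of $\bZ$;
\item pass through the product integral, the Stieltjes integral, the ratio and the difference.
\end{itemize}

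Under $\cC_a$ your argument coincides with the paper's (Lemma \ref{supp:lem:ipcw_identity_detailed} plus $\Ebb\{\xi_i^aU_{ij}^\star(t)\mid\cdot\}=1$).

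\textbf{Where you differ: the outcome-side step.} The paper proves $\Ebb\{\xi^aU^\star(t)\mid\bZ\}=K_0^a(t-\mid\bZ)/K^\star(t-\mid\bZ)$, Lemma \ref{supp:lem:calibration_detailed}(ii), entirely at the level of conditional means. It computes the mean of the death-stopped censoring residual through its compensator, $H_0^aK_0^a(\dd\Lambda_{C0}^a-\dd\Lambda_C^\star)$. It then recognizes the resulting integral via the Duhamel identity for a ratio of product integrals. Hence the IPCW term contributes $R_K\dd q_0$ and the augmentation term $(1-R_K)\dd q_0$.

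You instead compare the two weights pathwise. Let $\widetilde M_C$ be the censoring residual not stopped at death. The tie convention gives $\dd M_C^\star(u)=\ind(D>u)\dd\widetilde M_C(u)$. Combined with $\ind(C\ge t)/K^\star(t-)=1-\int_{(0,t)}\dd\widetilde M_C(u)/K^\star(u-)$, this yields
\[
U^\star(t)-\frac{\ind(C\ge t)}{K^\star(t-)}=\int_{(0,t)}\frac{H(u-)-\ind(D>u)}{H(u-)K^\star(u-)}\,\dd\widetilde M_C(u).
\]
This has conditional mean zero given the baseline information and the whole censoring path, because $C\perp\!\!\!\perp D$ given that information (in continuous time, as in the paper). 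Your bracket then splits into two mean-zero pieces: $\ind(C\ge t)\{K^\star(t-)\}^{-1}\{\dd N^a(t)-\dd q(t)\}$, and $\dd q(t)$ times the display.

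Use this decomposition globally, not ``on $\{D\ge t\}$'' as you phrase it. On that event alone the remainder is $\int_{(0,t)}\{H(u-)-1\}\{H(u-)K^\star(u-)\}^{-1}\dd\widetilde M_C(u)$, which is not mean zero when $K^\star$ is misspecified. Moreover, $U^\star(t)\dd q(t)$ is not confined to that event.

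\textbf{What each route buys.} Yours shows exactly why only the true $H$ in the denominator of $U^\star$ matters, and it avoids the ratio-Duhamel formula. The paper's never introduces the unstopped censoring process and yields a closed-form expression for $\Ebb(\xi^aU^\star\mid\cdot)$.

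\textbf{A small fix to your Glivenko--Cantelli step.} For $\ell=\mathrm{ind}$ the cluster summand is a sum of $n_i$ member terms. Its envelope therefore comes from the moment condition in Assumption \ref{asm:crt_reg_moment}, not from $\omega_i^\ell/n_i\le1$.
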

 
\begin{remark}[Robustness across target populations]
The double robustness statement is the same for the while-alive cluster-average estimand and the while-alive individual-average estimand, but the estimands are not the same. The weight \(\omega_i^\ell\) determines which population is averaged over in \eqref{eq:crt_mu} and \eqref{eq:crt_nu}. Thus, under informative cluster size, consistency for \(\ell=\mathrm{ind}\) and consistency for \(\ell=\mathrm{clus}\) refer to different causal quantities, even though the same fitted nuisance functions can be used in both estimators.
\end{remark}
 
The proof of Theorem \ref{thm:crt_double_robustness} follows by evaluating the probability limits of the two local estimators \eqref{eq:crt_LamD_DR} and \eqref{eq:crt_LamR_DR}. Under \(\mathcal{C}_a\), the inverse probability weighted cluster level estimating functions identify the target weighted full data increments. Under the relevant outcome model conditions, the augmentation terms identify those same increments even when the censoring working model converges to a pseudo true limit. The detailed proof is provided in Appendix \ref{supp:sec:crt_dr_detailed}, and the asymptotic distribution of the proposed estimators is established in Section \ref{sec:crt_asymptotics}.

In practice, the nuisance functions are unknown and are replaced by estimates from working regression models described at the end of this section. Let \(\widehat K_{ij}^a\), \(\widehat H_{ij}^a\), \(\widehat U_{ij}^a\), \(\widehat r_{ij}^a\), \(\widehat q_{ij}^D\), and \(\widehat q_{ij}^{R,\bw}\) denote the resulting plug in versions. Substituting these estimates into \eqref{eq:crt_DR_eq_D} and solving for \(\dd\Lambda_{a,\ell}^D(t)\) gives
\begin{equation}
\dd\widehat\Lambda_{a,\ell}^{D,\mathrm{DR}}(t)=\frac{\mathbb P_M\!\left[\frac{\omega_i^\ell}{n_i}\sum_{j=1}^{n_i}\left\{\xi_i^a\{\widehat K_{ij}^a(t-)\}^{-1}\dd N_{ij}^D(t)+\{1-\xi_i^a\widehat U_{ij}^a(t)\}\dd\widehat q_{ij}^D(t;a)\right\}\right]}{\mathbb P_M\!\left[\frac{\omega_i^\ell}{n_i}\sum_{j=1}^{n_i}\left\{\xi_i^a\{\widehat K_{ij}^a(t-)\}^{-1}Y_{ij}(t)+\{1-\xi_i^a\widehat U_{ij}^a(t)\}\widehat r_{ij}^a(t)\right\}\right]},
\label{eq:crt_LamD_DR}
\end{equation}
with the corresponding survival and restricted mean survival time estimators
\begin{align}
  \widehat S_{a,\ell}^{\mathrm{DR}}(t)&=\prod_{0<u\le t}\{1-\dd\widehat\Lambda_{a,\ell}^{D,\mathrm{DR}}(u)\},\notag\\
  \widehat\nu_{a,\ell}^{\mathrm{DR}}(\tau)&=\int_0^\tau\widehat S_{a,\ell}^{\mathrm{DR}}(t-)\dd t.
  \label{eq:crt_S_nu_DR}
\end{align}
Similarly, substituting the estimates into \eqref{eq:crt_DR_eq_R} and solving gives
\begin{equation}
\dd\widehat\Lambda_{a,\ell}^{R,\mathrm{DR},\bw}(t)=\frac{\mathbb P_M\!\left[
\frac{\omega_i^\ell}{n_i}\sum_{j=1}^{n_i}\left\{\xi_i^a\{\widehat K_{ij}^a(t-)\}^{-1}\dd N_{ij}^{\bw}(t)+\{1-\xi_i^a\widehat U_{ij}^a(t)\}\dd\widehat q_{ij}^{R,\bw}(t;a)\right\}\right]}{\mathbb P_M\!\left[\frac{\omega_i^\ell}{n_i}\sum_{j=1}^{n_i}\left\{\xi_i^a\{\widehat K_{ij}^a(t-)\}^{-1}Y_{ij}(t)+\{1-\xi_i^a\widehat U_{ij}^a(t)\}\widehat r_{ij}^a(t)\right\}\right]}.
\label{eq:crt_LamR_DR}
\end{equation}
For a given target \(\ell\), the denominators of \eqref{eq:crt_LamD_DR} and \eqref{eq:crt_LamR_DR} are identical, so the terminal event hazard and the recurrent event rate are estimated on the same \(\ell\) weighted alive risk set. Setting \(\ell=\mathrm{clus}\) yields the estimator of the cluster-average while-alive treatment effect, and setting \(\ell=\mathrm{ind}\) yields the estimator of the individual-average while-alive treatment effect. Moreover, the two estimators differ only through the choice of \(\omega_i^\ell\), so the same fitted nuisance functions can be used for both estimands. Combining \eqref{eq:crt_LamR_DR} with \eqref{eq:crt_S_nu_DR}, the estimated weighted recurrent event burden is \(\widehat\mu_{a,\ell}^{\mathrm{DR}}(\tau;\bw)=\int_0^\tau\widehat S_{a,\ell}^{\mathrm{DR}}(t-)\dd\widehat\Lambda_{a,\ell}^{R,\mathrm{DR},\bw}(t)\), and the arm-specific while-alive rate and treatment contrast are
\begin{align*}
  \widehat\psi_{a,\ell}^{\mathrm{DR}}(\tau;\bw)&=\frac{\widehat\mu_{a,\ell}^{\mathrm{DR}}(\tau;\bw)}{\widehat\nu_{a,\ell}^{\mathrm{DR}}(\tau)},\notag\\
  \widehat\Delta_{\ell}^{\mathrm{DR}}(\tau;\bw)&=\widehat\psi_{1,\ell}^{\mathrm{DR}}(\tau;\bw)-\widehat\psi_{0,\ell}^{\mathrm{DR}}(\tau;\bw).
\end{align*}

This construction preserves the central feature of the IRT estimator, namely augmentation of inverse probability weighted local Nelson--Aalen estimating equations, while replacing participant-level empirical averages by cluster-level empirical averages matched to the target population. The estimating equations are evaluated through memberwise contributions under working independence, but this working convention is only a device for constructing feasible local estimating equations and nuisance estimators. It does not impose independence among participants or recurrent event increments within a cluster. Within cluster dependence is accommodated by forming cluster-level nuisance influence contributions and treating clusters as the independent units for inference, so the resulting theory is based on cluster-level first order influence functions. The asymptotic normality of the proposed estimators is established in Section \ref{sec:crt_asymptotics}.

The nuisance estimates entering \eqref{eq:crt_LamD_DR} and \eqref{eq:crt_LamR_DR} are generated by the same working model classes used in the IRT estimator, applied to \(\mathcal V_{ij}\) and fitted with arm specific models among participants in clusters assigned to arm \(a\). For the censoring and terminal event distributions, we fit arm specific Cox working models under working independence,
\[
  \lambda_C^a(t\mid\mathcal V_{ij}) =\lambda_{C0}^a(t)\exp\{\balpha_C^{a\mathsf T}\bh_C(\mathcal V_{ij})\},\qquad
  \lambda_D^a(t\mid\mathcal V_{ij})=\lambda_{D0}^a(t)\exp\{\bbeta_D^{a\mathsf T}\bh_D(\mathcal V_{ij})\},
\]
with fitted quantities \(\widehat\Lambda_{ij}^C(t;a)\), \(\widehat K_{ij}^a(t)=\exp\{-\widehat\Lambda_{ij}^C(t;a)\}\), \(\widehat\Lambda_{ij}^D(t;a)\), and \(\widehat H_{ij}^a(t)=\exp\{-\widehat\Lambda_{ij}^D(t;a)\}\). For recurrent event type \(k\), we use the arm-specific LWYY marginal proportional rates working model
\[
  \Ebb\{\dd N_{ijk}^{a}(t)\mid\mathcal V_{ij}\}=H_{ij}^a(t-)\exp\{\bgamma_{Rk}^{a\mathsf T}\bh_R(\mathcal V_{ij})\} \dd\Lambda_{Rk0}^a(t),
\]
with fitted type \(k\) increments \(\dd\widehat\Lambda_{ij}^{R_k}(t;a)\), fitted weighted increment \(\dd\widehat\Lambda_{ij}^{R,\bw}(t;a)=\sum_{k=1}^Kw_k\dd\widehat\Lambda_{ij}^{R_k}(t;a)\), and \(\dd\widehat q_{ij}^{R,\bw}(t;a)=\widehat H_{ij}^a(t-)\dd\widehat\Lambda_{ij}^{R,\bw}(t;a)\). Correct specification of a working model based on \(\mathcal V_{ij}\) means that the corresponding conditional nuisance function given the full baseline information \(\mathcal F_i\) is captured by \(\mathcal V_{ij}\). As in the IRT setting, outcome side correctness for the numerator requires \(\dd\widehat q_{ij}^{R,\bw}(t;a)\) to target \(\Ebb\{\dd N_{ij}^{a,\bw}(t)\mid\mathcal V_{ij}\}\). If the recurrent events and the terminal event are dependent through unmeasured participant-level or cluster-level factors, this requirement can be met either through a suitable joint model or through marginalization over the latent dependence structure \citep{huang2004joint,liu2004shared}. Further details on nuisance fitting and influence function construction are provided in Appendix \ref{supp:sec:crt_variance_detailed}.

The cluster-level first order influence functions describe the asymptotic distribution of the proposed working independent local estimator, but they cannot, in general, be interpreted as semiparametric efficient influence functions in the unrestricted observed data CRT model. With member specific censoring and within cluster dependence, the exact canonical gradient depends on the cluster level coarsening structure. In particular, observations from other members after a given member is censored may contain auxiliary information about that member's remaining recurrent event burden or remaining survival time, so the Hilbert space projection defining the exact canonical gradient may involve pattern specific or higher order coarsening projection terms. The proposed estimator does not attempt to estimate these terms, because doing so would require specifying and estimating the joint within cluster coarsening law and high dimensional history specific remaining outcome regressions. Instead, the estimator uses the feasible first order memberwise augmentation in \eqref{eq:crt_DR_eq_D} and \eqref{eq:crt_DR_eq_R}, which preserves componentwise double robustness and root \(M\) asymptotic normality under the stated conditions but does not generally attain the semiparametric efficiency bound. The first order memberwise influence function coincides with the efficient influence function only under additional conditions under which the higher order coarsening projection terms vanish, for example when later observations from other cluster members provide no additional information for predicting a censored member's remaining outcomes.

\subsection{Asymptotic properties}
\label{sec:crt_asymptotics}
 
In this section, we establish the asymptotic distribution of the proposed CRT estimators, with clusters as the independent units. Participant level dependence within a cluster is allowed and is incorporated through cluster-level influence functions. Throughout, the time horizon \(\tau\), the treatment arm \(a\), the event type weight vector \(\bw\), and the target population \(\ell\in\{\mathrm{clus},\mathrm{ind}\}\) are fixed, and all limits are taken as \(M\to\infty\). The regularity conditions for the asymptotic analysis, together with the probability limits of the two local estimators under possible working model misspecification, are stated in Appendix \ref{supp:sec:crt_variance_detailed}.
 
The asymptotic distribution is expressed in terms of cluster-level influence functions. Let \(\Phi_{\Lambda^D,a,\ell,i}(t)\) and \(\Phi_{\Lambda^{R,\bw},a,\ell,i}(t)\) denote the cluster-level influence processes of \(\widehat\Lambda_{a,\ell}^{D,\mathrm{DR}}(t)\) and \(\widehat\Lambda_{a,\ell}^{R,\mathrm{DR},\bw}(t)\), which include the empirical cluster contributions and the first order effects of estimating the censoring and outcome working models, and let \(\Phi_{\nu,a,\ell,i}\) and \(\Phi_{\mu,a,\ell,i}\) denote the induced cluster-level influence functions of \(\widehat\nu_{a,\ell}^{\mathrm{DR}}(\tau)\) and \(\widehat\mu_{a,\ell}^{\mathrm{DR}}(\tau;\bw)\), obtained from the product integral delta method for the target weighted survival function and the Stieltjes integral representation of the target weighted burden. Explicit expressions, together with product integral versions allowing jumps in the terminal event cumulative hazard, are given in Appendix \ref{supp:sec:crt_variance_detailed}.
 
\begin{theorem}[Asymptotic normality]
\label{thm:crt_asymptotic_normality}
Suppose Assumptions \ref{asm:crt_consistency}--\ref{asm:crt_censoring} and the regularity conditions in Appendix \ref{supp:sec:crt_variance_detailed} hold. Suppose also that, for each treatment arm, \(\mathcal{C}_a\) or \((\mathcal{O}_a^D\cap\mathcal{O}_a^R)\) holds, and that \(\nu_{a,\ell}(\tau)>0\). Then, for each fixed target population \(\ell\),
\begin{align*}
  M^{1/2}\left[
    \begin{pmatrix}
      \widehat\mu_{a,\ell}^{\mathrm{DR}}(\tau;\bw) \\
      \widehat\nu_{a,\ell}^{\mathrm{DR}}(\tau)
    \end{pmatrix}
    -
    \begin{pmatrix}
      \mu_{a,\ell}(\tau;\bw) \\
      \nu_{a,\ell}(\tau)
    \end{pmatrix}
  \right]
  &= M^{-1/2}\sum_{i=1}^M
  \begin{pmatrix}
    \Phi_{\mu,a,\ell,i} \\
    \Phi_{\nu,a,\ell,i}
  \end{pmatrix}
  +o_p(1).
\end{align*}
Thus, \(  M^{1/2}\{\widehat\psi_{a,\ell}^{\mathrm{DR}}(\tau;\bw)-\psi_{a,\ell}(\tau;\bw)\}=M^{-1/2}\sum_{i=1}^M\Phi_{\psi,a,\ell,i}+o_p(1)\),  where \(\Phi_{\psi,a,\ell,i}=\frac{\Phi_{\mu,a,\ell,i}}{\nu_{a,\ell}(\tau)}-\frac{\mu_{a,\ell}(\tau;\bw)}{\{\nu_{a,\ell}(\tau)\}^2}\Phi_{\nu,a,\ell,i}\). For the treatment contrast, \(\Phi_{\Delta,\ell,i}=\Phi_{\psi,1,\ell,i}-\Phi_{\psi,0,\ell,i}\), and hence
\(M^{1/2}\{\widehat\Delta_{\ell}^{\mathrm{DR}}(\tau;\bw)-\Delta_{\ell}(\tau;\bw)\}=M^{-1/2}\sum_{i=1}^M\Phi_{\Delta,\ell,i}+o_p(1)\). If \(\sigma_{\Delta,\ell}^2=\Ebb(\Phi_{\Delta,\ell,i}^2)\) is finite and nonzero, then \(M^{1/2}\{\widehat\Delta_{\ell}^{\mathrm{DR}}(\tau;\bw)-\Delta_{\ell}(\tau;\bw)\}\overset{d}{\longrightarrow}\mathrm{N}(0,\sigma_{\Delta,\ell}^2).\)
\end{theorem}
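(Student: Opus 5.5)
The proof mirrors that of Theorem \ref{thm:irt_asymptotic_normality}, with clusters as the independent units and participant sums collapsed into cluster-level summands.

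\emph{Step 1: cluster-level nuisance expansions.}
For each local estimator I write the numerator and denominator of \eqref{eq:crt_LamD_DR} and \eqref{eq:crt_LamR_DR} as $\mathbb P_M$ of a cluster-level function. For example,
\[
G_i(t;\widehat\eta)=\frac{\omega_i^\ell}{n_i}\sum_{j=1}^{n_i}\bigl[\xi_i^a\{\widehat K_{ij}^a(t-)\}^{-1}\dd N_{ij}^D(t)+\{1-\xi_i^a\widehat U_{ij}^a(t)\}\dd\widehat q_{ij}^D(t;a)\bigr],
\]
where $\widehat\eta$ collects the Cox and LWYY nuisance fits. These fits are obtained under working independence, so their estimating equations are sums over clusters of within-cluster sums. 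Standard M-estimation and Breslow-type arguments, applied with clusters as the i.i.d.\ units, then give asymptotically linear expansions
\[
\widehat\eta-\eta^\star=M^{-1}\sum_{i}\Phi_{\eta,i}+o_p(M^{-1/2}),
\]
uniformly on $[0,\tau]$, where $\eta^\star$ is the pseudo-true limit.

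I then decompose
\[
\mathbb P_M G(\cdot;\widehat\eta)-\Ebb G(\cdot;\eta^\star)=(\mathbb P_M-\Ebb)G(\cdot;\eta^\star)+\Ebb\{G(\cdot;\widehat\eta)-G(\cdot;\eta^\star)\}+(\mathbb P_M-\Ebb)\{G(\cdot;\widehat\eta)-G(\cdot;\eta^\star)\}.
\]
The last term is $o_p(M^{-1/2})$ uniformly in $t$ by a Donsker or bounded-variation argument. This relies on the regularity conditions: $K$ and $H$ bounded away from zero, bounded $n_i$ (or bounded moments of $\omega_i^\ell$), and finite-dimensional nuisance parameters. The middle term is a pathwise derivative and contributes $\dot G\,\Phi_{\eta,i}$.

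\emph{Step 2: the key centering.}
Under $\mathcal C_a$ or $(\mathcal O_a^D\cap\mathcal O_a^R)$, the limits of the numerator and denominator ratios equal $\dd\Lambda_{a,\ell}^D$ and $\dd\Lambda_{a,\ell}^{R,\bw}$. This is exactly the componentwise double robustness computation of Theorem \ref{thm:crt_double_robustness}, which I take as given. Hence the centering in the ratio expansion is at the truth.

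A ratio (quotient-rule) expansion, uniform over $t$, then yields
\[
\widehat\Lambda_{a,\ell}^{\cdot}(t)-\Lambda_{a,\ell}^{\cdot}(t)=M^{-1}\sum_i\Phi_{\Lambda^\cdot,a,\ell,i}(t)+o_p(M^{-1/2}).
\]
Tightness of these cluster-level processes in $D[0,\tau]$ follows from their sums-of-monotone structure, using the bounded-variation property of the integrands.

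\emph{Step 3: functional delta method.}
I apply the Hadamard differentiability of the product-integral map $\Lambda\mapsto S=\prod(1-\dd\Lambda)$ (Gill and Johansen), followed by integration over $[0,\tau]$, to get $\Phi_{\nu,a,\ell,i}$.

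For $\mu$, the map $(\Lambda^D,\Lambda^R)\mapsto\int S(t-)\dd\Lambda^R$ is also Hadamard differentiable on bounded-variation sets. Its derivative is
\[
\int S(t-)\dd h^R+\int \dot S(h^D)(t-)\dd\Lambda^R,
\]
which needs integration by parts when $h^R$ is not of bounded variation. This gives the joint expansion of $(\widehat\mu,\widehat\nu)$ as displayed in the statement.

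Because $\nu_{a,\ell}(\tau)>0$, the ordinary delta method for $(\mu,\nu)\mapsto\mu/\nu$ gives $\Phi_{\psi}$. Differencing the two arms gives $\Phi_{\Delta}$. The two arms use disjoint clusters but share a single i.i.d.\ cluster sample, so the joint expansion holds with summed influence functions. The Lindeberg CLT for i.i.d.\ cluster vectors with finite $\sigma^2_{\Delta,\ell}$ completes the proof.

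\emph{Main obstacle.}
I expect the hardest step to be the uniform control in Step 1 of the augmentation term. This term involves $\widehat U_{ij}^a(t)$, a stochastic integral of the estimated censoring martingale divided by $\widehat K\widehat H$, taken inside a ratio that is then fed into a product integral. My first attempt would be to treat $\widehat U_{ij}^a$ as a smooth functional of finite-dimensional coefficients plus Breslow baseline cumulative hazards. I would then bound its increments via the bounded-variation norm and Lipschitz dependence on $(\widehat K,\widehat H)$, using the lower bounds in Assumption \ref{asm:crt_censoring}. Throughout, all within-cluster dependence is absorbed by summing inside $\omega_i^\ell/n_i\sum_j$ before taking empirical process limits over clusters.
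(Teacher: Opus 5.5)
Your proposal is correct and follows essentially the same route as the paper's proof in Appendix~\ref{supp:sec:crt_variance_detailed}. Both arguments split the cluster-level augmented numerator and risk set into an empirical part plus a nuisance-perturbation part (the paper's $\dot A_{h,\ell,t}[i]$ and $\dot B_{\ell,t}[i]$, taken as expectations over an independent evaluation cluster), then apply a quotient-rule linearization centered at the truth via Theorem~\ref{thm:crt_double_robustness}, the product-integral and Stieltjes delta method, the ratio delta method, and the arm contrast with a central limit theorem over i.i.d.\ clusters. Your explicit treatment of the empirical-process remainder and of $\widehat U_{ij}^a$ supplies details that the paper leaves to Assumption~\ref{asm:crt_asymptotic_regularity}.
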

 
Theorem \ref{thm:crt_asymptotic_normality} follows from cluster-level empirical process arguments, the asymptotic linearity of the nuisance estimators, the first order expansion of the two local ratio estimators, the differentiability of the product integral survival map, and the delta method for the ratio \((\mu,\nu)\mapsto\mu/\nu\) \citep{andersen1993statistical,tsiatis2006semiparametric,vandervaart1998asymptotic}. The proof is given in Appendix \ref{supp:sec:crt_variance_detailed}.
 
\begin{corollary}[Variance estimation]
\label{cor:crt_variance}
Let \(\widehat\Phi_{\Delta,\ell,i}\) be the estimated cluster level influence function obtained by replacing unknown quantities in \(\Phi_{\Delta,\ell,i}\) with their sample analogues and fitted nuisance quantities. Define \(\overline{\widehat\Phi}_{\Delta,\ell}=M^{-1}\sum_{i=1}^M\widehat\Phi_{\Delta,\ell,i}\) and \(\widehat\sigma_{\Delta,\ell}^2=\frac{1}{M-1}\sum_{i=1}^M\left(\widehat\Phi_{\Delta,\ell,i}-\overline{\widehat\Phi}_{\Delta,\ell}\right)^2\). Under the conditions of Theorem \ref{thm:crt_asymptotic_normality}, \(\widehat{\operatorname{se}}\{\widehat\Delta_{\ell}^{\mathrm{DR}}(\tau;\bw)\}=\left(\widehat\sigma_{\Delta,\ell}^2/M\right)^{1/2}\) is a consistent standard error estimator, and an asymptotic \(95\%\) Wald confidence interval is \(\widehat\Delta_{\ell}^{\mathrm{DR}}(\tau;\bw)\pm1.96\,\widehat{\operatorname{se}}\{\widehat\Delta_{\ell}^{\mathrm{DR}}(\tau;\bw)\}\). The same construction provides standard errors for \(\widehat\psi_{a,\ell}^{\mathrm{DR}}(\tau;\bw)\), \(\widehat\mu_{a,\ell}^{\mathrm{DR}}(\tau;\bw)\), and \(\widehat\nu_{a,\ell}^{\mathrm{DR}}(\tau)\).
\end{corollary}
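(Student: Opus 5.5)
Corollary \ref{cor:crt_variance} follows from Theorem \ref{thm:crt_asymptotic_normality} once we show that the empirical variance of the estimated cluster-level influence functions is consistent: $\widehat\sigma_{\Delta,\ell}^2\overset{p}{\longrightarrow}\sigma_{\Delta,\ell}^2$. Slutsky's lemma then gives $(\widehat\Delta_\ell^{\mathrm{DR}}-\Delta_\ell)/\widehat{\operatorname{se}}\overset{d}{\longrightarrow}\mathrm{N}(0,1)$ and asymptotic $95\%$ coverage of the Wald interval. The oracle sample variance $\widetilde\sigma^2=(M-1)^{-1}\sum_i(\Phi_{\Delta,\ell,i}-\overline\Phi_{\Delta,\ell})^2$ converges to $\Ebb(\Phi_{\Delta,\ell,i}^2)$ by the weak law of large numbers over independent clusters. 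This uses $\Ebb\Phi_{\Delta,\ell,i}=0$, which holds because it is an influence function, and finite second moment, which the theorem assumes. It therefore suffices to show $M^{-1}\sum_i(\widehat\Phi_{\Delta,\ell,i}-\Phi_{\Delta,\ell,i})^2\overset{p}{\longrightarrow}0$, after which Cauchy--Schwarz handles the cross terms and the centering.

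To control $\widehat\Phi_{\Delta,\ell,i}-\Phi_{\Delta,\ell,i}$, I will work up the chain of maps defining it. By the delta-method formulas of Theorem \ref{thm:crt_asymptotic_normality}, $\Phi_{\Delta,\ell,i}$ is a fixed smooth function of $(\mu_{a,\ell},\nu_{a,\ell})$ and of $(\Phi_{\mu,a,\ell,i},\Phi_{\nu,a,\ell,i})$ for $a=0,1$. The denominators are bounded away from zero because $\nu_{a,\ell}(\tau)>0$, and $\widehat\mu,\widehat\nu$ are consistent by Theorem \ref{thm:crt_double_robustness}. In turn, $\Phi_{\mu}$ and $\Phi_{\nu}$ are bounded linear functionals, given by Stieltjes integrals and the product-integral derivative, of the influence processes $\Phi_{\Lambda^D,a,\ell,i}(\cdot)$ and $\Phi_{\Lambda^{R,\bw},a,\ell,i}(\cdot)$ on $[0,\tau]$. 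These functionals are evaluated at $S_{a,\ell}$ and $\Lambda^{R,\bw}_{a,\ell}$, whose estimates converge uniformly by the same double robustness argument. The local influence processes themselves consist of two parts:
\begin{itemize}
\item a cluster-level estimating-function term, which is a continuous function of the plug-in nuisances $\widehat K_{ij}^a,\widehat H_{ij}^a,\widehat U_{ij}^a,\widehat q_{ij}$ and the observed counting processes, divided by the limiting risk-set denominator;
\item a linear correction term for nuisance estimation, of the form $\mathbf G(t)^{\mathsf T}\boldsymbol\eta_i$, where $\boldsymbol\eta_i$ collects the cluster-level influence functions of the Cox and LWYY working-model estimators (coefficients and Breslow-type baselines).
\end{itemize}
I will express each estimated piece as $g(\bO_i;\widehat\theta)$, where $\widehat\theta\to\theta^\star$ uniformly. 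Under the regularity conditions in Appendix \ref{supp:sec:crt_variance_detailed}, each piece will satisfy a local Lipschitz bound $|g(\bO_i;\theta)-g(\bO_i;\theta^\star)|\le B(\bO_i)\|\theta-\theta^\star\|_\infty$ with $\Ebb B^2<\infty$. Such a bound should hold because $K$ and $H$ are bounded below, the $\xi_i^a$ are bounded by $\epsilon_\pi^{-1}$, the $\omega_i^\ell/n_i$ are bounded, and the counting processes have dominating square-integrable envelopes. Given the bound, $M^{-1}\sum_i(\widehat g_i-g_i)^2\le\|\widehat\theta-\theta^\star\|_\infty^2\,M^{-1}\sum_iB_i^2=o_p(1)$.

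The main obstacle is the nuisance-correction component. Its gradient $\mathbf G(t)$ is an expectation that must be replaced by a sample average involving $\widehat\theta$, and the working-model influence functions $\boldsymbol\eta_i$ contain martingale integrals with estimated baseline hazards and inverse information matrices. For these I plan to show that the class $\{g(\cdot;\theta):\theta\text{ near }\theta^\star\}$ is Glivenko--Cantelli with a square-integrable envelope. The natural route is to build these functions as sums and products of monotone processes in $t$ and Lipschitz functions of finite-dimensional parameters, then apply preservation theorems. The uniform law of large numbers then gives $\widehat{\mathbf G}\to\mathbf G$ uniformly and handles the $\widehat{\boldsymbol\eta}_i$. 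Because the cluster sizes $n_i$ may vary, the envelope needs a moment condition on $n_i$, e.g.\ $\Ebb n_i^2<\infty$ for $\ell=\mathrm{ind}$, and I will assume this is part of the appendix regularity conditions. The statements for $\widehat\psi_{a,\ell}$, $\widehat\mu_{a,\ell}$ and $\widehat\nu_{a,\ell}$ follow by the same argument applied at the corresponding stages of the chain.
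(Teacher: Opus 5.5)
Your proposal is correct and follows the route the paper takes implicitly. The paper only constructs the plug-in cluster-level influence function in Appendix~\ref{supp:sec:crt_variance_detailed} and takes its empirical variance over clusters; your oracle law of large numbers, the bound $M^{-1}\sum_{i}(\widehat\Phi_{\Delta,\ell,i}-\Phi_{\Delta,\ell,i})^2\overset{p}{\longrightarrow}0$, and Slutsky supply the consistency step it leaves unstated.

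One small correction: the Cox and LWYY nuisance models pool all participants, so their cluster-level influence contributions are sums over $j=1,\ldots,n_i$ for both targets, and the second-moment condition on $n_i$ is therefore needed for $\ell=\mathrm{clus}$ as well. This is covered because condition (iv) of Assumption~\ref{asm:crt_asymptotic_regularity} is imposed for each $\ell$.
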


\section{Simulation studies}
\label{sec:simulation}

We conducted simulation studies to assess the finite sample performance of the proposed estimators in individually-randomized trials and cluster-randomized trials. The operating characteristics were relative bias (RBias), calculated as \(100(\overline{\widehat\Delta}-\Delta)/\Delta\), Monte Carlo standard deviation (MCSD), average asymptotic standard error (ASE), and empirical coverage of nominal \(95\%\) confidence intervals (COV). The proposed doubly robust estimator was compared with an inverse probability of censoring weighted estimator and an outcome regression estimator. The inverse probability weighted estimator retained the weighted observed data contributions but omitted outcome augmentation, whereas the outcome regression estimator retained the fitted full data contributions without the inverse probability correction. Four nuisance specifications were considered: \(O1C1\), \(O1C0\), \(O0C1\), and \(O0C0\), where \(O1\) and \(O0\) indicate correct and misspecified outcome working models and \(C1\) and \(C0\) indicate correct and misspecified censoring working models.

All potential event histories were generated directly in continuous time. Weibull baseline functions were used only to generate exact terminal-event, censoring, and recurrent-event histories. Estimation remained semiparametric: the terminal-event and censoring nuisance functions were fitted by arm-specific Cox regression, and each recurrent-event type was fitted by an arm-specific LWYY marginal proportional-rate model. Breslow-type estimators were used for the baseline cumulative hazard and mean-rate functions \citep{cox1972regression,lin2000semiparametric}. Thus, correct specification concerned the covariate component of each working model; no parametric form was imposed on a fitted baseline function. The numerical grid was used only to evaluate the local Nelson--Aalen equations, product integrals, and Stieltjes integrals after the exact continuous-time event histories had been generated. Each setting used \(1000\) Monte Carlo replications. The asymptotic standard errors were calculated from the influence functions in Theorems \ref{thm:irt_asymptotic_normality} and \ref{thm:crt_asymptotic_normality}. The nuisance influence contributions comprised the Cox martingale terms for terminal-event and censoring regression, the complete LWYY mean-zero rate-residual terms for recurrent-event regression, and the Breslow baseline contributions for all nuisance models. The independent influence contribution was defined at the participant level for IRTs and at the cluster level for CRTs. Further implementation details are provided in Appendix \ref{supp:sec:simulation_settings}.

\subsection{Individually-randomized trials}
\label{subsec:simulation_irt}

Participants were randomized in equal numbers to intervention and control. The baseline covariate vector was \(\bZ_i=(Z_{i1},Z_{i2})^{\mathsf T}\), where \(Z_{i1}\sim\operatorname{Bernoulli}(0.5)\) and \(Z_{i2}\sim\operatorname{Uniform}(-1,1)\) were independent, and we defined \(\bh_i=(Z_{i1},Z_{i2},Z_{i1}Z_{i2})^{\mathsf T}\). For \(h\in\{D,C\}\), the conditional cumulative hazard under arm \(a\) was \(\Lambda_{h,i}^a(t)=\lambda_{h,a}t^{\rho_h}\exp\{\bbeta_{h,a}^{\mathsf T}\bh_i\}\). If \(E_{h,i}\sim\operatorname{Exp}(1)\), the corresponding exact time was generated as \(T_{h,i}^a=[E_{h,i}/\{\lambda_{h,a}\exp(\bbeta_{h,a}^{\mathsf T}\bh_i)\}]^{1/\rho_h}\), with \(T_{D,i}^a=D_i^a\) and \(T_{C,i}^a=C_i^a\). Terminal event and censoring times were conditionally independent given treatment and baseline covariates.

A shared participant frailty induced dependence between the two recurrent event types and serial dependence within each type. Specifically, \(V_i\sim\operatorname{Gamma}(2,0.5)\), where \(\Ebb(V_i)=1\) and \(\operatorname{Var}(V_i)=0.5\). Conditional on \(A_i=a\), \(\bZ_i\), and \(V_i\), recurrent event type \(k\) followed a nonhomogeneous Poisson process with intensity \(V_i\lambda_{R_k,a}\rho_{R_k}t^{\rho_{R_k}-1}\exp\{\bbeta_{R_k,a}^{\mathsf T}\bh_i\}\), for \(k=1,2\). The process was generated up to \(D_i^a\wedge\tau\) and was then observed only up to censoring. Equivalently, conditional on \(D_i^a\), \(V_i\), and the covariates, the number of type \(k\) events by \(D_i^a\wedge\tau\) was Poisson with mean \(V_i\lambda_{R_k,a}(D_i^a\wedge\tau)^{\rho_{R_k}}\exp\{\bbeta_{R_k,a}^{\mathsf T}\bh_i\}\), and conditional on that count, the event times were generated as \((D_i^a\wedge\tau)U^{1/\rho_{R_k}}\), where \(U\sim\operatorname{Uniform}(0,1)\). Because the same mean one frailty was used for both recurrent event types, the marginal conditional mean rates retained the LWYY proportional-rate structure used by the correct working models. The terminal event and recurrent event processes also shared prognostic measured covariates, and recurrent events were stopped by death. No unmeasured frailty was shared between death and recurrence, so the outcome-side working models labeled \(O1\) genuinely identified the conditional full data recurrent event increments.

The parameters specific to each process are summarized in Appendix Table \ref{supp:tab:sim_irt_dgp}. The censoring baseline scale was \(\lambda_C=\exp(c_{\tau,\gamma})\), where \(\gamma\in\{0.40,0.60\}\) was the target observed censoring proportion. The calibrated log-scales were \(c_{3,0.40}=-1.8149\), \(c_{3,0.60}=-1.1819\), \(c_{5,0.40}=-2.2582\), and \(c_{5,0.60}=-1.6129\). The event-type weight vector was \(\bw=(1,1)^{\mathsf T}\). Under \(O1\), the arm-specific Cox terminal-event model and the arm-specific LWYY recurrent-event models included all components of \(\bh_i\). Under \(O0\), each outcome model retained only \(Z_{i1}\). Under \(C1\), the arm-specific Cox censoring model included all components of \(\bh_i\), whereas under \(C0\) it retained only \(Z_{i1}\). The fitted baseline hazard and rate functions remained unrestricted in every nuisance scenario. We considered \(n\in\{1600,3200\}\), \(\tau\in\{3,5\}\), and target censoring proportions of \(40\%\) and \(60\%\). Since \(V_i\) was independent of \(D_i^a\) conditional on \(\bZ_i\) and had mean one, the true component estimands were computed from \(\nu_a(\tau)=\int_0^\tau\Ebb\{S_D^a(t\mid\bZ_i)\}\dd t\) and \(\mu_a(\tau;\bw)=\sum_{k=1}^2w_k\int_0^\tau\Ebb[S_D^a(t\mid\bZ_i)\lambda_{R_k}^a(t\mid\bZ_i)]\dd t\). The expectation over \(Z_{i1}\) was evaluated exactly and the expectations over \(Z_{i2}\) and time were evaluated by Gauss--Legendre quadrature. The true treatment contrasts were \(0.5419\) at \(\tau=3\) and \(0.5814\) at \(\tau=5\). Table \ref{tab:sim_irt_semiparametric} reports the results for \(n=1600\), and results for \(n=3200\) are provided in Appendix Table \ref{supp:tab:irt_n3200}.

\begin{table}[tbp]
\caption{Monte Carlo performance for the IRT while-alive treatment contrast with \(n=1600\). Cens. is the mean observed censoring percentage. \(O1\) and \(C1\) denote correct outcome-side and censoring-side working models, respectively. RBias is relative bias in percent, MCSD is the Monte Carlo standard deviation, ASE is the average asymptotic standard error, and Cov is empirical coverage in percent of the nominal \(95\%\) Wald interval.}
\label{tab:sim_irt_semiparametric}
\centering
\setlength{\tabcolsep}{2.2pt}
\renewcommand{\arraystretch}{1.04}
\begin{adjustbox}{max width=\textwidth}
\begin{tabular}{@{}cccc*{12}{r}@{}}
\toprule
& & & & \multicolumn{4}{c}{\textsc{dr}} & \multicolumn{4}{c}{\textsc{ipcw}} & \multicolumn{4}{c}{\textsc{or}} \\
\cmidrule(lr){5-8}\cmidrule(lr){9-12}\cmidrule(lr){13-16}
\(\tau\) & Cens. & \(O,C\) & \(\Delta\) & RBias & MCSD & ASE & Cov & RBias & MCSD & ASE & Cov & RBias & MCSD & ASE & Cov \\
\midrule
3 & 40.0 & O1C1 & 0.542 & +0.1 & 0.061 & 0.060 & 95.1 & +0.1 & 0.062 & 0.061 & 94.5 & +0.3 & 0.060 & 0.059 & 95.6 \\
 &  & O1C0 & 0.542 & +0.1 & 0.060 & 0.059 & 95.7 & -5.9 & 0.056 & 0.055 & 89.7 & +0.3 & 0.060 & 0.059 & 95.6 \\
 &  & O0C1 & 0.542 & +0.1 & 0.062 & 0.061 & 94.4 & +0.1 & 0.062 & 0.061 & 94.5 & -5.8 & 0.056 & 0.055 & 90.2 \\
 &  & O0C0 & 0.542 & -6.0 & 0.056 & 0.055 & 90.0 & -5.9 & 0.056 & 0.055 & 89.7 & -5.8 & 0.056 & 0.055 & 90.2 \\
\addlinespace[2pt]
3 & 60.0 & O1C1 & 0.542 & -0.9 & 0.069 & 0.069 & 95.7 & -0.6 & 0.071 & 0.071 & 95.6 & -0.3 & 0.067 & 0.067 & 95.4 \\
 &  & O1C0 & 0.542 & -0.7 & 0.067 & 0.067 & 95.8 & -11.5 & 0.057 & 0.057 & 78.1 & -0.3 & 0.067 & 0.067 & 95.4 \\
 &  & O0C1 & 0.542 & -0.5 & 0.071 & 0.071 & 96.0 & -0.6 & 0.071 & 0.071 & 95.6 & -11.2 & 0.057 & 0.057 & 79.0 \\
 &  & O0C0 & 0.542 & -11.5 & 0.057 & 0.057 & 78.2 & -11.5 & 0.057 & 0.057 & 78.1 & -11.2 & 0.057 & 0.057 & 79.0 \\
\addlinespace[2pt]
5 & 40.1 & O1C1 & 0.581 & -0.4 & 0.061 & 0.059 & 94.2 & -0.3 & 0.062 & 0.060 & 93.7 & -0.1 & 0.060 & 0.058 & 94.0 \\
 &  & O1C0 & 0.581 & -0.3 & 0.060 & 0.058 & 94.1 & -7.2 & 0.054 & 0.054 & 85.7 & -0.1 & 0.060 & 0.058 & 94.0 \\
 &  & O0C1 & 0.581 & -0.3 & 0.062 & 0.060 & 93.9 & -0.3 & 0.062 & 0.060 & 93.7 & -7.0 & 0.054 & 0.054 & 86.0 \\
 &  & O0C0 & 0.581 & -7.2 & 0.054 & 0.054 & 85.3 & -7.2 & 0.054 & 0.054 & 85.7 & -7.0 & 0.054 & 0.054 & 86.0 \\
\addlinespace[2pt]
5 & 60.0 & O1C1 & 0.581 & -0.4 & 0.074 & 0.070 & 92.3 & -0.3 & 0.076 & 0.071 & 92.3 & +0.2 & 0.070 & 0.066 & 93.4 \\
 &  & O1C0 & 0.581 & -0.2 & 0.070 & 0.066 & 93.0 & -12.4 & 0.058 & 0.055 & 71.8 & +0.2 & 0.070 & 0.066 & 93.4 \\
 &  & O0C1 & 0.581 & -0.2 & 0.076 & 0.072 & 92.4 & -0.3 & 0.076 & 0.071 & 92.3 & -12.0 & 0.059 & 0.056 & 72.6 \\
 &  & O0C0 & 0.581 & -12.4 & 0.058 & 0.055 & 71.4 & -12.4 & 0.058 & 0.055 & 71.8 & -12.0 & 0.059 & 0.056 & 72.6 \\
\bottomrule
\end{tabular}
\end{adjustbox}
\end{table}

Table \ref{tab:sim_irt_semiparametric} supports the componentwise robustness result. When at least one nuisance side was correct, the absolute relative bias of the doubly robust estimator did not exceed \(0.9\%\). The asymptotic standard error closely tracked the Monte Carlo standard deviation in most settings. For example, under \(O0C1\), \(\tau=3\), and approximately \(60\%\) censoring, the MCSD and ASE were \(0.071\) and \(0.071\), respectively, with \(96.0\%\) coverage. The largest discrepancy occurred under \(O1C1\), \(\tau=5\), and approximately \(60\%\) censoring, where the MCSD and ASE were \(0.074\) and \(0.070\), respectively, with \(92.3\%\) coverage. Across settings in which at least one nuisance side was correct, coverage ranged from \(92.3\%\) to \(96.0\%\). The singly robust estimators failed along their unprotected nuisance pathway. Under censoring-model misspecification, the absolute relative bias of IPCW ranged from \(5.9\%\) to \(12.4\%\), whereas under outcome-model misspecification, the absolute relative bias of outcome regression ranged from \(5.8\%\) to \(12.0\%\). When both nuisance sides were misspecified, the absolute relative bias of the doubly robust estimator increased from \(6.0\%\) under \(40\%\) censoring at \(\tau=3\) to \(12.4\%\) under \(60\%\) censoring at \(\tau=5\), with coverage decreasing from \(90.0\%\) to \(71.4\%\). Results for \(n=3200\) in Appendix Table \ref{supp:tab:irt_n3200} showed the same qualitative pattern, smaller Monte Carlo variability, and absolute doubly robust bias below \(0.4\%\) whenever at least one nuisance side was correct.

\subsection{Cluster-randomized trials}
\label{subsec:simulation_crt}

We next evaluated the proposed estimators in cluster-randomized trials with informative cluster size, continuous terminal event and censoring times, two recurrent event types, and dependence both within participants and within clusters. Clusters were assigned in equal numbers to intervention and control. Cluster size was generated from the discrete uniform distribution on \(\{20,\ldots,80\}\), and we defined \(n_i^\star=(n_i-50)/30\). A binary cluster-level covariate \(L_i\), a binary individual-level covariate \(Z_{ij1}\), and a continuous individual-level covariate \(Z_{ij2}\) were generated according to \(L_i\sim\operatorname{Bernoulli}[\operatorname{expit}\{-0.20+0.70n_i^\star\}]\), \(Z_{ij1}\mid(n_i,L_i)\sim\operatorname{Bernoulli}[\operatorname{expit}\{-0.15+0.40L_i+0.55n_i^\star\}]\), and \(Z_{ij2}=\max[-2,\min\{2,\,0.25n_i^\star+0.20L_i+0.25Z_{ij1}+0.75\varepsilon_{ij}\}]\), where \(\varepsilon_{ij}\sim\mathrm{N}(0,1)\). The covariate vector entering the data-generating models was \(\bh_{ij}=(L_i,Z_{ij1},Z_{ij2},n_i^\star,n_i^\star Z_{ij2},Z_{ij1}Z_{ij2})^\mathsf{T}\). Thus, cluster size affected both baseline covariate distributions and the event processes directly, and it modified the effect of the continuous individual-level covariate. Conditional on \(\bh_{ij}\), the marginal terminal event distribution under arm \(a\) was Weibull proportional hazards, with \(\Pbb(D_{ij}^a>t\mid\bh_{ij})=\exp[-\lambda_{D,a}t^{\rho_D}\exp\{\bbeta_{D,a}^\mathsf{T}\bh_{ij}\}]\). Censoring was generated independently of the outcome processes conditional on \(\bh_{ij}\), with the analogous survival function determined by \(\lambda_C\), \(\rho_C\), and \(\bbeta_{C,a}\). For recurrent-event type \(k\), exact event times were generated from a nonhomogeneous Poisson process with conditional intensity \(\lambda_{R_k,a}\rho_{R_k}t^{\rho_{R_k}-1}\exp\{\bbeta_{R_k,a}^\mathsf{T}\bh_{ij}\}V_iV_{ij}\), stopped at \(D_{ij}^a\wedge\tau\). Here \(V_i\sim\operatorname{Gamma}(5,0.2)\) was shared by all participants in cluster \(i\), and \(V_{ij}\sim\operatorname{Gamma}(1/0.35,0.35)\) was shared by the two recurrent-event types for participant \(j\). Both frailties had mean one. They induced within-cluster, within-participant, and cross-type recurrent-event dependence while preserving the marginal LWYY proportional-rate structure after integration over the frailties.

Within-cluster dependence of terminal-event times was introduced through a Gaussian copula. Specifically, \(U_{ij}^D=\Phi\{\sqrt{0.10}B_i+\sqrt{0.90}\epsilon_{ij}^D\}\), where \(B_i\) and \(\epsilon_{ij}^D\) were independent standard normal variables, and \(D_{ij}^a\) was obtained by applying the inverse Weibull proportional-hazards survival function to \(U_{ij}^D\). This construction preserved each participant's marginal Cox model while inducing latent correlation \(0.10\) between terminal-event times from the same cluster. The terminal-event copula, recurrent-event frailties, and censoring variables were mutually independent conditional on the observed baseline covariates. Exact continuous event times were generated before introducing a grid of \(200\) points used only to evaluate the local Nelson--Aalen equations, product integrals, and Stieltjes integrals. The baseline scales, shape parameters, and regression coefficients are summarized in Appendix Table \ref{supp:tab:crt_sim_dgp}.

The censoring baseline scale was calibrated separately for each horizon and target censoring proportion. For \((\tau,\gamma)=(3,0.40),(3,0.60),(5,0.40),(5,0.60)\), the corresponding values of \(\log(\lambda_C)\) were \(-1.8372\), \(-1.2210\), \(-2.2867\), and \(-1.6565\). The resulting mean observed censoring proportions were approximately \(40\%\) and \(60\%\). The nuisance models were fitted semiparametrically. Terminal-event and censoring nuisance functions were estimated by arm-specific Cox regression with Breslow baseline hazards, and the two recurrent-event nuisance functions were estimated by arm-specific LWYY marginal proportional-rate regression with Breslow baseline mean-rate functions. Under \(O1\), all six components of \(\bh_{ij}\) were included. Under \(O0\), only the binary cluster-level covariate \(L_i\) was retained, thereby omitting both individual-level covariates, cluster size, and the two interaction terms. The corresponding rule was used for the censoring model under \(C1\) and \(C0\). 

We considered \(M\in\{50,80\}\), \(\tau\in\{3,5\}\), censoring targets of \(40\%\) and \(60\%\), and \(1000\) Monte Carlo replications per setting. Confidence intervals used a \(t_{M-2}\) critical value. Deterministic true values were obtained by summing over the discrete cluster size and binary covariate distributions and applying Gaussian quadrature to the continuous covariate. The cluster-average treatment effects were \(-0.309\) at \(\tau=3\) and \(-0.289\) at \(\tau=5\); the corresponding individual-average effects were \(-0.428\) and \(-0.403\). Their separation confirms that cluster size was informative. Table \ref{tab:sim_crt_semiparametric} reports the cluster-average treatment effect for \(M=50\). Results for \(M=80\) and all individual-average effects are given in Appendix Tables \ref{supp:tab:crt_individual_M50}--\ref{supp:tab:crt_cluster_M80}.

\begin{table}[tbp]
\caption{Monte Carlo performance for the cluster-average while-alive treatment contrast with \(M=50\). Cens. is the mean observed censoring percentage. \(O1\) and \(C1\) denote correct outcome-side and censoring-side working models, respectively. RBias is relative bias in percent, MCSD is the Monte Carlo standard deviation, ASE is the average cluster influence-function standard error, and Cov is empirical coverage in percent of the nominal \(95\%\) interval based on a \(t_{M-2}\) critical value.}
\label{tab:sim_crt_semiparametric}
\centering
\setlength{\tabcolsep}{2.2pt}
\renewcommand{\arraystretch}{1.04}
\begin{adjustbox}{max width=\textwidth}
\begin{tabular}{@{}cccc*{12}{r}@{}}
\toprule
& & & & \multicolumn{4}{c}{\textsc{dr}}
& \multicolumn{4}{c}{\textsc{ipcw}}
& \multicolumn{4}{c}{\textsc{or}} \\
\cmidrule(lr){5-8}\cmidrule(lr){9-12}\cmidrule(lr){13-16}
\(\tau\) & Cens. & \(O,C\) & \(\Delta_{\mathrm{clus}}\)
& RBias & MCSD & ASE & Cov
& RBias & MCSD & ASE & Cov
& RBias & MCSD & ASE & Cov \\
\midrule
3 & 40.0 & O1C1 & -0.309 & +0.9 & 0.089 & 0.082 & 93.9 & +1.0 & 0.093 & 0.088 & 93.8 & +1.5 & 0.089 & 0.083 & 93.6 \\
 &  & O1C0 & -0.309 & +0.9 & 0.089 & 0.082 & 93.8 & -6.3 & 0.087 & 0.083 & 91.9 & +1.5 & 0.089 & 0.083 & 93.6 \\
 &  & O0C1 & -0.309 & +1.1 & 0.093 & 0.086 & 94.1 & +1.0 & 0.093 & 0.088 & 93.8 & +24.8 & 0.099 & 0.090 & 89.1 \\
 &  & O0C0 & -0.309 & -6.4 & 0.087 & 0.082 & 91.8 & -6.3 & 0.087 & 0.083 & 91.9 & +24.8 & 0.099 & 0.090 & 89.1 \\
\addlinespace[2pt]
3 & 60.0 & O1C1 & -0.309 & +1.3 & 0.089 & 0.084 & 94.2 & +1.7 & 0.094 & 0.090 & 93.7 & +2.1 & 0.089 & 0.085 & 94.3 \\
 &  & O1C0 & -0.309 & +1.5 & 0.089 & 0.084 & 94.2 & -11.0 & 0.084 & 0.081 & 88.9 & +2.1 & 0.089 & 0.085 & 94.3 \\
 &  & O0C1 & -0.309 & +1.6 & 0.093 & 0.089 & 93.6 & +1.7 & 0.094 & 0.090 & 93.7 & +19.0 & 0.094 & 0.088 & 92.1 \\
 &  & O0C0 & -0.309 & -11.5 & 0.083 & 0.080 & 90.1 & -11.0 & 0.084 & 0.081 & 88.9 & +19.0 & 0.094 & 0.088 & 92.1 \\
\addlinespace[2pt]
5 & 40.0 & O1C1 & -0.289 & -1.7 & 0.085 & 0.079 & 91.2 & -1.4 & 0.088 & 0.084 & 92.7 & -0.8 & 0.085 & 0.080 & 92.1 \\
 &  & O1C0 & -0.289 & -1.7 & 0.085 & 0.079 & 91.4 & -7.9 & 0.084 & 0.079 & 90.3 & -0.8 & 0.085 & 0.080 & 92.1 \\
 &  & O0C1 & -0.289 & -1.3 & 0.088 & 0.083 & 93.4 & -1.4 & 0.088 & 0.084 & 92.7 & +24.3 & 0.095 & 0.086 & 89.7 \\
 &  & O0C0 & -0.289 & -8.0 & 0.083 & 0.079 & 90.4 & -7.9 & 0.084 & 0.079 & 90.3 & +24.3 & 0.095 & 0.086 & 89.7 \\
\addlinespace[2pt]
5 & 60.0 & O1C1 & -0.289 & -0.9 & 0.089 & 0.081 & 93.5 & -0.2 & 0.091 & 0.086 & 93.5 & +0.4 & 0.089 & 0.082 & 93.0 \\
 &  & O1C0 & -0.289 & -0.6 & 0.088 & 0.081 & 93.5 & -11.8 & 0.081 & 0.078 & 88.9 & +0.4 & 0.089 & 0.082 & 93.0 \\
 &  & O0C1 & -0.289 & -0.8 & 0.091 & 0.085 & 93.3 & -0.2 & 0.091 & 0.086 & 93.5 & +19.2 & 0.092 & 0.085 & 93.2 \\
 &  & O0C0 & -0.289 & -12.8 & 0.081 & 0.077 & 87.8 & -11.8 & 0.081 & 0.078 & 88.9 & +19.2 & 0.092 & 0.085 & 93.2 \\
\bottomrule
\end{tabular}
\end{adjustbox}
\end{table}

Table \ref{tab:sim_crt_semiparametric} shows the expected componentwise robustness pattern. Whenever either nuisance side was correctly specified, the absolute relative bias of the doubly robust estimator was at most \(1.7\%\). The analytic variance formula reproduced the Monte Carlo variability well. Across these settings, the ratio of MCSD to ASE ranged from \(1.04\) to \(1.10\), the largest absolute difference between MCSD and ASE was \(0.008\), and empirical coverage ranged from \(91.2\%\) to \(94.2\%\). For example, under \(O1C1\), \(\tau=3\), and approximately \(40\%\) censoring, the MCSD and ASE were \(0.089\) and \(0.082\), respectively, with \(93.9\%\) coverage. Under \(O0C1\), \(\tau=5\), and approximately \(60\%\) censoring, the MCSD and ASE were \(0.091\) and \(0.085\), with \(93.3\%\) coverage. The singly robust estimators failed along their unprotected nuisance pathway. Under \(O1C0\), the absolute relative bias of IPCW ranged from \(6.3\%\) to \(11.8\%\), with coverage between \(88.9\%\) and \(91.9\%\). Under \(O0C1\), the absolute relative bias of outcome regression ranged from \(19.0\%\) to \(24.8\%\). Although some outcome-regression intervals retained moderate coverage because sampling variability remained large relative to the bias, the point estimates were clearly displaced from the true contrast. When both nuisance sides were misspecified, the absolute relative bias of the doubly robust estimator ranged from \(6.4\%\) to \(12.8\%\), and coverage decreased to between \(87.8\%\) and \(91.8\%\). The larger distortions under approximately \(60\%\) censoring illustrate how heavier censoring amplifies sensitivity to simultaneous nuisance misspecification. The additional results in Appendix Tables \ref{supp:tab:crt_cluster_M80}, \ref{supp:tab:crt_individual_M50}, and \ref{supp:tab:crt_individual_M80} showed the same qualitative behavior with \(M=80\) and for the individual-average estimand. The individual-average while-alive estimators exhibited the same componentwise robustness pattern while targeting the more negative individual-level contrasts induced by informative cluster size.

\section{Applications to data examples}
\label{sec:data_illustration}

We applied the proposed methods to one individually-randomized trial and one cluster-randomized trial to illustrate their usage under the two randomization schemes considered in this paper.

\subsection{HF-ACTION individually-randomized trial}
\label{subsec:hfaction}

The Heart Failure: A Controlled Trial Investigating Outcomes of Exercise Training (HF-ACTION) study was a multicenter, parallel-arm randomized trial designed to evaluate the efficacy and safety of aerobic exercise training in medically stable outpatients with chronic heart failure and reduced left ventricular ejection fraction. The original trial randomized 2331 participants at 82 centers in the United States, Canada, and France to usual care alone or usual care plus an exercise program consisting of 36 supervised sessions followed by home-based training. Its primary clinical outcome was the time to the first occurrence of all-cause death or all-cause hospitalization \citep{o2009efficacy}. In the protocol specified analysis, exercise training was associated with a nonsignificant reduction in this first-event composite, with a hazard ratio of 0.93 (95\% confidence interval, 0.84 to 1.02; \(p=0.13\)). A prespecified covariate-adjusted analysis yielded a hazard ratio of 0.89 (95\% confidence interval, 0.81 to 0.99; \(p=0.03\)) \citep{o2009efficacy}. A first event composite is clinically useful but does not use hospitalizations occurring after the first admission. This omission is consequential in chronic heart failure, where repeated admissions account for a substantial part of disease burden. Moreover, death permanently terminates the recurrent hospitalization process, whereas longer survival creates additional opportunity to experience hospitalization. We therefore reanalyzed HF-ACTION using the while-alive estimand, taking all-cause hospitalization as the recurrent loss event and all-cause death as the terminal event. The resulting estimand measures the expected number of all-cause hospitalizations per year alive and complements the original first-event hazard-ratio analysis rather than targeting the same treatment effect parameter.

The available event-history data contained 2130 randomized participants, comprising 1070 assigned to usual care and 1060 assigned to exercise training. The treatment groups were similar with respect to the available baseline characteristics. The mean cardiopulmonary exercise-test duration was 9.9 minutes (standard deviation 4.0) under usual care and 9.7 minutes (standard deviation 3.8) under exercise training, and the mean left ventricular ejection fraction was 25.2\% in both groups. Ischemic etiology was present in 51.0\% and 51.9\%, atrial fibrillation in 20.5\% and 21.1\%, and depression in 22.1\% and 20.3\% of the usual-care and exercise-training groups, respectively. By four years, the data contained 1659 all-cause hospitalizations over 2674.5 observed person-years in the usual-care group and 1610 hospitalizations over 2659.7 observed person-years in the exercise-training group, corresponding to crude hospitalization rates of 62.0 and 60.5 per 100 person-years. At least one hospitalization occurred in 668 usual-care participants (62.4\%) and 629 exercise-training participants (59.3\%); 514 (48.0\%) and 496 (46.8\%), respectively, experienced at least two hospitalizations. There were 183 deaths in the usual-care group and 167 in the exercise-training group, corresponding to crude mortality rates of 6.84 and 6.28 per 100 person-years. 

We set \(K=1\) and \(\bw=(1)\), so that \(N_i^{a,\bw}(t)\) was the cumulative number of all-cause hospitalizations by time \(t\). The primary horizon was four years, and secondary estimates were reported at 0.5, 1, 2, and 3 years. To show the evolution of the estimand, we additionally evaluated the treatment-specific rates and treatment contrast on a dense grid from 0.5 to 4 years. We used the doubly robust estimator in \eqref{eq:irt_psi_delta_DR}. The arm-specific terminal-event and censoring nuisance functions were fitted using Cox working models, and the recurrent hospitalization nuisance function was fitted using an LWYY marginal proportional rate working model. A common, prespecified adjustment set was used in all nuisance models: standardized cardiopulmonary exercise-test duration, standardized left ventricular ejection fraction, ischemic etiology, atrial fibrillation, depression, and indicators of missing cardiopulmonary exercise-test duration and left ventricular ejection fraction. The two continuous variables were median imputed before standardization. This adjustment set closely parallels the prognostic factors emphasized in the original HF-ACTION analysis, which included exercise-test duration, left ventricular ejection fraction, depression score, atrial fibrillation or flutter, and heart-failure etiology \citep{o2009efficacy}. The treatment probability was fixed at \(1/2\) according to the randomized design and was not estimated. Standard errors were calculated from the participant-level influence function in Theorem \ref{thm:irt_asymptotic_normality}, with the nuisance estimation contributions and transformations described in Section \ref{sec:irt_asymptotics}. Wald confidence intervals were formed according to Corollary \ref{cor:irt_variance}. Intervals along the dense curves are pointwise 95\% confidence intervals.

Figure \ref{fig:hfaction_while_alive} and Table \ref{supp:tab:hfaction_while_alive} in Appendix \ref{supp:sec:data_analysis} show that the estimated treatment contrast remained close to zero throughout follow-up. None of the five reported confidence intervals excluded zero. At one year, the estimated rate was slightly higher under exercise training, with a difference of 0.030 hospitalizations per year alive (95\% confidence interval, -0.060 to 0.120). At two, three, and four years, the point estimates favored exercise training, but the estimated differences remained modest and imprecise. At four years, the estimated while-alive hospitalization rates were 0.551 under usual care and 0.526 under exercise training, giving \(\widehat\Delta(4;\bw)=-0.025\) hospitalizations per year alive (95\% confidence interval, -0.070 to 0.020; \(p=0.27\)). Equivalently, the point estimate corresponds to approximately 2.5 fewer all-cause hospitalizations per 100 years alive under exercise training. The component estimates at four years aid interpretation of the ratio. The estimated mean cumulative hospitalization burden was 1.930 under usual care and 1.865 under exercise training, whereas the estimated restricted mean survival time was 3.504 and 3.549 years, respectively. Thus, the lower exercise-training rate reflected both a modestly smaller hospitalization numerator and a modestly larger time-alive denominator. The confidence interval nevertheless included no effect and clinically plausible effects in either direction.

\begin{figure}[tbp]
\centering
\textbf{(a)}\\[-2pt]
\includegraphics[width=0.5\textwidth]{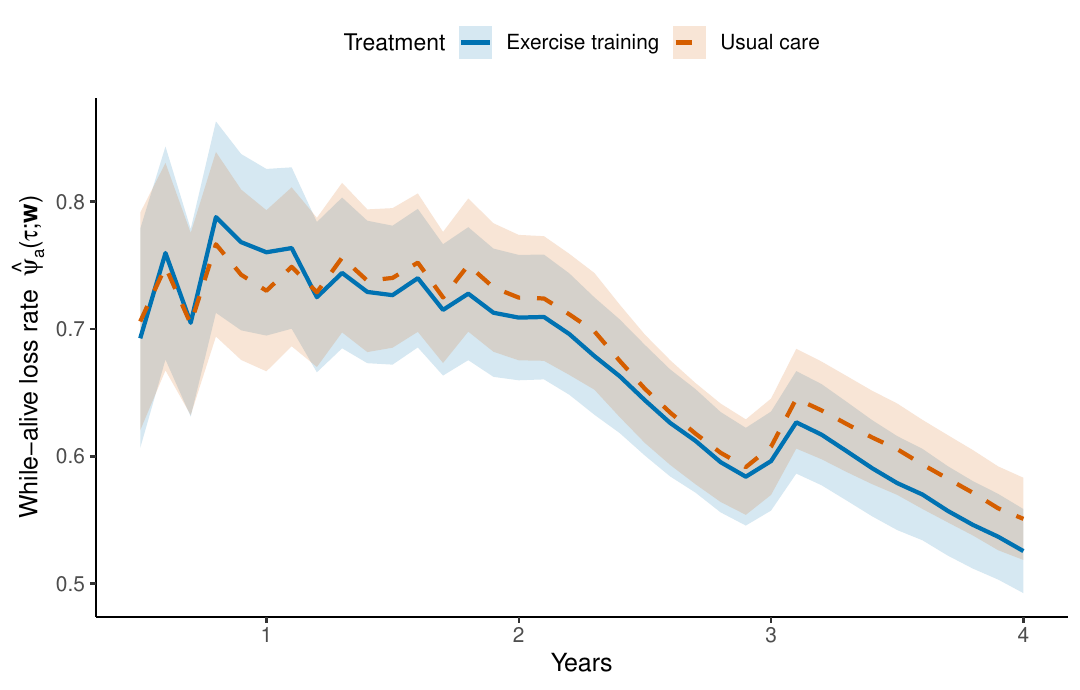}\\[4pt]
\textbf{(b)}\\[-2pt]
\includegraphics[width=0.5\textwidth]{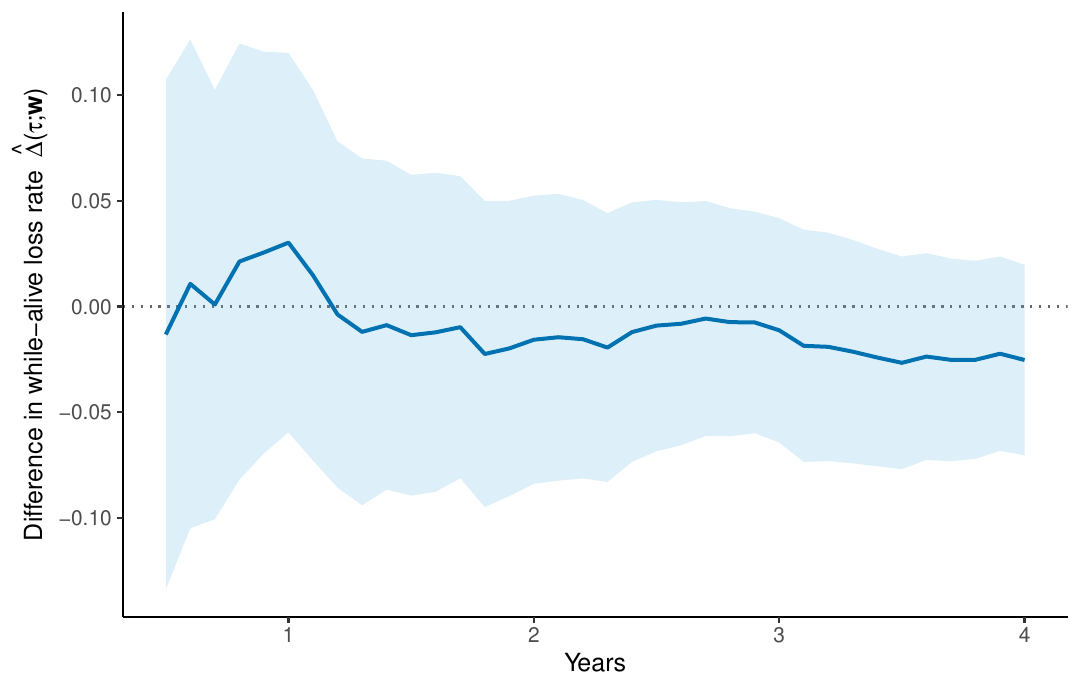}
\caption{Horizon-specific doubly robust estimates in the HF-ACTION analysis. Panel (a) shows the treatment-specific while-alive all-cause hospitalization rates, \(\widehat\psi_a(\tau;\bw)\). Panel (b) shows the exercise-training-minus-usual-care contrast, \(\widehat\Delta(\tau;\bw)\); negative values indicate a lower hospitalization burden per year alive under exercise training. Shaded regions are pointwise 95\% confidence intervals.}
\label{fig:hfaction_while_alive}
\end{figure}

These findings illustrate the distinction between the while-alive estimand and the original time-to-first-event estimand. The original covariate-adjusted analysis suggested a modest reduction in the hazard of first all-cause death or hospitalization, whereas the present analysis found no statistically clear reduction in the total all-cause hospitalization burden per year alive. These findings are not contradictory because the estimands differ in outcome aggregation, effect scale, and treatment of repeated hospitalizations. The original hazard ratio describes the instantaneous relative rate of a first composite event among participants still event free, whereas the while-alive contrast incorporates every observed hospitalization, treats death as a terminal event, and reports an absolute difference in hospitalization burden per unit time alive. These analyses suggest that any clinical benefit of exercise training was modest, with no strong evidence in the available study of a sustained reduction in recurrent all-cause hospitalization burden relative to survival time.

\subsection{STRIDE cluster-randomized trial}
\label{subsec:stride}

The Strategies to Reduce Injuries and Develop Confidence in Elders (STRIDE) study was a pragmatic, parallel-arm cluster-randomized trial designed to evaluate a patient-centered multifactorial strategy for preventing fall injuries in community-dwelling older adults at increased risk of falling. A total of 86 primary care practices within 10 health care systems were randomized, with 43 practices assigned to a nurse-delivered intervention and 43 to enhanced usual care. The intervention combined structured assessment of modifiable fall-risk factors with an individualized prevention plan developed by a trained falls care manager in collaboration with the participant and the participant's clinicians. The original trial enrolled 5451 participants who were at least 70 years of age. For the secondary outcome of first participant-reported fall injury, the corresponding rates were 25.6 and 28.6 per 100 person-years, with a hazard ratio of 0.90 (95\% confidence interval, 0.83 to 0.99; \(p=0.004\)) \citep{bhasin2020randomized}. The original time-to-first-event analyses answer whether the intervention delayed a participant's first fall injury, but they do not use injuries occurring after that first event. Recurrent injuries are clinically relevant in this older population, and death permanently terminates the period during which further injuries can occur. A while-alive analysis therefore provides a complementary summary by comparing the total recurrent fall-injury burden per unit time alive. The cluster-randomized design introduces an additional estimand distinction. The while-alive individual-average estimand gives equal weight to each participant and describes the effect for the average older adult, whereas a cluster-average while-alive first averages within each practice and then gives equal weight to practices, describing the effect for the average implementation unit. These targets can differ when practice size is associated with prognosis, recurrent injury burden, or heterogeneity of the treatment effect. This distinction is particularly relevant in STRIDE because the intervention was delivered through primary care practices and practice sizes varied substantially. The supplied event-history data contained all 5451 randomized participants, with 2649 participants in enhanced-usual-care practices and 2802 in intervention practices. Baseline characteristics were similar between treatment groups. The mean practice size was 61.6 participants (standard deviation 27.4; range 10 to 129) under enhanced usual care and 65.2 participants (standard deviation 37.6; range 10 to 199) under the intervention. The wide practice-size distribution reinforces the need to distinguish participant-weighted and practice-weighted targets. By three years, 2266 recurrent fall injuries were recorded over 5972.1 person-years in the enhanced-usual-care group and 2235 over 6262.2 person-years in the intervention group, giving crude recurrent-event rates of 37.9 and 35.7 per 100 person-years. At least one recorded fall injury occurred in 1224 enhanced-usual-care participants (46.2\%) and 1202 intervention participants (42.9\%); 537 (20.3\%) and 521 (18.6\%), respectively, experienced at least two injuries. There were 106 deaths under enhanced usual care and 117 under the intervention, corresponding to crude mortality rates of 1.77 and 1.87 per 100 person-years. Observed follow-up ended before three years for 78.9\% and 78.4\% of participants, respectively, making adjustment for right censoring important at the later horizons. 

We set \(K=1\) and \(\bw=(1)\), so that \(N_{ij}^{a,\bw}(t)\) was the cumulative number of recorded fall injuries by time \(t\). The primary horizon was three years, and the estimates were reported at 0.5, 1, 1.5, and 2 years. We also evaluated the arm-specific rates and treatment contrasts on a dense grid from 0.5 to 3 years. For each horizon and \(\ell\in\{\mathrm{ind},\mathrm{clus}\}\), we estimated \(\psi_{a,\ell}(\tau;\bw)\) and the intervention minus enhanced-usual-care contrast \(\Delta_{\ell}(\tau;\bw)\) using the doubly robust procedure in Section \ref{sec:crt_estimand}. The individual-level analysis used \(\omega_i^{\mathrm{ind}}=n_i\), so larger practices contributed in proportion to their number of participants. The cluster-level analysis used \(\omega_i^{\mathrm{clus}}=1\), so every practice contributed equally regardless of size. The same fitted nuisance functions were used for both targets, and only the target-defining cluster weights differed. The terminal-event and censoring nuisance functions were fitted using arm-specific marginal Cox working models. The recurrent fall-injury nuisance function was fitted using an arm-specific marginal LWYY proportional rate working model. A common prespecified adjustment set was used throughout: health-system site indicators, standardized age, female sex, White race, Hispanic ethnicity, standardized number of chronic conditions, and standardized practice size. Health-system indicators reflected the stratified trial structure, while practice size was included because it may be prognostic and is directly relevant to informative cluster size. The cluster treatment probability was fixed at \(1/2\) by design. Standard errors were calculated from the cluster-level influence function in Theorem \ref{thm:crt_asymptotic_normality}, including nuisance estimation contributions and the transformations described in Section \ref{sec:crt_asymptotics}. Confidence intervals used a \(t_{84}\) critical value, corresponding to \(M-2\) degrees of freedom. Intervals along the dense curves are pointwise 95\% confidence intervals.

\begin{table}[tbp]
\caption{Doubly robust estimates of the while-alive recurrent fall-injury rate in the STRIDE analysis. Rates are expected recorded fall injuries per year alive. The contrast is intervention minus enhanced usual care, so a negative value indicates a lower recurrent fall-injury burden per year alive under the intervention.}
\label{tab:stride_while_alive}
\centering
\setlength{\tabcolsep}{4.5pt}
\renewcommand{\arraystretch}{1.12}
\begin{adjustbox}{max width=\textwidth}
\begin{tabular}{ccccc}
\toprule
Horizon, years
& Enhanced usual care, \(\widehat\psi_0\) (95\% CI)
& Intervention, \(\widehat\psi_1\) (95\% CI)
& Difference, \(\widehat\Delta\) (95\% CI)
& \(p\)-value \\
\midrule
\multicolumn{5}{l}{\textit{While-alive individual-average estimand, \(\ell=\mathrm{ind}\)}}\\
0.5 & 0.327 (0.300, 0.353) & 0.285 (0.248, 0.322) & -0.042 (-0.089, 0.005) & 0.079 \\
1.0 & 0.355 (0.333, 0.377) & 0.311 (0.283, 0.339) & -0.044 (-0.078, -0.009) & 0.014 \\
1.5 & 0.372 (0.354, 0.391) & 0.331 (0.300, 0.361) & -0.041 (-0.077, -0.006) & 0.024 \\
2.0 & 0.385 (0.368, 0.401) & 0.349 (0.321, 0.376) & -0.036 (-0.067, -0.005) & 0.022 \\
3.0 & 0.379 (0.356, 0.402) & 0.371 (0.344, 0.399) & -0.007 (-0.043, 0.028) & 0.674 \\
\addlinespace[4pt]
\multicolumn{5}{l}{\textit{While-alive cluster-average estimand, \(\ell=\mathrm{clus}\)}}\\
0.5 & 0.312 (0.281, 0.344) & 0.293 (0.250, 0.337) & -0.019 (-0.072, 0.034) & 0.483 \\
1.0 & 0.339 (0.313, 0.365) & 0.317 (0.285, 0.348) & -0.022 (-0.061, 0.016) & 0.252 \\
1.5 & 0.357 (0.334, 0.380) & 0.330 (0.297, 0.364) & -0.027 (-0.065, 0.011) & 0.167 \\
2.0 & 0.372 (0.349, 0.396) & 0.348 (0.318, 0.377) & -0.024 (-0.059, 0.010) & 0.160 \\
3.0 & 0.362 (0.337, 0.387) & 0.371 (0.342, 0.400) &  0.009 (-0.027, 0.045) & 0.625 \\
\bottomrule
\end{tabular}
\end{adjustbox}
\par\vspace{2pt}
{\footnotesize CI, confidence interval. All intervals are pointwise intervals based on the cluster-level influence-function variance estimator and a \(t_{84}\) critical value.}
\end{table}

Table \ref{tab:stride_while_alive} and Figure \ref{fig:stride_while_alive} show a time-varying pattern. For the individual-average while-alive estimand, the intervention was associated with a lower recurrent fall-injury rate through approximately two years. At one year, the estimated rates were 0.355 under enhanced usual care and 0.311 under the intervention, giving \(\widehat\Delta_{\mathrm{ind}}(1;\bw)=-0.044\) injuries per year alive (95\% confidence interval, -0.078 to -0.009; \(p=0.014\)). Similar differences were observed at 1.5 and 2 years. Expressed on a more familiar scale, the two-year estimate corresponds to approximately 3.6 fewer recorded fall injuries per 100 years alive under the intervention. By three years, however, the individual-average while-alive difference had attenuated to -0.007 injuries per year alive (95\% confidence interval, -0.043 to 0.028; \(p=0.67\)). The pointwise significance at several intermediate horizons should therefore be interpreted as evidence about the shape of the horizon-specific effect curve rather than as a multiplicity-adjusted confirmatory finding.

The cluster-level estimates were closer to zero throughout follow-up, and none of their confidence intervals excluded zero. At 1.5 years, the estimated average-practice contrast was -0.027 injuries per year alive (95\% confidence interval, -0.065 to 0.011), compared with -0.041 for the average-participant contrast. At three years, the cluster-level estimate was slightly positive, \(\widehat\Delta_{\mathrm{clus}}(3;\bw)=0.009\) (95\% confidence interval, -0.027 to 0.045), whereas the individual-average treatment effect remained slightly negative. This divergence is substantively important. The participant-weighted analysis assigns more influence to larger practices, while the cluster-weighted analysis gives the same influence to every practice. The more favorable individual-average treatment effects are therefore consistent with a stronger intervention effect, a different baseline burden, or a different case mix in larger practices. The component estimates at three years clarify the source of the two contrasts. For the individual-level target, the estimated mean cumulative fall-injury burden was 1.110 under enhanced usual care and 1.088 under the intervention, while the estimated restricted mean survival time was 2.929 years in both groups. Thus, the small negative individual-level while-alive estimates arose almost entirely from a modestly lower recurrent-event numerator. For the cluster-level target, the estimated mean burden was 1.061 under enhanced usual care and 1.084 under the intervention, and the corresponding restricted mean survival times were 2.930 and 2.924 years. The slightly positive cluster-level rate contrast therefore reflected both a somewhat larger practice-average event burden and a slightly shorter practice-average time alive under the intervention. None of these three-year differences was statistically precise.

\begin{figure}[tbp]
\centering
\textbf{(a)}\\[-2pt]
\includegraphics[width=0.6\textwidth]{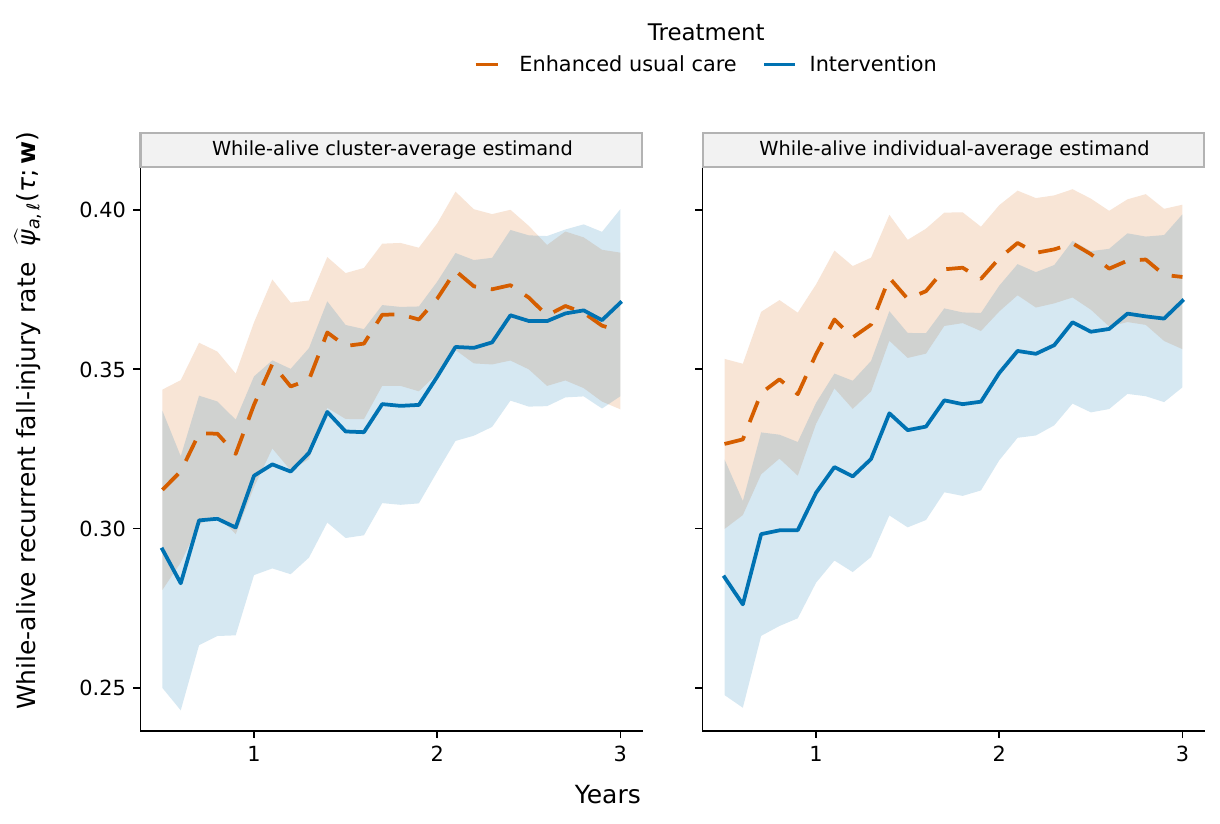}\\[4pt]
\textbf{(b)}\\[-2pt]
\includegraphics[width=0.6\textwidth]{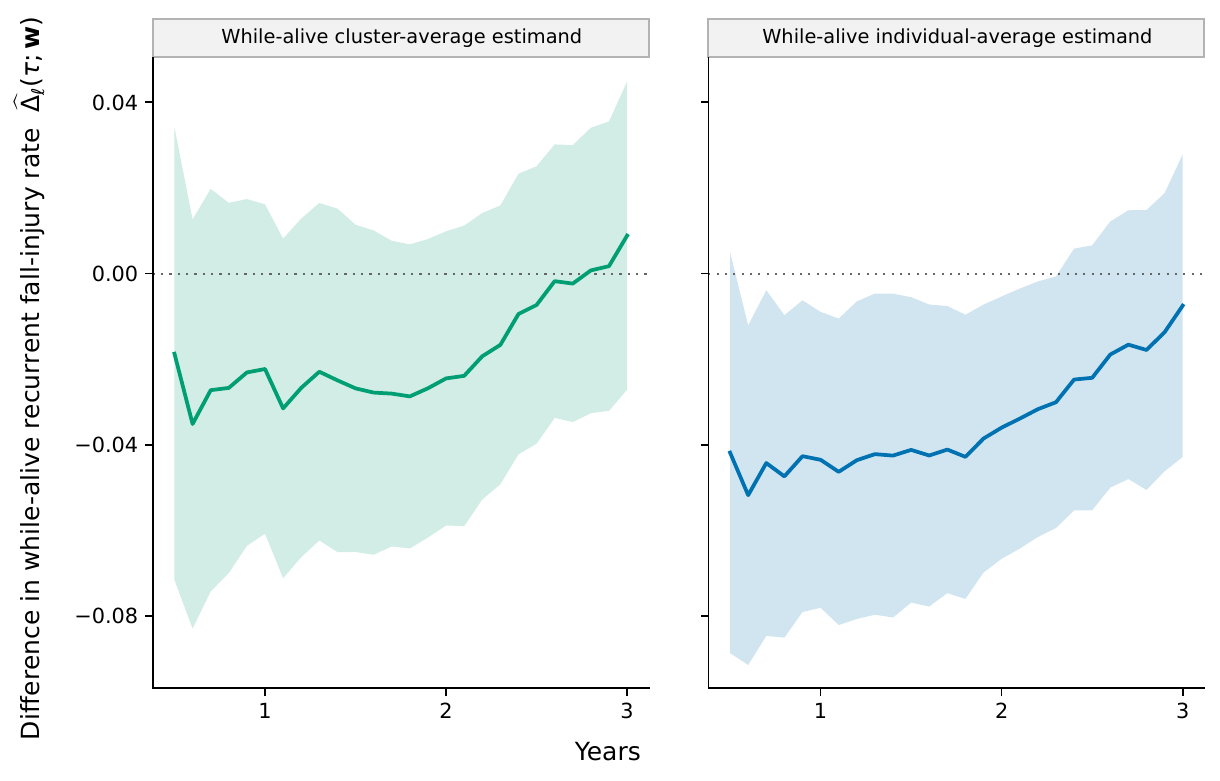}
\caption{Horizon-specific estimates in the STRIDE analysis. Panel (a) shows the arm-specific while-alive recurrent fall-injury rates, \(\widehat\psi_{a,\ell}(\tau;\bw)\), for the cluster-level and individual-level targets. Panel (b) shows the intervention-minus-enhanced-usual-care contrasts, \(\widehat\Delta_{\ell}(\tau;\bw)\); negative values indicate a lower recurrent fall-injury burden per year alive under the intervention. Shaded regions are pointwise 95\% confidence intervals.}
\label{fig:stride_while_alive}
\end{figure}

The STRIDE results complement the original publication in three respects. First, the original primary analysis found a modest reduction in the hazard of a first participant-reported fall injury \citep{bhasin2020randomized}. The early individual-average while-alive estimates are directionally consistent with this finding, but they suggest that the reduction in total recurrent fall-injury burden per year alive diminished as follow-up approached three years. Second, the present analysis uses every recorded recurrent injury rather than only the first event and treats death as a terminal event rather than ordinary censoring. Third, it separates the average-participant and average-practice targets. The original participant-level first-event analysis accommodated clustering in inference but did not report a while-alive cluster-average contrast. These analyses suggest a possible early reduction in recurrent fall-injury burden for the average participant, but no sustained three-year benefit and no clear benefit for the average practice. Because the outcome definition and effect scale differ from the original primary endpoint, these findings should be viewed as complementary rather than contradictory.

\section{Discussion}
\label{sec:discussion}

In this article, we develop doubly robust estimators of the exposure-weighted while-alive recurrent-event rate for both individually-randomized and cluster-randomized trials. This estimand is defined as the ratio of the expected clinically weighted recurrent-event burden to the restricted mean survival time and therefore quantifies recurrent-event burden per unit of population time alive. Its numerator and denominator remain separately interpretable, allowing investigators to report the treatment effects on recurrent-event burden and survival alongside their ratio. Our estimator uses a local Nelson-Aalen representation, with augmented estimating equations for the terminal-event hazard and the recurrent-event rate among those alive. By constructing survival through the product integral and accumulating recurrent-event burden through Stieltjes integration, the estimator respects the natural structure and shape constraints of these component processes rather than treating the final ratio as a single regression functional. The exposure-weighted while-alive estimand complements, rather than replaces, other summaries for recurrent-event trials with death. Time-to-first-event analyses remain useful when the first event is the dominant clinical outcome but discard subsequent recurrences, whereas cumulative recurrent-event means use all events but can be difficult to interpret when treatment also affects survival. Patient-weighted while-alive estimands average subject-specific event rates and therefore answer a different question from the exposure-weighted estimand considered here \citep{mao2022whilealive,wei2023properties,ragni2026patient}. The latter quantifies the expected recurrent-event burden accumulated by the population per expected unit of time alive, making it particularly relevant when clinical burden is naturally expressed over person-time alive, as with hospitalizations, exacerbations, injuries, and other nonfatal recurrent events. Our estimator is closely related to the future-mean one-step estimator of \citet{baer2025causal}. Under a compatible nuisance representation, the two approaches are first-order equivalent. Our local construction nevertheless offers an attractive and computationally simpler implementation. It uses standard working survival models for the terminal-event and recurrent-event processes, does not require specification of a full joint parametric model for their event histories, and produces estimates across all time horizons from a common set of local hazard and rate increments. In contrast, direct implementation of the future-mean approach requires modeling the remaining event burden over future intervals at relevant censoring times and horizons. These differences reflect complementary estimation strategies rather than competing scientific targets, and may guide the choice of implementation in a given application.

Our contribution is particularly relevant for CRTs, in which the cluster is the unit of randomization and inference but recurrent and terminal events are experienced by individuals. The recently developed ``\emph{CRT-Estimands Framework},'' a consensus-based extension of the ICH E9(R1) addendum, emphasizes that an estimand for a CRT must explicitly specify both how individuals and clusters contribute to the target population and how intercurrent events are handled \citep{kahan2026crtestimands,kahan2026crtJAMA,kahan2023estimands,kahan2024demystifying}. Death is a consequential intercurrent event for recurrent nonfatal outcomes because it permanently terminates the period during which further events can occur. The while-alive strategy, one of the five strategies articulated in the CRT-Estimands Framework for addressing death, defines treatment effects through the outcome burden accumulated during survival rather than treating death as ordinary censoring or implicitly combining mortality and morbidity. Despite this conceptual guidance, there has been little corresponding methodological development for marginal while-alive estimands in CRTs. To our knowledge, the only directly related CRT work is \citet{fang2025whilealive}, which developed a direct regression approach for while-alive summaries. That approach targets regression parameters rather than explicitly defined marginal causal estimands and was not designed to distinguish individual-average from cluster-average effects or to accommodate informative cluster size. These distinctions are fundamental because an analysis that weights every participant equally and one that weights every randomized cluster equally generally target different treatment effects when cluster size is associated with prognosis, recurrent-event burden, survival, or effect heterogeneity. The present work therefore provides an important methodological implementation of the CRT-Estimands Framework. It defines paired individual-average and cluster-average marginal while-alive estimands, makes their target population weights explicit, accommodates informative cluster size, and provides doubly robust estimation under covariate-dependent censoring with cluster-level influence-function inference. In this way, the proposed methods bridge the gap between specifying a meaningful strategy for death and actually estimating the resulting treatment effect in CRTs, while preventing the target estimand from being determined implicitly by the choice of regression model.

There are several limitations that we plan to address in future research. First, the censoring assumption is conditional on measured baseline covariates. If dropout or loss to follow-up depends on unmeasured health status or other post-baseline factors not adequately captured by these covariates, the proposed estimators may remain biased. Second, the event-type weights \(\bw\) are treated as prespecified clinical constants. Although this is appropriate for an estimand-based analysis, different choices of \(\bw\) define different causal questions, and sensitivity analyses across clinically plausible weighting schemes may therefore be informative when no single set of weights is broadly accepted. Finally, for computational simplicity and practical implementation, we have relied on familiar semiparametric working models, including arm-specific Cox models for the terminal and censoring processes and multiplicative-rate models for the recurrent-event process. Componentwise double robustness protects against misspecification of either the censoring side or the relevant outcome side, but not against simultaneous misspecification of both. The proposed framework could be extended using cross-fitted, debiased machine-learning estimators of the nuisance functions, thereby, in large samples, permitting more flexible covariate adjustment while preserving valid inference under suitable product-rate and regularity conditions for recurrent-event and censoring processes \citep{hines2022demystifying}.

\section*{Software and supplementary material}
The proposed methods are implemented in the \texttt{DRCRTwa} R package, available at \url{https://github.com/fancy575/DRCRTwa}. All proofs, asymptotic expansions and variance derivations, the efficient influence function, data-generating parameters, and additional simulation and data-analysis results are included in the appended Supplementary Appendix.

\section*{Data availability statement}
The HF-ACTION data are available through the National Heart, Lung, and Blood Institute Biologic Specimen and Data Repository Information Coordinating Center (BioLINCC, \url{https://biolincc.nhlbi.nih.gov}) under an approved data use agreement. The STRIDE data are available through the National Institute on Aging (\url{https://agingresearchbiobank.nia.nih.gov}) under an approved data-use agreement.

\section*{Acknowledgements}
All statements in this report, including its findings and conclusions, are solely those of the authors and do not necessarily represent the views of the National Institutes of Health.

\section*{Funding}
This work is supported by the United States National Institutes of Health, National Heart, Lung, and Blood Institute (grant number 1R01HL178513).

\clearpage
\printbibliography[title={References}]

\clearpage
\appendix
\setcounter{section}{0}
\setcounter{subsection}{0}
\setcounter{subsubsection}{0}
\setcounter{equation}{0}
\setcounter{table}{0}
\setcounter{figure}{0}
\setcounter{theorem}{0}
\setcounter{lemma}{0}
\setcounter{proposition}{0}
\setcounter{corollary}{0}
\setcounter{definition}{0}
\setcounter{remark}{0}
\setcounter{assumption}{0}
\numberwithin{assumption}{section}

\renewcommand{\thesection}{A\arabic{section}}
\renewcommand{\thesubsection}{\thesection.\arabic{subsection}}
\renewcommand{\thesubsubsection}{\thesubsection.\arabic{subsubsection}}
\renewcommand{\theequation}{S\arabic{equation}}
\renewcommand{\thetable}{S\arabic{table}}
\renewcommand{\thefigure}{S\arabic{figure}}

\renewcommand{\theHsection}{appendix.\arabic{section}}
\renewcommand{\theHsubsection}{appendix.\arabic{section}.\arabic{subsection}}
\renewcommand{\theHsubsubsection}{appendix.\arabic{section}.\arabic{subsection}.\arabic{subsubsection}}
\renewcommand{\theHequation}{appendix.\arabic{equation}}
\renewcommand{\theHtable}{appendix.\arabic{table}}
\renewcommand{\theHfigure}{appendix.\arabic{figure}}
\renewcommand{\theHtheorem}{appendix.\arabic{theorem}}
\renewcommand{\theHlemma}{appendix.\arabic{lemma}}
\renewcommand{\theHproposition}{appendix.\arabic{proposition}}
\renewcommand{\theHcorollary}{appendix.\arabic{corollary}}
\renewcommand{\theHdefinition}{appendix.\arabic{definition}}
\renewcommand{\theHremark}{appendix.\arabic{remark}}
\renewcommand{\theHassumption}{appendix.\arabic{section}.\arabic{assumption}}

\section*{Supplementary Appendix}
\addcontentsline{toc}{section}{Supplementary Appendix}
\raggedbottom

Section~\ref{supp:sec:full_observed_identities} establishes the full data and observed data identities used by the augmented Nelson--Aalen construction.  Section~\ref{supp:sec:irt_dr_detailed} gives the expectation-equals-target double-robustness proof for IRTs.  Section~\ref{supp:sec:irt_asymptotic_detailed} derives the full first-order expansion and the analytic variance, explicitly retaining the terms due to estimating the censoring, terminal-event, and recurrent-event nuisance models.  Section~\ref{supp:sec:eif} gives the efficient influence function derivation.  Section~\ref{supp:sec:baer_detailed_equivalence} proves the first-order equivalence with the future-mean estimator of \citet{baer2025causal} under a compatible nuisance representation.  Sections~\ref{supp:sec:crt_dr_detailed} and~\ref{supp:sec:crt_variance_detailed} provide the corresponding CRT proofs and variance formulas.  Section~\ref{supp:sec:simulation_tables_detailed} reports additional simulation results in Tables~\ref{supp:tab:irt_n1600}--\ref{supp:tab:crt_cluster_M80}.

\section{Full data and observed data identities}
\label{supp:sec:full_observed_identities}

\subsection{Full data local representation}

Fix treatment arm \(a\in\{0,1\}\), horizon \(\tau\), and event-type weight vector \(\bw\).  Let \(Y^a(t)=\ind(D^a\ge t)\), \(N^{D,a}(t)=\ind(D^a\le t)\), and \(N^{a,\bw}(t)=\sum_{k=1}^K w_kN_k^a(t)\), with recurrent-event processes stopped at \(D^a\).  Define \(R_a(t)=\Ebb\{Y^a(t)\}\).  The local terminal-event and recurrent-event increments are
\[
  \dd\Lambda_a^D(t)=\frac{\Ebb\{\dd N^{D,a}(t)\}}{R_a(t-)},
  \qquad
  \dd\Lambda_a^{R,\bw}(t)=\frac{\Ebb\{\dd N^{a,\bw}(t)\}}{R_a(t-)}.
\]
The left limit in the denominator is immaterial in continuous time and is kept to make the notation valid under ties and point masses.  The next proposition gives the local representation used throughout the supplement.

\begin{proposition}[Full data Nelson--Aalen representation]
\label{supp:prop:full_data_na}
Suppose \(\Ebb\{N^{a,\bw}(\tau)\}<\infty\), \(\Ebb(D^a\wedge\tau)<\infty\), and \(\inf_{0\le t\le\tau}R_a(t-)>0\).  Then
\[
  S_a(t)=\Pbb(D^a\ge t)=\prod_{0<u\le t}\{1-\dd\Lambda_a^D(u)\},
  \qquad
  \nu_a(\tau)=\int_0^\tau S_a(t)\dd t,
  \qquad
  \mu_a(\tau;\bw)=\int_0^\tau S_a(t-)\dd\Lambda_a^{R,\bw}(t).
\]
\end{proposition}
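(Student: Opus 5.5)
The plan is to treat the three identities separately. Each reduces to a standard measure-theoretic fact once the objects are set up as Lebesgue--Stieltjes measures on \([0,\tau]\).

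First, I would settle the conventions, which I expect to be the only delicate point. Write \(F_a(t)=\Pbb(D^a\le t)\). Then \(t\mapsto\Ebb\{N^{D,a}(t)\}=F_a(t)\) is right-continuous and nondecreasing, so \(\Ebb\{\dd N^{D,a}(t)\}=\dd F_a(t)\). Also \(R_a(t-)=\Pbb(D^a\ge t)\), which is the left limit of \(t\mapsto\Pbb(D^a>t)\). Similarly, since \(w_k\ge 0\) and each \(N_k^a\) is nondecreasing, \(m_a(t)=\Ebb\{N^{a,\bw}(t)\}\) is nondecreasing and right-continuous by monotone convergence. It is finite on \([0,\tau]\) by the assumption \(\Ebb\{N^{a,\bw}(\tau)\}<\infty\), so \(\Ebb\{\dd N^{a,\bw}(t)\}=\dd m_a(t)\) is a finite measure. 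Under \(\inf_t R_a(t-)>0\), both \(\dd\Lambda_a^D=\dd F_a/R_a(\cdot-)\) and \(\dd\Lambda_a^{R,\bw}=\dd m_a/R_a(\cdot-)\) are finite measures on \([0,\tau]\); for instance \(\Lambda_a^D(\tau)\le 1/\inf_t R_a(t-)\).

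The product integral \(\prod_{(0,t]}\) naturally produces the right-continuous survival function \(\Pbb(D^a>t)\). Its left limit is \(\Pbb(D^a\ge t)\), which is the object appearing as \(S_a(t-)\) in the third identity. I would therefore prove the first identity for \(S_a(t)=\Pbb(D^a>t)\), as in the main-text statement, and remark that the two versions differ only at the countably many atoms of \(F_a\). That difference is irrelevant for the Lebesgue integral in \(\nu_a\), and the \(t-\) in the third identity accounts for it exactly.

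For the first identity, set \(G(t)=\Pbb(D^a>t)\). Then \(G(t)=1-\int_{(0,t]}\dd F_a(u)=1-\int_{(0,t]}G(u-)\,\dd\Lambda_a^D(u)\), since \(\dd F_a(u)=G(u-)\dd\Lambda_a^D(u)\) by definition of \(\Lambda_a^D\). Thus \(G\) solves the linear Volterra equation \(Z(t)=1-\int_{(0,t]}Z(u-)\dd\Lambda_a^D(u)\). This is where the finiteness of \(\Lambda_a^D\) is used: by the Gill--Johansen theory of product integrals \citep{gill1990product}, that equation has the unique locally bounded solution \(\prod_{(0,t]}\{1-\dd\Lambda_a^D(u)\}\). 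Uniqueness follows from the usual Gronwall-type iteration bound \(|Z_1-Z_2|(t)\le \sup|Z_1-Z_2|\,\Lambda_a^D(t)^n/n!\to0\).

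For the second identity, write \(D^a\wedge\tau=\int_0^\tau\ind(D^a>t)\dd t\). Then apply Tonelli to get \(\nu_a(\tau)=\int_0^\tau\Pbb(D^a>t)\dd t\). For the third identity, the counting process is stopped at \(D^a\), since events satisfy \(T\le t\wedge D^a\). Writing \(N^{a,\bw}(\tau)=\int_{(0,\tau]}\dd N^{a,\bw}(t)\) and exchanging expectation and integration by Tonelli (all terms are nonnegative) gives \(\mu_a(\tau;\bw)=m_a(\tau)=\int_{(0,\tau]}\dd m_a(t)\). Then \(\int_{(0,\tau]}\dd m_a(t)=\int_{(0,\tau]}R_a(t-)\dd\Lambda_a^{R,\bw}(t)=\int_0^\tau S_a(t-)\dd\Lambda_a^{R,\bw}(t)\), using \(R_a(t-)=\Pbb(D^a\ge t)=S_a(t-)\). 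This assumes no mass at \(t=0\), which holds for event times that are positive almost surely. The main obstacle is thus purely the bookkeeping of left and right limits at atoms, including events tied with death at \(t=D^a\). These are counted in \(m_a\) and are correctly weighted by \(\Pbb(D^a\ge t)\), not \(\Pbb(D^a>t)\).
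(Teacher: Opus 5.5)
Your proposal is correct and follows essentially the paper's own argument: you identify the survival function as the unique solution of the Volterra equation $Z(t)=1-\int_{(0,t]}Z(u-)\,\dd\Lambda_a^D(u)$, hence as the product integral, and you obtain $\nu_a$ and $\mu_a$ by Tonelli together with $R_a(t-)\,\dd\Lambda_a^{R,\bw}(t)=\Ebb\{\dd N^{a,\bw}(t)\}$. Your handling of left and right limits is more careful than the paper's: the paper treats the left-continuous $R_a(t)=\Pbb(D^a\ge t)$ as the c\`adl\`ag solution equal to the product integral over $(0,t]$, whereas, as you note, that identity holds exactly for $\Pbb(D^a>t)$ and for $\Pbb(D^a\ge t)$ only at continuity points of the law of $D^a$, which does not affect $\nu_a$ or $\mu_a$.
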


\begin{proof}
Because \(Y^a(t)\) changes only when the terminal event occurs,
\begin{align*}
  \dd R_a(t)
  &=\Ebb\{\dd Y^a(t)\} \\
  &=\Ebb\{-\dd N^{D,a}(t)\} \\
  &=-\Ebb\{\dd N^{D,a}(t)\} \\
  &=-R_a(t-)\frac{\Ebb\{\dd N^{D,a}(t)\}}{R_a(t-)} \\
  &=-R_a(t-)\dd\Lambda_a^D(t).
\end{align*}
The cadlag solution of \(\dd R_a(t)=-R_a(t-)\dd\Lambda_a^D(t)\) with \(R_a(0)=1\) is the product integral \(R_a(t)=\prod_{0<u\le t}\{1-\dd\Lambda_a^D(u)\}\), so \(R_a(t)=S_a(t)\).  For the restricted mean survival time, Tonelli's theorem gives
\begin{align*}
  \nu_a(\tau)
  &=\Ebb(D^a\wedge\tau) \\
  &=\Ebb\left\{\int_0^\tau \ind(D^a\ge t)\dd t\right\} \\
  &=\int_0^\tau \Ebb\{Y^a(t)\}\dd t \\
  &=\int_0^\tau S_a(t)\dd t.
\end{align*}
For the recurrent-event burden, using the definition of \(\dd\Lambda_a^{R,\bw}\), finite first moment, and Tonelli's theorem,
\begin{align*}
  \int_0^\tau S_a(t-)\dd\Lambda_a^{R,\bw}(t)
  &=\int_0^\tau R_a(t-)\frac{\Ebb\{\dd N^{a,\bw}(t)\}}{R_a(t-)} \\
  &=\int_0^\tau \Ebb\{\dd N^{a,\bw}(t)\} \\
  &=\Ebb\left\{\int_0^\tau \dd N^{a,\bw}(t)\right\} \\
  &=\Ebb\{N^{a,\bw}(\tau)-N^{a,\bw}(0)\} \\
  &=\Ebb\{N^{a,\bw}(\tau)\}.
\end{align*}
We use the convention \(N^{a,\bw}(0)=0\), as in the main text.  This proves the proposition.
\end{proof}

\subsection{Observed-data inverse-censoring identities}

Let \(\xi^a=\ind(A=a)/\pi_a\).  Given \(\bZ\), define the true conditional censoring survival \(K_0^a(t-\mid\bZ)=\Pbb(C^a\ge t\mid\bZ)\), terminal survival \(H_0^a(t-\mid\bZ)=\Pbb(D^a\ge t\mid\bZ)\), conditional event increments
\[
  \dd q_0^D(t;a\mid\bZ)=\Ebb\{\dd N^{D,a}(t)\mid\bZ\},
  \qquad
  \dd q_0^{R,\bw}(t;a\mid\bZ)=\Ebb\{\dd N^{a,\bw}(t)\mid\bZ\},
\]
and \(r_0^a(t\mid\bZ)=H_0^a(t-\mid\bZ)\).  Let \(K^\star\), \(H^\star\), \(\dd q^{D,\star}\), \(\dd q^{R,\bw,\star}\), and \(r^\star=H^\star(t-)\) be probability limits of the fitted working models.  The superscript \(\star\) denotes a probability limit and does not imply correctness.

\begin{lemma}[Conditional inverse-censoring identity]
\label{supp:lem:ipcw_identity_detailed}
Under Assumption \ref{asm:irt_consistency}, \ref{asm:irt_randomization}, and \ref{asm:irt_censoring}, for any finite variation full data increment \(\dd G^a(t)\) stopped by death,
\[
  \Ebb\left[\xi^a\{K^\star(t-\mid\bZ)\}^{-1}\ind(C^a\ge t)\dd G^a(t)\mid\bZ\right]
  =\frac{K_0^a(t-\mid\bZ)}{K^\star(t-\mid\bZ)}\Ebb\{\dd G^a(t)\mid\bZ\}.
\]
The same equality with \(\dd G^a(t)\) replaced by \(Y^a(t)\) is
\[
  \Ebb\left[\xi^a\{K^\star(t-\mid\bZ)\}^{-1}\ind(C^a\ge t)Y^a(t)\mid\bZ\right]
  =\frac{K_0^a(t-\mid\bZ)}{K^\star(t-\mid\bZ)}H_0^a(t-\mid\bZ).
\]
\end{lemma}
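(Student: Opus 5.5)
The plan is to condition step by step, first removing the treatment weight and then factoring the censoring indicator away from the full-data increment. The assumptions are used in the order randomization, then conditional independent censoring.

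\emph{Step one (remove \(\xi^a\)).} The weight \(\{K^\star(t-\mid\bZ)\}^{-1}\) is a deterministic function of \(\bZ\) and \(t\), since \(K^\star\) is a fixed probability limit rather than a random fitted object. It therefore factors out of the conditional expectation. Write \(\xi^a=\ind(A=a)/\pi_a\). By Assumption \ref{asm:irt_randomization}, \(A\) is independent of \((C^a,D^a,\bN^a(\cdot))\) given \(\bZ\), and \(\ind(C^a\ge t)\dd G^a(t)\) is a functional of these potential variables alone. Hence
\[
\Ebb[\xi^a\ind(C^a\ge t)\dd G^a(t)\mid\bZ]=\Ebb(\xi^a\mid\bZ)\,\Ebb[\ind(C^a\ge t)\dd G^a(t)\mid\bZ]=\Ebb[\ind(C^a\ge t)\dd G^a(t)\mid\bZ],
\]
using \(\Ebb(\xi^a\mid\bZ)=\Pbb(A=a\mid\bZ)/\pi_a=1\). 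Assumption \ref{asm:irt_consistency} is invoked only to justify that, on \(\{A=a\}\), the observed quantity \(\ind(C\ge t)\dd G(t)\) equals its potential version. This is the form in which the lemma is applied to \(\dd N^D\), \(\dd N^{\bw}\), and \(Y\).

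\emph{Step two (factor the censoring indicator).} Because \(\dd G^a\) is stopped by death and built from \((D^a,\bN^a)\), Assumption \ref{asm:irt_censoring} gives \(C^a\perp\!\!\!\perp\dd G^a(t)\mid\bZ\). Therefore
\[
\Ebb[\ind(C^a\ge t)\dd G^a(t)\mid\bZ]=\Pbb(C^a\ge t\mid\bZ)\,\Ebb\{\dd G^a(t)\mid\bZ\}=K_0^a(t-\mid\bZ)\,\Ebb\{\dd G^a(t)\mid\bZ\}.
\]
Combining this with the factored-out \(1/K^\star\) yields the first display.

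\emph{Step three (the at-risk version).} The second display is the same argument with \(\dd G^a(t)\) replaced by \(Y^a(t)=\ind(D^a\ge t)\), using \(\Ebb\{Y^a(t)\mid\bZ\}=H_0^a(t-\mid\bZ)\) by definition.

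\emph{Main obstacle.} The only delicate point is making the increment-level statements rigorous. I would interpret each display as an identity between measures in \(t\): integrate both sides against an arbitrary bounded deterministic function \(f(t)\) over \([0,\tau]\). Conditional Fubini then applies, justified by the finite variation of \(G^a\), the integrability \(\Ebb\{N^{a,\bw}(\tau)\}<\infty\), and the fact that \(K^\star\) is bounded away from zero (the regularity conditions). Under these, the interchanges of expectation and integral are legitimate. Left limits cause no trouble because \(\ind(C^a\ge t)\) is predictable and matches \(K_0^a(t-\mid\bZ)\) exactly.
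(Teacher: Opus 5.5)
Your proposal is correct and matches the paper's proof step for step. You factor out the deterministic \(\{K^\star(t-\mid\bZ)\}^{-1}\), use randomization (the paper phrases this as conditioning on \((\bZ,A)\)) to reduce the weight to \(\Ebb(\xi^a\mid\bZ)=1\), factor \(\ind(C^a\ge t)\) into \(K_0^a(t-\mid\bZ)\) by conditional independent censoring, and substitute \(Y^a(t)\) for the at-risk version. Your reading of the increment identity as an identity of measures in \(t\) adds a layer of rigor that the paper omits, and it is harmless.
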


\begin{proof}
By randomization, \(\Ebb(\xi^a\mid\bZ)=1\), and by consistency the arm-specific observed history is the corresponding potential history among subjects with \(A=a\).  By conditional independent censoring, \(\ind(C^a\ge t)\) is conditionally independent of the outcome-history increment \(\dd G^a(t)\) given \(\bZ\).  Therefore, using consecutive conditioning,
\begin{align*}
&\quad \quad \Ebb\left[\xi^a\{K^\star(t-\mid\bZ)\}^{-1}\ind(C^a\ge t)\dd G^a(t)\mid\bZ\right] \\
&\quad=\{K^\star(t-\mid\bZ)\}^{-1}
   \Ebb\left[\Ebb\{\xi^a\ind(C^a\ge t)\dd G^a(t)\mid\bZ,A\}\mid\bZ\right] \\
&\quad=\{K^\star(t-\mid\bZ)\}^{-1}
   \Ebb(\xi^a\mid\bZ)
   \Ebb\{\ind(C^a\ge t)\dd G^a(t)\mid\bZ\} \\
&\quad=\{K^\star(t-\mid\bZ)\}^{-1}
   \Ebb(\xi^a\mid\bZ)
   \Ebb\{\ind(C^a\ge t)\mid\bZ\}
   \Ebb\{\dd G^a(t)\mid\bZ\} \\
&\quad=\frac{K_0^a(t-\mid\bZ)}{K^\star(t-\mid\bZ)}\Ebb\{\dd G^a(t)\mid\bZ\}.
\end{align*}
Replacing \(\dd G^a(t)\) by \(Y^a(t)\) gives the second equality because \(\Ebb\{Y^a(t)\mid\bZ\}=H_0^a(t-\mid\bZ)\).
\end{proof}

\subsection{Censoring-martingale calibration}

Let \(\dd\Lambda_C^\star(t\mid\bZ)\) be the probability-limit censoring cumulative hazard, and let \(K^\star(t\mid\bZ)=\prod_{0<u\le t}\{1-\dd\Lambda_C^\star(u\mid\bZ)\}\).  Write
\[
  \dd M_C^\star(t)=\dd N^C(t)-Y^\dagger(t)\dd\Lambda_C^\star(t\mid\bZ),
  \qquad
  U^\star(t)=1-\int_{(0,t)}\frac{\dd M_C^\star(u)}{K^\star(u-\mid\bZ)H^\star(u-\mid\bZ)}.
\]
The following calibration is the algebraic source of double robustness.

\begin{lemma}[Calibration of \(U^\star\)]
\label{supp:lem:calibration_detailed}
Under Assumption \ref{asm:irt_consistency}, \ref{asm:irt_randomization}, and \ref{asm:irt_censoring}, at each fixed \(t\in[0,\tau]\): (i) if \(K^\star=K_0^a\), then \(\Ebb\{\xi^aU^\star(t)\mid\bZ\}=1\); (ii) if \(H^\star=H_0^a\), then
\[
  \Ebb\{\xi^aU^\star(t)\mid\bZ\}=K_0^a(t-\mid\bZ)/K^\star(t-\mid\bZ).
\]
\end{lemma}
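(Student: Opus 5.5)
The plan is to compute \(\Ebb\{\xi^aU^\star(t)\mid\bZ\}\) directly as a compensator calculation and then read off the two cases. First, since \(U^\star(t)\) depends only on the arm-\(a\) observed history and \(\bZ\), I will use consistency and randomization exactly as in the proof of Lemma~\ref{supp:lem:ipcw_identity_detailed}. This gives \(\Ebb\{\xi^aU^\star(t)\mid\bZ\}=\Ebb\{U^{\star,a}(t)\mid\bZ\}\), where \(U^{\star,a}\) is built from \((D^a,C^a)\). By Fubini (justified because \(K^\star\), \(H^\star\) are bounded away from zero on \([0,\tau]\) and the integrands are of bounded variation), this equals
\[
1-\int_{(0,t)}\frac{\Ebb\{\dd N^C(u)\mid\bZ\}-\Ebb\{Y^\dagger(u)\mid\bZ\}\dd\Lambda_C^\star(u\mid\bZ)}{K^\star(u-\mid\bZ)H^\star(u-\mid\bZ)} .
\]

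Next I compute the two conditional moments using Assumption~\ref{asm:irt_censoring} and the tie convention that death precedes censoring. A censoring jump at \(u\) requires \(C^a=u\) and \(D^a>u\). Also, \(Y^\dagger(u)=1\) exactly when \(C^a\ge u\) and \(D^a>u\). Conditional independence of \(C^a\) and \(D^a\) given \(\bZ\) therefore yields
\[
\Ebb\{\dd N^C(u)\mid\bZ\}=K_0^a(u-\mid\bZ)H_0^a(u\mid\bZ)\dd\Lambda_{C,0}(u\mid\bZ),\qquad \Ebb\{Y^\dagger(u)\mid\bZ\}=K_0^a(u-\mid\bZ)H_0^a(u\mid\bZ).
\]
Here \(\dd\Lambda_{C,0}\) is the true censoring hazard, so \(K_0^a\) is its product integral. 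Hence
\[
\Ebb\{\xi^aU^\star(t)\mid\bZ\}=1-\int_{(0,t)}\frac{K_0^a(u-)H_0^a(u)}{K^\star(u-)H^\star(u-)}\{\dd\Lambda_{C,0}(u)-\dd\Lambda_C^\star(u)\}.
\]

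Case (i) is then immediate. If \(K^\star=K_0^a\), the two cumulative hazards coincide, the integrand vanishes, and the expectation is \(1\), whatever \(H^\star\) is.

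For case (ii), with \(H^\star=H_0^a\), the ratio \(H_0^a(u)/H_0^a(u-)\) equals one except at atoms of the terminal hazard. I will assume, as is implicit in the continuous-time setting of the main text, that \(D^a\) and the censoring hazards share no atoms. The expression then reduces to \(1-\int_{(0,t)}\{K_0^a(u-)/K^\star(u-)\}\{\dd\Lambda_{C,0}-\dd\Lambda_C^\star\}\). The remaining step is a Duhamel-type identity for the ratio of two product integrals. Since \(K_0^a(u)=K_0^a(u-)\{1-\dd\Lambda_{C,0}(u)\}\) and similarly for \(K^\star\), the increment is
\[
\dd\{K_0^a/K^\star\}(u)=-\frac{K_0^a(u-)}{K^\star(u)}\{\dd\Lambda_{C,0}(u)-\dd\Lambda_C^\star(u)\}.
\]
Integrating over \((0,t)\) from the initial value \(1\) gives \(K_0^a(t-)/K^\star(t-)\), provided \(K^\star(u)\) may be replaced by \(K^\star(u-)\).

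I expect this jump bookkeeping to be the main obstacle: the denominators \(K^\star(u)\) versus \(K^\star(u-)\) and \(H_0^a(u)\) versus \(H_0^a(u-)\), and the treatment of ties. I would first handle it by assuming \(\Lambda_C^\star\) is continuous, or at least has no atoms in common with \(\Lambda_{C,0}\) or \(\Lambda^D_0\). Under that assumption both substitutions are exact. If full generality with atoms is wanted, I would instead modify \(U^\star\) to use \(K^\star(u)\) in the integrand, which makes the Duhamel step exact.
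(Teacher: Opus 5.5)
Your proof is correct and follows the paper's route: compute $\Ebb\{\dd M_C^\star(u)\mid\bZ\}$ from the true compensator $Y^\dagger(u)\dd\Lambda_{C0}^a(u\mid\bZ)$ and substitute it into $U^\star$. For part (i) the integrand then vanishes when $\Lambda_C^\star=\Lambda_{C0}^a$; the paper phrases this as the martingale property, which is the same fact. For part (ii) both arguments finish with the Duhamel identity for $K_0^a/K^\star$.

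Your bookkeeping is more careful than the paper's. The paper writes $\Ebb\{Y^\dagger(u)\mid A=a,\bZ\}=H_0^a(u-)K_0^a(u-)$ and uses $K^\star(u-)$ in its Duhamel step. Both are exact only under the no-common-atoms (continuity) condition that you state explicitly.
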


\begin{proof}
If \(K^\star=K_0^a\), then \(\dd M_C^\star\) is the true censoring martingale increment.  The integrand \(\{K_0^aH^\star\}^{-1}\) is predictable and bounded by positivity.  Hence
\begin{align*}
  \Ebb\{\xi^aU^\star(t)\mid\bZ\}
  &=1-\Ebb\left[\xi^a\int_{(0,t)}\frac{\dd M_C^\star(u)}{K_0^a(u-\mid\bZ)H^\star(u-\mid\bZ)}\mid\bZ\right] \\
  &=1-\int_{(0,t)}\frac{\Ebb\{\xi^a\dd M_C^\star(u)\mid\cF_{u-},\bZ\}}{K_0^a(u-\mid\bZ)H^\star(u-\mid\bZ)} \\
  &=1.
\end{align*}
For the second part, assume \(H^\star=H_0^a\).  Conditional on \(A=a\) and \(\bZ\), the true compensator of \(N^C\) is \(Y^\dagger(u)\dd\Lambda_{C0}^a(u\mid\bZ)\).  Therefore
\begin{align*}
  \Ebb\{\dd M_C^\star(u)\mid A=a,\bZ\}
  &=\Ebb\{\dd N^C(u)-Y^\dagger(u)\dd\Lambda_C^\star(u\mid\bZ)\mid A=a,\bZ\} \\
  &=\Ebb\{Y^\dagger(u)\mid A=a,\bZ\}\{\dd\Lambda_{C0}^a(u\mid\bZ)-\dd\Lambda_C^\star(u\mid\bZ)\} \\
  &=H_0^a(u-\mid\bZ)K_0^a(u-\mid\bZ)\{\dd\Lambda_{C0}^a(u\mid\bZ)-\dd\Lambda_C^\star(u\mid\bZ)\}.
\end{align*}
Substituting this expression into the definition of \(U^\star\), and using \(\Ebb(\xi^a\mid\bZ)=1\), gives
\begin{align*}
  \Ebb\{\xi^aU^\star(t)\mid\bZ\}
  &=1-\int_{(0,t)}\frac{H_0^a(u-\mid\bZ)K_0^a(u-\mid\bZ)}{K^\star(u-\mid\bZ)H_0^a(u-\mid\bZ)}
     \{\dd\Lambda_{C0}^a(u\mid\bZ)-\dd\Lambda_C^\star(u\mid\bZ)\} \\
  &=1-\int_{(0,t)}\frac{K_0^a(u-\mid\bZ)}{K^\star(u-\mid\bZ)}
     \{\dd\Lambda_{C0}^a(u\mid\bZ)-\dd\Lambda_C^\star(u\mid\bZ)\}.
\end{align*}
Let \(R_K(t\mid\bZ)=K_0^a(t-\mid\bZ)/K^\star(t-\mid\bZ)\).  By the Duhamel identity for a ratio of product integrals,
\begin{align*}
  R_K(t\mid\bZ)
  &=1+\int_{(0,t)}R_K(u-\mid\bZ)\{\dd\Lambda_C^\star(u\mid\bZ)-\dd\Lambda_{C0}^a(u\mid\bZ)\} \\
  &=1-\int_{(0,t)}R_K(u-\mid\bZ)\{\dd\Lambda_{C0}^a(u\mid\bZ)-\dd\Lambda_C^\star(u\mid\bZ)\}.
\end{align*}
The last display is the preceding conditional expectation, proving the claim.
\end{proof}

\section{Double-robustness proof in IRTs}
\label{supp:sec:irt_dr_detailed}

\subsection{Population limits of the local estimating equations}

For \(h\in\{D,R\}\), write \(\dd N^h(t)=\dd N^D(t)\) when \(h=D\) and \(\dd N^h(t)=\dd N^{\bw}(t)\) when \(h=R\).  The corresponding fitted full data increment is \(\dd q^{h,\star}\).  The probability limits of the local numerator and denominator of the augmented Nelson--Aalen estimator are
\[
\begin{aligned}
  \dd\mathcal A_h^\star(t)
  &=\Ebb\left[\xi^a\{K^\star(t-\mid\bZ)\}^{-1}\dd N^h(t)+\{1-\xi^aU^\star(t)\}\dd q^{h,\star}(t;a\mid\bZ)\right],\\
  \mathcal B^\star(t)
  &=\Ebb\left[\xi^a\{K^\star(t-\mid\bZ)\}^{-1}Y(t)+\{1-\xi^aU^\star(t)\}r^\star(t\mid\bZ)\right].
\end{aligned}
\]
The limiting local estimator is \(\dd\Lambda_h^\star(t)=\dd\mathcal A_h^\star(t)/\mathcal B^\star(t)\).  We prove below that \(\dd\Lambda_D^\star(t)=\dd\Lambda_a^D(t)\) and \(\dd\Lambda_R^\star(t)=\dd\Lambda_a^{R,\bw}(t)\) under the corresponding double-robustness conditions.

\begin{proposition}[Terminal-event component]
\label{supp:prop:irt_terminal_dr_detailed}
Under Assumption \ref{asm:irt_consistency}, \ref{asm:irt_randomization}, \ref{asm:irt_censoring} and \ref{asm:irt_asymptotic_regularity}, if either the censoring model is correct, \(\cC_a\), or the terminal-event outcome model is correct, \(\cO_a^D\), then \(\dd\Lambda_D^\star(t)=\dd\Lambda_a^D(t)\) for all \(t\le\tau\).  Hence \(\widehat S_a^{\mathrm{DR}}\) and \(\widehat\nu_a^{\mathrm{DR}}(\tau)\) are consistent.
\end{proposition}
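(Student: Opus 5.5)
The plan is to compute the limiting numerator $\dd\mathcal A_D^\star(t)$ and denominator $\mathcal B^\star(t)$ conditionally on $\bZ$. We show that each equals $\Ebb\{\dd N^{D,a}(t)\}$ and $R_a(t-)=S_a(t-)$ respectively, up to a common factor, so the ratio returns $\dd\Lambda_a^D(t)$. First, observe that on $\{A=a\}$ the observed processes satisfy $\dd N^D(t)=\ind(C^a\ge t)\dd N^{D,a}(t)$ and $Y(t)=\ind(C^a\ge t)Y^a(t)$ for $t\le\tau$; the tie convention places death before censoring. Applying Lemma~\ref{supp:lem:ipcw_identity_detailed} with $\dd G^a=\dd N^{D,a}$ and with $Y^a$ gives, with $R_K(t\mid\bZ)=K_0^a(t-\mid\bZ)/K^\star(t-\mid\bZ)$,
\begin{align*}
\Ebb[\xi^a\{K^\star\}^{-1}\dd N^D(t)\mid\bZ]&=R_K(t\mid\bZ)\,\dd q_0^D(t;a\mid\bZ),\\
\Ebb[\xi^a\{K^\star\}^{-1}Y(t)\mid\bZ]&=R_K(t\mid\bZ)\,r_0^a(t\mid\bZ).
\end{align*}
The augmentation terms involve $\dd q^{D,\star}$ and $r^\star$, which are functions of $\bZ$ only. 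Their conditional means are therefore $\{1-\Ebb(\xi^aU^\star(t)\mid\bZ)\}\dd q^{D,\star}$ and $\{1-\Ebb(\xi^aU^\star(t)\mid\bZ)\}r^\star$. These are evaluated using Lemma~\ref{supp:lem:calibration_detailed}.

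\emph{Case $\cC_a$.} Here $R_K\equiv1$, and part (i) of Lemma~\ref{supp:lem:calibration_detailed} gives $\Ebb(\xi^aU^\star(t)\mid\bZ)=1$, so the augmentation terms vanish for any limits $H^\star,\dd q^{D,\star}$. Averaging over $\bZ$ yields $\dd\mathcal A_D^\star(t)=\Ebb\{\dd N^{D,a}(t)\}$ and $\mathcal B^\star(t)=R_a(t-)$. \emph{Case $\cO_a^D$.} Here $H^\star=H_0^a$, so part (ii) gives $\Ebb(\xi^aU^\star\mid\bZ)=R_K$. Moreover $\dd q^{D,\star}=\dd q_0^D$ and $r^\star=r_0^a$. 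The conditional numerator becomes $R_K\dd q_0^D+(1-R_K)\dd q_0^D=\dd q_0^D$, and similarly the denominator becomes $r_0^a$. The identical conclusion follows after averaging over $\bZ$. In both cases $\dd\Lambda_D^\star(t)=\Ebb\{\dd N^{D,a}(t)\}/R_a(t-)=\dd\Lambda_a^D(t)$ by the definition in the full-data representation, and $R_a(t-)>0$ by the positivity condition of Proposition~\ref{supp:prop:full_data_na}.

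For consistency of $\widehat S_a^{\mathrm{DR}}$ and $\widehat\nu_a^{\mathrm{DR}}(\tau)$, the plan is to use the regularity conditions (Assumption~\ref{asm:irt_asymptotic_regularity}). These give uniform convergence of the fitted nuisances and a Glivenko--Cantelli property for the numerator and denominator processes, so that $\widehat\Lambda_a^{D,\mathrm{DR}}\to\Lambda_D^\star$ uniformly on $[0,\tau]$, the denominator being bounded away from zero. Continuity of the product-integral map in the supremum norm, over cumulative hazards of bounded variation, then yields $\widehat S_a^{\mathrm{DR}}\to S_a$ uniformly by Proposition~\ref{supp:prop:full_data_na}. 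Integrating over $[0,\tau]$ gives $\widehat\nu_a^{\mathrm{DR}}(\tau)\to\nu_a(\tau)$.

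The main obstacle is the bookkeeping in part (ii) and the use of Lemma~\ref{supp:lem:calibration_detailed} at the correct time argument. The process $U^\star(t)$ integrates over $(0,t)$, while the IPCW weight uses $K^\star(t-)$; these must match so that the same factor $R_K(t\mid\bZ)$ appears in both the IPCW and augmentation terms and cancels exactly. I would verify this first, checking the tie convention $Y^\dagger$ in the compensator computation, before the routine uniform-convergence step. A secondary issue is confirming that the bounded-variation norms of $\widehat\Lambda_a^{D,\mathrm{DR}}$ stay bounded in probability, which continuity of the product integral requires; I would obtain this from the boundedness of $1/\widehat K$, $\widehat U$ and $\widehat H$ assumed in the regularity conditions.
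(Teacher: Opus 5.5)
Your proposal is correct and follows the paper's proof essentially step for step. As in the paper, you condition on \(\bZ\) and use Lemma~\ref{supp:lem:ipcw_identity_detailed} together with the two parts of Lemma~\ref{supp:lem:calibration_detailed} to reduce the limiting numerator and denominator to \(\dd q_0^D(t;a\mid\bZ)\) and \(r_0^a(t\mid\bZ)\) in each case (through the cancellation \(R_K+(1-R_K)=1\) under \(\cO_a^D\)), average over \(\bZ\), and then invoke continuity of the product-integral and integration maps. The phrase ``up to a common factor'' is unnecessary, since both limits equal their full-data targets exactly. Your attention to keeping the variation of \(\widehat\Lambda_a^{D,\mathrm{DR}}\) bounded in probability supplies a detail the paper only asserts.
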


\begin{proof}
First suppose \(\cC_a\) holds, so \(K^\star=K_0^a\).  Conditional on \(\bZ\), Lemma~\ref{supp:lem:ipcw_identity_detailed} and Lemma~\ref{supp:lem:calibration_detailed} give
\begin{align*}
&\Ebb\left[\xi^a\{K_0^a(t-\mid\bZ)\}^{-1}\dd N^D(t)+\{1-\xi^aU^\star(t)\}\dd q^{D,\star}(t;a\mid\bZ)\mid\bZ\right] \\
&\quad=\Ebb\left[\xi^a\{K_0^a(t-\mid\bZ)\}^{-1}\dd N^D(t)\mid\bZ\right]
       +\Ebb\left[\{1-\xi^aU^\star(t)\}\dd q^{D,\star}(t;a\mid\bZ)\mid\bZ\right] \\
&\quad=\dd q_0^D(t;a\mid\bZ)
       +\{1-\Ebb(\xi^aU^\star(t)\mid\bZ)\}\dd q^{D,\star}(t;a\mid\bZ) \\
&\quad=\dd q_0^D(t;a\mid\bZ)+\{1-1\}\dd q^{D,\star}(t;a\mid\bZ) \\
&\quad=\dd q_0^D(t;a\mid\bZ).
\end{align*}
The denominator is treated identically:
\begin{align*}
&\Ebb\left[\xi^a\{K_0^a(t-\mid\bZ)\}^{-1}Y(t)+\{1-\xi^aU^\star(t)\}r^\star(t\mid\bZ)\mid\bZ\right] \\
&\quad=r_0^a(t\mid\bZ)+\{1-\Ebb(\xi^aU^\star(t)\mid\bZ)\}r^\star(t\mid\bZ) \\
&\quad=r_0^a(t\mid\bZ).
\end{align*}
Taking expectations over \(\bZ\) yields
\begin{align*}
  \dd\mathcal A_D^\star(t)
  &=\Ebb\{\dd q_0^D(t;a\mid\bZ)\}
    =\Ebb\{\Ebb(\dd N^{D,a}(t)\mid\bZ)\}
    =\Ebb\{\dd N^{D,a}(t)\},\\
  \mathcal B^\star(t)
  &=\Ebb\{r_0^a(t\mid\bZ)\}
    =\Ebb\{\Ebb(Y^a(t)\mid\bZ)\}
    =\Ebb\{Y^a(t)\},\\
  \dd\Lambda_D^\star(t)
  &=\frac{\dd\mathcal A_D^\star(t)}{\mathcal B^\star(t)}
    =\frac{\Ebb\{\dd N^{D,a}(t)\}}{\Ebb\{Y^a(t)\}}
    =\dd\Lambda_a^D(t).
\end{align*}

Now suppose \(\cO_a^D\) holds.  Then \(r^\star(t\mid\bZ)=r_0^a(t\mid\bZ)\) and \(\dd q^{D,\star}(t;a\mid\bZ)=\dd q_0^D(t;a\mid\bZ)\), whereas \(K^\star\) may be incorrect.  Let \(R_K(t\mid\bZ)=K_0^a(t-\mid\bZ)/K^\star(t-\mid\bZ)\).  By Lemmas~\ref{supp:lem:ipcw_identity_detailed} and~\ref{supp:lem:calibration_detailed},
\begin{align*}
&\Ebb\left[\xi^a\{K^\star(t-\mid\bZ)\}^{-1}\dd N^D(t)+\{1-\xi^aU^\star(t)\}\dd q_0^D(t;a\mid\bZ)\mid\bZ\right] \\
&\quad=R_K(t\mid\bZ)\dd q_0^D(t;a\mid\bZ)
       +\{1-\Ebb(\xi^aU^\star(t)\mid\bZ)\}\dd q_0^D(t;a\mid\bZ) \\
&\quad=R_K(t\mid\bZ)\dd q_0^D(t;a\mid\bZ)+\{1-R_K(t\mid\bZ)\}\dd q_0^D(t;a\mid\bZ) \\
&\quad=\dd q_0^D(t;a\mid\bZ).
\end{align*}
For the denominator,
\begin{align*}
&\Ebb\left[\xi^a\{K^\star(t-\mid\bZ)\}^{-1}Y(t)+\{1-\xi^aU^\star(t)\}r_0^a(t\mid\bZ)\mid\bZ\right] \\
&\quad=R_K(t\mid\bZ)r_0^a(t\mid\bZ)+\{1-R_K(t\mid\bZ)\}r_0^a(t\mid\bZ) \\
&\quad=r_0^a(t\mid\bZ).
\end{align*}
The same unconditional equalities and ratio calculation displayed above therefore give \(\dd\Lambda_D^\star(t)=\dd\Lambda_a^D(t)\).  The product-integral map \(\Lambda^D\mapsto S\) and the integral map \(S\mapsto\int_0^\tau S(t)\dd t\) are continuous under bounded variation and positivity.  Thus \(\widehat S_a^{\mathrm{DR}}\to_p S_a\) uniformly at continuity points and \(\widehat\nu_a^{\mathrm{DR}}(\tau)\to_p\nu_a(\tau)\).
\end{proof}

\begin{proposition}[Weighted recurrent-event component]
\label{supp:prop:irt_recurrent_dr_detailed}
Under Assumption \ref{asm:irt_consistency}, \ref{asm:irt_randomization}, \ref{asm:irt_censoring} and \ref{asm:irt_asymptotic_regularity}, if either \(\cC_a\) holds or \(\cO_a^D\cap\cO_a^R\) holds, then \(\dd\Lambda_R^\star(t)=\dd\Lambda_a^{R,\bw}(t)\) for all \(t\le\tau\).  Together with Proposition~\ref{supp:prop:irt_terminal_dr_detailed}, this implies \(\widehat\mu_a^{\mathrm{DR}}(\tau;\bw)\to_p\mu_a(\tau;\bw)\).
\end{proposition}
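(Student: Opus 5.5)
The argument follows the proof of Proposition~\ref{supp:prop:irt_terminal_dr_detailed}. We compute the conditional expectations given \(\bZ\) of the numerator integrand of \(\dd\mathcal A_R^\star(t)\) and of the denominator integrand of \(\mathcal B^\star(t)\), then integrate over \(\bZ\). The denominator \(\mathcal B^\star(t)\) is literally the same as in the terminal-event proposition. So under \(\cC_a\), and also under \(\cO_a^D\) (which is implied by \(\cO_a^D\cap\cO_a^R\)), we already have \(\mathcal B^\star(t)=\Ebb\{Y^a(t)\}=R_a(t-)\). This is positive by the positivity condition in Proposition~\ref{supp:prop:full_data_na}, so the ratio is well defined. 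Only the numerator needs new work.

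For the numerator, apply Lemma~\ref{supp:lem:ipcw_identity_detailed} with \(\dd G^a(t)=\dd N^{a,\bw}(t)\). This is a finite-variation increment stopped by death. The observed increment is \(\dd N^{\bw}(t)=\ind(C\ge t)\dd N^{A,\bw}(t)\), and on \(\{A=a\}\) consistency makes it \(\ind(C^a\ge t)\dd N^{a,\bw}(t)\). The lemma then gives \(\Ebb[\xi^a\{K^\star(t-\mid\bZ)\}^{-1}\dd N^{\bw}(t)\mid\bZ]=R_K(t\mid\bZ)\,\dd q_0^{R,\bw}(t;a\mid\bZ)\), where \(R_K=K_0^a/K^\star\). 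The augmentation term contributes \(\{1-\Ebb(\xi^aU^\star(t)\mid\bZ)\}\dd q^{R,\bw,\star}(t;a\mid\bZ)\), because \(\dd q^{R,\bw,\star}\) is a function of \(\bZ\) alone.

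\emph{Case \(\cC_a\).} Here \(R_K\equiv1\), and Lemma~\ref{supp:lem:calibration_detailed}(i) gives \(\Ebb(\xi^aU^\star\mid\bZ)=1\). The conditional numerator is therefore \(\dd q_0^{R,\bw}\), whatever \(\dd q^{R,\bw,\star}\) and \(H^\star\) are.

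\emph{Case \(\cO_a^D\cap\cO_a^R\).} Here \(H^\star=H_0^a\), so Lemma~\ref{supp:lem:calibration_detailed}(ii) gives \(\Ebb(\xi^aU^\star\mid\bZ)=R_K(t\mid\bZ)\). Since also \(\dd q^{R,\bw,\star}=\dd q_0^{R,\bw}\), the conditional numerator is \(R_K\,\dd q_0^{R,\bw}+(1-R_K)\,\dd q_0^{R,\bw}=\dd q_0^{R,\bw}\). This case needs both conditions. \(\cO_a^D\) is what makes \(U^\star\) calibrate to \(R_K\), because \(U^\star\) is built with \(H^\star\). \(\cO_a^R\) is what makes the augmentation term cancel the inverse-weighted term exactly.

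Taking the outer expectation gives \(\dd\mathcal A_R^\star(t)=\Ebb\{\dd N^{a,\bw}(t)\}\), and hence \(\dd\Lambda_R^\star(t)=\dd\Lambda_a^{R,\bw}(t)\) for all \(t\le\tau\).

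The final step is consistency of \(\widehat\mu_a^{\mathrm{DR}}\), and this is where I expect the main obstacle. By Proposition~\ref{supp:prop:irt_terminal_dr_detailed}, \(\widehat S_a^{\mathrm{DR}}\to_p S_a\) uniformly, since \(\cC_a\) or \(\cO_a^D\) holds. The regularity conditions in Assumption~\ref{asm:irt_asymptotic_regularity} should give uniform convergence of \(\widehat\Lambda_a^{R,\mathrm{DR},\bw}\) to its limit \(\Lambda_R^\star=\Lambda_a^{R,\bw}\) on \([0,\tau]\). Getting this uniformly in \(t\) requires Glivenko--Cantelli arguments for the numerator and denominator processes together with uniform convergence of the plug-in nuisances.

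With that in hand, use the continuity of the Stieltjes map \((S,\Lambda)\mapsto\int_0^\tau S(t-)\dd\Lambda(t)\) under a bounded-variation bound. Split the error as
\[
\int_0^\tau(\widehat S-S)(t-)\,\dd\widehat\Lambda(t)+\int_0^\tau S(t-)\,\dd(\widehat\Lambda-\Lambda)(t).
\]
The first term is at most \(\sup_t|\widehat S-S|\) times the total variation of \(\widehat\Lambda\), which is \(O_p(1)\). Note that the estimated increments \(\dd\widehat\Lambda\) may be negative, so a total-variation bound is needed rather than monotonicity. The second term is handled by integration by parts, since \(S\) is monotone. Together these give \(\widehat\mu_a^{\mathrm{DR}}(\tau;\bw)\to_p\int_0^\tau S_a(t-)\dd\Lambda_a^{R,\bw}(t)\), which equals \(\mu_a(\tau;\bw)\) by Proposition~\ref{supp:prop:full_data_na}.
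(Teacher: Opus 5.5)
Your proposal is correct and follows the paper's proof essentially step for step. It uses the same conditional-on-\(\bZ\) computation via Lemma~\ref{supp:lem:ipcw_identity_detailed} and the two cases of Lemma~\ref{supp:lem:calibration_detailed}, the shared denominator from Proposition~\ref{supp:prop:irt_terminal_dr_detailed}, and then Stieltjes continuity for \(\widehat\mu_a^{\mathrm{DR}}(\tau;\bw)\); your two-term error split with a total-variation bound on \(\widehat\Lambda_a^{R,\mathrm{DR},\bw}\) (needed because its increments may be negative) just spells out the final convergence step that the paper states in one line.
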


\begin{proof}
The denominator of the recurrent-event local estimator is the same \(\mathcal B^\star(t)\) as in the terminal-event component.  If \(\cC_a\) holds, then conditional on \(\bZ\),
\begin{align*}
&\Ebb\left[\xi^a\{K_0^a(t-\mid\bZ)\}^{-1}\dd N^{\bw}(t)+\{1-\xi^aU^\star(t)\}\dd q^{R,\bw,\star}(t;a\mid\bZ)\mid\bZ\right] \\
&\quad=\Ebb\left[\xi^a\{K_0^a(t-\mid\bZ)\}^{-1}\dd N^{\bw}(t)\mid\bZ\right]
       +\{1-\Ebb(\xi^aU^\star(t)\mid\bZ)\}\dd q^{R,\bw,\star}(t;a\mid\bZ) \\
&\quad=\dd q_0^{R,\bw}(t;a\mid\bZ)+\{1-1\}\dd q^{R,\bw,\star}(t;a\mid\bZ) \\
&\quad=\dd q_0^{R,\bw}(t;a\mid\bZ).
\end{align*}
The denominator has conditional expectation \(r_0^a(t\mid\bZ)\) by Proposition~\ref{supp:prop:irt_terminal_dr_detailed}.  Therefore
\begin{align*}
  \dd\Lambda_R^\star(t)
  &=\frac{\Ebb\{\dd q_0^{R,\bw}(t;a\mid\bZ)\}}{\Ebb\{r_0^a(t\mid\bZ)\}} \\
  &=\frac{\Ebb\{\dd N^{a,\bw}(t)\}}{\Ebb\{Y^a(t)\}} \\
  &=\dd\Lambda_a^{R,\bw}(t).
\end{align*}
If instead \(\cO_a^D\cap\cO_a^R\) holds, then \(r^\star=r_0^a\), \(\dd q^{D,\star}=\dd q_0^D\), and \(\dd q^{R,\bw,\star}=\dd q_0^{R,\bw}\).  With \(R_K(t\mid\bZ)=K_0^a(t-\mid\bZ)/K^\star(t-\mid\bZ)\),
\begin{align*}
&\Ebb\left[\xi^a\{K^\star(t-\mid\bZ)\}^{-1}\dd N^{\bw}(t)+\{1-\xi^aU^\star(t)\}\dd q_0^{R,\bw}(t;a\mid\bZ)\mid\bZ\right] \\
&\quad=R_K(t\mid\bZ)\dd q_0^{R,\bw}(t;a\mid\bZ)
       +\{1-R_K(t\mid\bZ)\}\dd q_0^{R,\bw}(t;a\mid\bZ) \\
&\quad=\dd q_0^{R,\bw}(t;a\mid\bZ),\\
&\Ebb\left[\xi^a\{K^\star(t-\mid\bZ)\}^{-1}Y(t)+\{1-\xi^aU^\star(t)\}r_0^a(t\mid\bZ)\mid\bZ\right] \\
&\quad=R_K(t\mid\bZ)r_0^a(t\mid\bZ)+\{1-R_K(t\mid\bZ)\}r_0^a(t\mid\bZ) \\
&\quad=r_0^a(t\mid\bZ).
\end{align*}
Taking expectations and forming the ratio again gives \(\dd\Lambda_R^\star(t)=\dd\Lambda_a^{R,\bw}(t)\).  By Proposition~\ref{supp:prop:irt_terminal_dr_detailed}, the survival factor in the Stieltjes integral is also consistent.  Hence
\begin{align*}
  \widehat\mu_a^{\mathrm{DR}}(\tau;\bw)
  &=\int_0^\tau\widehat S_a^{\mathrm{DR}}(t-)\dd\widehat\Lambda_a^{R,\mathrm{DR},\bw}(t)
    \overset{p}{\longrightarrow}
    \int_0^\tau S_a(t-)\dd\Lambda_a^{R,\bw}(t)
    =\mu_a(\tau;\bw).
\end{align*}
\end{proof}

\begin{theorem}[Componentwise double robustness for the while-alive rate]
\label{supp:thm:irt_dr_detailed}
Under Assumption \ref{asm:irt_consistency}, \ref{asm:irt_randomization}, \ref{asm:irt_censoring} and \ref{asm:irt_asymptotic_regularity}, \(\widehat\nu_a^{\mathrm{DR}}(\tau)\) is consistent if \(\cC_a\) or \(\cO_a^D\) holds; \(\widehat\mu_a^{\mathrm{DR}}(\tau;\bw)\) is consistent if \(\cC_a\) or \(\cO_a^D\cap\cO_a^R\) holds; and \(\widehat\psi_a^{\mathrm{DR}}(\tau;\bw)\) is consistent whenever both components are consistent and \(\nu_a(\tau)>0\).  The treatment contrast is consistent if the corresponding arm-specific conditions hold in both arms.
\end{theorem}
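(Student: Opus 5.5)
The plan is to assemble the statement from the two component propositions already established, Proposition~\ref{supp:prop:irt_terminal_dr_detailed} and Proposition~\ref{supp:prop:irt_recurrent_dr_detailed}, and then apply continuous-mapping arguments.

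\emph{Step 1: the denominator.} By Proposition~\ref{supp:prop:irt_terminal_dr_detailed}, under $\cC_a$ or $\cO_a^D$ the limiting local increment $\dd\Lambda_D^\star$ equals $\dd\Lambda_a^D$ on $[0,\tau]$. I will use the regularity conditions of Assumption~\ref{asm:irt_asymptotic_regularity}: uniform convergence of the fitted nuisances, positivity of $K^\star$ and $H^\star$, and a Glivenko--Cantelli property for the numerator and denominator processes. These give $\sup_{t\le\tau}|\widehat\Lambda_a^{D,\mathrm{DR}}(t)-\Lambda_a^D(t)|\to_p0$. Two facts drive this. First, the empirical denominator converges uniformly to $\mathcal B^\star(t)$, and $\mathcal B^\star(t)=\Ebb\{Y^a(t)\}=R_a(t)\ge\inf_{t\le\tau}R_a(t-)>0$, so it is bounded away from zero. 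Second, the cumulative numerator $\int_0^t\dd\widehat{\mathcal A}_D$ converges uniformly to its limit. Integrating the reciprocal denominator against the numerator then yields uniform convergence of the cumulative hazard. Continuity of the product-integral map in sup norm on functions of uniformly bounded variation (Gill--Johansen) gives $\widehat S_a^{\mathrm{DR}}\to_p S_a$ uniformly, and Proposition~\ref{supp:prop:full_data_na} identifies the limit. Because $t\mapsto\int_0^\tau S(t-)\dd t$ is Lipschitz in sup norm, $\widehat\nu_a^{\mathrm{DR}}(\tau)\to_p\nu_a(\tau)$.

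\emph{Step 2: the numerator.} Under $\cC_a$ or $\cO_a^D\cap\cO_a^R$, both Propositions apply: Step 1 holds because $\cO_a^D$ is included, and $\dd\Lambda_R^\star=\dd\Lambda_a^{R,\bw}$. I will write
\[
\widehat\mu_a^{\mathrm{DR}}-\mu_a=\int_0^\tau\{\widehat S_a^{\mathrm{DR}}(t-)-S_a(t-)\}\dd\widehat\Lambda_a^{R,\mathrm{DR},\bw}(t)+\int_0^\tau S_a(t-)\dd\{\widehat\Lambda_a^{R,\mathrm{DR},\bw}-\Lambda_a^{R,\bw}\}(t).
\]
The first term is bounded by $\sup|\widehat S-S|$ times the total variation of $\widehat\Lambda^{R}$. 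That total variation is $O_p(1)$: it is bounded by the integrated absolute numerator divided by the denominator bound, and $\Ebb N^{a,\bw}(\tau)<\infty$ together with the uniform positivity of $K^\star$ controls the absolute numerator. The second term is handled by integration by parts. Since $S_a$ is monotone, it is bounded by $2\sup_t|\widehat\Lambda^R-\Lambda^R|$, which is $o_p(1)$ by the same argument as in Step 1. Proposition~\ref{supp:prop:full_data_na} identifies the limit $\mu_a$.

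\emph{Step 3: the ratio and the contrast.} With $\nu_a(\tau)>0$, the map $(\mu,\nu)\mapsto\mu/\nu$ is continuous at $(\mu_a,\nu_a)$. The continuous mapping theorem gives consistency of $\widehat\psi_a^{\mathrm{DR}}$, and differencing across the two arms gives consistency of $\widehat\Delta^{\mathrm{DR}}$.

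The main obstacle is the passage from the pointwise population identity $\dd\Lambda^\star=\dd\Lambda$ to uniform convergence of the cumulative estimator. This requires a stochastic equicontinuity or Glivenko--Cantelli argument for processes that contain the estimated augmentation $\widehat U_i^a$, which is itself an integral with respect to an estimated martingale. The first approach I would try is to show that the class of functions $\{\xi^a K^{-1}N^h,\ (1-\xi^aU)q\}$, indexed by nuisance parameters in a neighborhood of their limits, is Glivenko--Cantelli. This should follow from bounded variation and a bracketing argument for monotone processes, combined with uniform positivity of $K$ and $H$ so that all the reciprocals stay bounded.
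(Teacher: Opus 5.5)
Your proposal is correct and matches the paper's proof. The paper gets consistency of $\widehat\nu_a^{\mathrm{DR}}(\tau)$ and $\widehat\mu_a^{\mathrm{DR}}(\tau;\bw)$ directly from Propositions~\ref{supp:prop:irt_terminal_dr_detailed} and~\ref{supp:prop:irt_recurrent_dr_detailed}, then uses continuity of $(x,y)\mapsto x/y$ at $\nu_a(\tau)>0$ and linearity for the contrast, exactly as in your Step~3; your Steps~1--2 just spell out the passage from the population identities $\dd\Lambda^\star=\dd\Lambda$ to uniform convergence (total-variation bounds, Gill--Johansen continuity, integration by parts), which the paper compresses into a one-line continuity remark inside those propositions and implicitly leaves to Assumption~\ref{asm:irt_asymptotic_regularity}.
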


\begin{proof}
The consistency statements for \(\widehat\nu_a^{\mathrm{DR}}\) and \(\widehat\mu_a^{\mathrm{DR}}\) are exactly Propositions~\ref{supp:prop:irt_terminal_dr_detailed} and~\ref{supp:prop:irt_recurrent_dr_detailed}.  Since \(\nu_a(\tau)>0\), the map \((x,y)\mapsto x/y\) is continuous at \((\mu_a(\tau;\bw),\nu_a(\tau))\).  Therefore
\begin{align*}
  \widehat\psi_a^{\mathrm{DR}}(\tau;\bw)
  &=\frac{\widehat\mu_a^{\mathrm{DR}}(\tau;\bw)}{\widehat\nu_a^{\mathrm{DR}}(\tau)} \\
  &\overset{p}{\longrightarrow}
    \frac{\mu_a(\tau;\bw)}{\nu_a(\tau)} \\
  &=\psi_a(\tau;\bw).
\end{align*}
For \(\Delta(\tau;\bw)=\psi_1(\tau;\bw)-\psi_0(\tau;\bw)\), linearity gives
\[
  \widehat\Delta^{\mathrm{DR}}(\tau;\bw)
  =\widehat\psi_1^{\mathrm{DR}}(\tau;\bw)-\widehat\psi_0^{\mathrm{DR}}(\tau;\bw)
  \overset{p}{\longrightarrow}
  \psi_1(\tau;\bw)-\psi_0(\tau;\bw)=\Delta(\tau;\bw).
\]
\end{proof}

\section{Asymptotic expansion and analytic variance in IRTs}
\label{supp:sec:irt_asymptotic_detailed}

This section derives the influence functions used in the analytic variance.  The formulas keep the nuisance-estimation terms instead of treating \(K\), \(H\), \(q^D\), \(q^{R,\bw}\), and \(U\) as fixed.  All expressions are arm-specific; to reduce visual clutter, the arm superscript is suppressed when the meaning is unambiguous.  Section~\ref{supp:sec:local_ratio_derivative_detailed} gives the local ratio linearization, and Section~\ref{supp:sec:functional_delta_detailed} gives the product-integral, Stieltjes-integral, and ratio delta-method steps.

\subsection{Estimating equations for nuisance working models}

Throughout this section, $K_i^{a,\star}$, $H_i^{a,\star}$, $\dd q_i^{D,\star}$, and $\dd q_i^{R,\bw,\star}$ denote the uniform probability limits of the fitted nuisance functions, as defined in Section 2.2 of the main text. Let $U_i^{a,\star}(t)$ denote the uniform probability limit of $\widehat U_i^a(t)$, define $r_i^{a,\star}(t)=H_i^{a,\star}(t-)$, and define the limiting augmented increments
\begin{align*}
  \dd A_i^{D,a,\star}(t)&=\xi_i^a\{K_i^{a,\star}(t-)\}^{-1}\dd N_i^D(t)+\{1-\xi_i^a U_i^{a,\star}(t)\}\dd q_i^{D,\star}(t;a), \\
  \dd A_i^{R,a,\star}(t)&=\xi_i^a\{K_i^{a,\star}(t-)\}^{-1}\dd N_i^{\bw}(t)+\{1-\xi_i^a U_i^{a,\star}(t)\}\dd q_i^{R,\bw,\star}(t;a),
\end{align*}
and the limiting augmented risk set $B_i^{a,\star}(t)=\xi_i^a\{K_i^{a,\star}(t-)\}^{-1}Y_i(t)+\{1-\xi_i^a U_i^{a,\star}(t)\}r_i^{a,\star}(t)$, with $B_a^\star(t)=\Ebb\{B_i^{a,\star}(t)\}$. The probability limits of the two local estimators are then
\begin{align}
  \dd\Lambda_a^{D,\star}(t)&=\frac{\Ebb\{\dd A_i^{D,a,\star}(t)\}}{B_a^\star(t)},
  \label{eq:irt_LamD_star}\\
  \dd\Lambda_a^{R,\bw,\star}(t)&=\frac{\Ebb\{\dd A_i^{R,a,\star}(t)\}}{B_a^\star(t)}.
  \label{eq:irt_LamR_star}
\end{align}

\begin{assumption}[Asymptotic regularity]
\label{asm:irt_asymptotic_regularity}
For each $a\in\{0,1\}$, the following conditions hold on $[0,\tau]$.
\begin{enumerate}[label=(\roman*),ref=\ref{asm:irt_asymptotic_regularity}(\roman*)]
\item\label{asm:reg_iid} The observed data $\bO_1,\ldots,\bO_n$ are independent and identically distributed.
\item\label{asm:reg_positivity} The treatment probability $\pi_a$, the true censoring survival function $K_i^a(t)$, the limiting censoring survival function $K_i^{a,\star}(t)$, and the limiting augmented risk set $B_a^\star(t)$ are uniformly bounded away from zero.
\item\label{asm:reg_variation} The terminal event cumulative hazards, censoring cumulative hazards, and recurrent event cumulative mean functions have uniformly bounded total variation on $[0,\tau]$.
\item\label{asm:reg_moment} The weighted recurrent event count satisfies $\Ebb\{N_i^{a,\bw}(\tau)^2\}<\infty$.
\item\label{asm:reg_nuisance} The fitted censoring, terminal event, and recurrent event working models converge uniformly to their probability limits and admit asymptotically linear representations.
\item\label{asm:reg_hadamard} The product integral map defining the marginal survival function and the Stieltjes integral maps defining $\mu_a(\tau;\bw)$ and $\nu_a(\tau)$ are Hadamard differentiable at their limiting values.
\end{enumerate}
\end{assumption}
 
Assumption \ref{asm:irt_asymptotic_regularity} collects standard regularity conditions for the working models and for smooth functionals of cumulative hazard estimators \citep{andersen1982cox,lin2000semiparametric,andersen1993statistical,tsiatis2006semiparametric,vandervaart1998asymptotic}. Assumption \ref{asm:reg_iid} specifies the independent and identically distributed sampling framework. Assumption \ref{asm:reg_positivity} imposes positivity conditions, ensuring stable inverse probability weights and well-defined augmentation terms. Assumption \ref{asm:reg_variation} requires bounded variation of the cumulative hazard and mean functions, guaranteeing that the associated product integral and Stieltjes integral functionals are well behaved. Assumption \ref{asm:reg_moment} is a finite moment condition ensuring finite asymptotic variance of the weighted recurrent event burden. Assumption \ref{asm:reg_nuisance} collects the standard empirical process conditions for Cox type censoring and terminal event working models and the LWYY marginal proportional rates working model for recurrent events, ensuring uniformly consistent and asymptotically linear nuisance estimators even under possible working model misspecification. Assumption \ref{asm:reg_hadamard} requires Hadamard differentiability of the product integral and Stieltjes integral maps so that the functional delta method applies to the proposed estimators. Full details, including the nuisance influence function expansions under possible misspecification, are given in Appendix \ref{supp:sec:irt_asymptotic_detailed}.

For censoring, the arm-specific Cox working model in the main text is \(\lambda_C^a(t\mid\bZ_i)=\lambda_{C0}^a(t)\exp\{\balpha_C^{a\tr}\bh_C(\bZ_i)\}\).  The score equation and Breslow estimator are
\[
\begin{aligned}
  0
  &=\sum_{i=1}^n\int_0^\tau \xi_i^a\{\bh_C(\bZ_i)-\bar\bh_C(t;\widehat{\balpha}_C^a)\}\dd N_i^C(t),\\
  \dd\widehat\Lambda_{C0}^a(t)
  &=\frac{\sum_{i=1}^n \xi_i^a\dd N_i^C(t)}{\sum_{i=1}^n\xi_i^aY_i^\dagger(t)\exp\{\widehat{\balpha}_C^{a\tr}\bh_C(\bZ_i)\}},\\
  \bar\bh_C(t;\balpha)
  &=\frac{\sum_{i=1}^n\xi_i^aY_i^\dagger(t)\exp\{\balpha\tr\bh_C(\bZ_i)\}\bh_C(\bZ_i)}{\sum_{i=1}^n\xi_i^aY_i^\dagger(t)\exp\{\balpha\tr\bh_C(\bZ_i)\}}.
\end{aligned}
\]
For the terminal event, the same display holds after replacing \((C,Y^\dagger,N^C,\balpha_C,\bh_C)\) by \((D,Y,N^D,\bbeta_D,\bh_D)\).  For recurrent-event type \(k\), the LWYY working model in the main text uses
\[
  \Ebb\{\dd N_{ik}^a(t)\mid\bZ_i\}=H_i^a(t-\mid\bZ_i)\exp\{\bgamma_{Rk}^{a\tr}\bh_R(\bZ_i)\}\dd\Lambda_{Rk0}^a(t).
\]
The estimating equation and Breslow-type baseline estimator are
\[
\begin{aligned}
  0
  &=\sum_{i=1}^n\int_0^\tau \xi_i^a\{\bh_R(\bZ_i)-\bar\bh_{Rk}(t;\widehat{\bgamma}_{Rk}^a)\}\dd N_{ik}(t),\\
  \dd\widehat\Lambda_{Rk0}^a(t)
  &=\frac{\sum_{i=1}^n\xi_i^a\dd N_{ik}(t)}{\sum_{i=1}^n\xi_i^aY_i(t)\exp\{\widehat{\bgamma}_{Rk}^{a\tr}\bh_R(\bZ_i)\}},\\
  \bar\bh_{Rk}(t;\bgamma)
  &=\frac{\sum_{i=1}^n\xi_i^aY_i(t)\exp\{\bgamma\tr\bh_R(\bZ_i)\}\bh_R(\bZ_i)}{\sum_{i=1}^n\xi_i^aY_i(t)\exp\{\bgamma\tr\bh_R(\bZ_i)\}}.
\end{aligned}
\]
The LWYY residual is a mean-zero estimating-equation residual; it is a martingale if a multiplicative intensity model also holds, but the large-sample expansion only requires the mean-zero score condition.

\subsection{First-order expansion of the nuisance estimators}

The first order effects of estimating the working models enter the influence processes as follows. For the censoring and terminal event models, they enter through the LWYY estimating equation and the Breslow baseline rate estimator, whose residuals are mean zero but do not form martingales. For continuous terminal event cumulative hazards, the induced influence functions for the restricted mean survival time and the recurrent event burden are
\begin{align*}
  \phi_{\nu,a,i}
  &=-\int_0^\tau\left\{\int_u^\tau S_a(t)\dd t\right\}\dd\phi_{\Lambda^D,a,i}(u),\\
  \phi_{\mu,a,i}
  &=\int_0^\tau S_a(t-)\dd\phi_{\Lambda^{R,\bw},a,i}(t)\\
  &\quad-\int_0^\tau\left\{\int_u^\tau S_a(t-)\dd\Lambda_a^{R,\bw}(t)\right\}\dd\phi_{\Lambda^D,a,i}(u).
\end{align*}
The first term in $\phi_{\mu,a,i}$ is the direct contribution from estimating the marginal weighted recurrent event rate, and the second term is the indirect contribution through the estimated marginal survival function.

Let \(\bullet\in\{C,D,R_1,\ldots,R_K\}\).  For \(\bullet=C\), define \(\widetilde Y_{C,i}(t)=Y_i^\dagger(t)\); for \(\bullet=D\) and \(\bullet=R_k\), define \(\widetilde Y_{\bullet,i}(t)=Y_i(t)\).  Let \(\theta_\bullet^\star\), \(\Lambda_{\bullet0}^\star\), and \(\bh_\bullet\) denote the probability-limit coefficient, baseline cumulative hazard or mean-rate, and covariate transform.  Define
\[
\begin{aligned}
  s_\bullet^{(r)}(t)
  &=\Ebb\left[\xi_i^a\widetilde Y_{\bullet,i}(t)\exp\{\theta_\bullet^{\star\tr}\bh_\bullet(\bZ_i)\}\bh_\bullet(\bZ_i)^{\otimes r}\right],\quad r=0,1,2,\\
  \bar\bh_\bullet(t)&=s_\bullet^{(1)}(t)/s_\bullet^{(0)}(t),\\
  V_\bullet(t)&=s_\bullet^{(2)}(t)/s_\bullet^{(0)}(t)-\bar\bh_\bullet(t)^{\otimes2},\\
  A_\bullet&=\int_0^\tau V_\bullet(t)s_\bullet^{(0)}(t)\dd\Lambda_{\bullet0}^\star(t).
\end{aligned}
\]
Let
\[
  \dd M_{\bullet i}^\star(t)=\dd N_{\bullet i}(t)-\widetilde Y_{\bullet,i}(t)\exp\{\theta_\bullet^{\star\tr}\bh_\bullet(\bZ_i)\}\dd\Lambda_{\bullet0}^\star(t).
\]
The individual score contribution and baseline contribution are
\[
\begin{aligned}
  U_{\bullet i}
  &=\int_0^\tau\xi_i^a\{\bh_\bullet(\bZ_i)-\bar\bh_\bullet(t)\}\dd M_{\bullet i}^\star(t),\\
  \phi_{\bullet0,i}(t)
  &=\int_0^t\frac{\xi_i^a\dd M_{\bullet i}^\star(u)}{s_\bullet^{(0)}(u)}
    -c_\bullet(t)\tr A_\bullet^{-1}U_{\bullet i},\\
  c_\bullet(t)&=\int_0^t\bar\bh_\bullet(u)\dd\Lambda_{\bullet0}^\star(u).
\end{aligned}
\]
A Taylor expansion of the score equation and the Breslow estimator gives
\[
\begin{aligned}
  \sqrt n(\widehat\theta_\bullet-\theta_\bullet^\star)
  &=A_\bullet^{-1}n^{-1/2}\sum_{i=1}^nU_{\bullet i}+o_p(1),\\
  \sqrt n\{\widehat\Lambda_{\bullet0}(t)-\Lambda_{\bullet0}^\star(t)\}
  &=n^{-1/2}\sum_{i=1}^n\phi_{\bullet0,i}(t)+o_p(1).
\end{aligned}
\]
For a target subject \(j\), define the induced cumulative hazard or rate \(\Lambda_{\bullet j}^\star(t)=\exp\{\theta_\bullet^{\star\tr}\bh_\bullet(\bZ_j)\}\Lambda_{\bullet0}^\star(t)\).  The effect of observation \(i\) on subject \(j\)'s fitted cumulative hazard or rate is
\[
\begin{aligned}
  \Gamma_{\bullet ji}(t)
  &=\exp\{\theta_\bullet^{\star\tr}\bh_\bullet(\bZ_j)\}\phi_{\bullet0,i}(t)\\
  &\quad+\Lambda_{\bullet0}^\star(t)\exp\{\theta_\bullet^{\star\tr}\bh_\bullet(\bZ_j)\}\bh_\bullet(\bZ_j)\tr A_\bullet^{-1}U_{\bullet i}.
\end{aligned}
\]
Consequently,
\[
\begin{aligned}
  \sqrt n\{\widehat K_j(t)-K_j^\star(t)\}
  &=-K_j^\star(t)n^{-1/2}\sum_{i=1}^n\Gamma_{Cji}(t)+o_p(1),\\
  \sqrt n\{\widehat H_j(t)-H_j^\star(t)\}
  &=-H_j^\star(t)n^{-1/2}\sum_{i=1}^n\Gamma_{Dji}(t)+o_p(1),\\
  \sqrt n\{\dd\widehat\Lambda_j^{R,\bw}(t)-\dd\Lambda_j^{R,\bw,\star}(t)\}
  &=n^{-1/2}\sum_{i=1}^n\dd\Gamma_{R,ji}^{\bw}(t)+o_p(1),
\end{aligned}
\]
where \(\dd\Gamma_{R,ji}^{\bw}(t)=\sum_{k=1}^K w_k\dd\Gamma_{R_k,ji}(t)\).

\subsection{Linearization of \texorpdfstring{\(q\), \(U\)}{q, U}, and the local ratio}
\label{supp:sec:local_ratio_derivative_detailed}

The fitted full data increments are \(\dd q_j^{D,\star}(t)=H_j^\star(t-)\dd\Lambda_j^{D,\star}(t)\) and \(\dd q_j^{R,\bw,\star}(t)=H_j^\star(t-)\dd\Lambda_j^{R,\bw,\star}(t)\).  The product rule gives the perturbations caused by observation \(i\):
\[
\begin{aligned}
  \dd\dot q_{D,ji}(t)
  &=H_j^\star(t-)\dd\Gamma_{Dji}(t)-H_j^\star(t-)\Gamma_{Dji}(t-)\dd\Lambda_j^{D,\star}(t),\\
  \dd\dot q_{R,ji}^{\bw}(t)
  &=H_j^\star(t-)\dd\Gamma_{R,ji}^{\bw}(t)-H_j^\star(t-)\Gamma_{Dji}(t-)\dd\Lambda_j^{R,\bw,\star}(t).
\end{aligned}
\]
To derive the perturbation of \(U_j\), write \(W_j(u)=\{K_j^\star(u-)H_j^\star(u-)\}^{-1}\).  Since \(\dot K_j(u-)=-K_j^\star(u-)\Gamma_{Cji}(u-)\) and \(\dot H_j(u-)=-H_j^\star(u-)\Gamma_{Dji}(u-)\),
\[
  \dot W_{ji}(u)=W_j(u)\{\Gamma_{Cji}(u-)+\Gamma_{Dji}(u-)\}.
\]
Also \(\dd\dot M_{C,ji}(u)=-Y_j^\dagger(u)\dd\Gamma_{Cji}(u)\).  Therefore
\begin{align*}
  \dot U_{ji}(t)
  &=-\int_{(0,t)}\dot W_{ji}(u)\dd M_{C,j}^\star(u)-\int_{(0,t)}W_j(u)\dd\dot M_{C,ji}(u) \\
  &=-\int_{(0,t)}\frac{\Gamma_{Cji}(u-)+\Gamma_{Dji}(u-)}{K_j^\star(u-)H_j^\star(u-)}\dd M_{C,j}^\star(u)
    +\int_{(0,t)}\frac{Y_j^\dagger(u)\dd\Gamma_{Cji}(u)}{K_j^\star(u-)H_j^\star(u-)}.
\end{align*}
For \(h\in\{D,R\}\), define the probability limit local numerator and denominator terms
\[
\begin{aligned}
  \dd A_{h,j}(t)
  &=\xi_j^a\{K_j^\star(t-)\}^{-1}\dd N_j^h(t)+\{1-\xi_j^aU_j^\star(t)\}\dd q_j^{h,\star}(t),\\
  B_j(t)
  &=\xi_j^a\{K_j^\star(t-)\}^{-1}Y_j(t)+\{1-\xi_j^aU_j^\star(t)\}r_j^\star(t).
\end{aligned}
\]
Let \(a_h(t)=\Ebb\{\dd A_{h,j}(t)\}\), \(b(t)=\Ebb\{B_j(t)\}\), and \(\dd\Lambda_h^\star(t)=a_h(t)/b(t)\).  Differentiating the three pieces in \(\dd A_{h,j}\) gives
\[
\begin{aligned}
  \dot A_{h,t}[i]
  &=\Ebb_j\Bigl[\xi_j^a\{K_j^\star(t-)\}^{-1}\Gamma_{Cji}(t-)\dd N_j^h(t)
       -\xi_j^a\dot U_{ji}(t)\dd q_j^{h,\star}(t)\\
  &\hspace{9em}+\{1-\xi_j^aU_j^\star(t)\}\dd\dot q_{h,ji}(t)\Bigr],\\
  \dot B_t[i]
  &=\Ebb_j\Bigl[\xi_j^a\{K_j^\star(t-)\}^{-1}\Gamma_{Cji}(t-)Y_j(t)
       -\xi_j^a\dot U_{ji}(t)r_j^\star(t)\\
  &\hspace{9em}-\{1-\xi_j^aU_j^\star(t)\}H_j^\star(t-)\Gamma_{Dji}(t-)\Bigr],
\end{aligned}
\]
where \(\Ebb_j\) means expectation over an independent copy \(j\), while \(i\) is the observation whose nuisance influence contribution is being propagated.  Thus the sample numerator and denominator satisfy
\[
\begin{aligned}
  \sqrt n\{\widehat a_h(t)-a_h(t)\}
  &=n^{-1/2}\sum_{i=1}^n\left[\dd A_{h,i}(t)-a_h(t)+\dot A_{h,t}[i]\right]+o_p(1),\\
  \sqrt n\{\widehat b(t)-b(t)\}
  &=n^{-1/2}\sum_{i=1}^n\left[B_i(t)-b(t)+\dot B_t[i]\right]+o_p(1).
\end{aligned}
\]
Set \(\zeta_{h,i}(t)=\dd A_{h,i}(t)-a_h(t)+\dot A_{h,t}[i]\) and \(\zeta_{B,i}(t)=B_i(t)-b(t)+\dot B_t[i]\).  Consecutive use of the quotient rule gives
\begin{align*}
  \sqrt n\{\dd\widehat\Lambda_h(t)-\dd\Lambda_h^\star(t)\}
  &=\sqrt n\left\{\frac{\widehat a_h(t)}{\widehat b(t)}-\frac{a_h(t)}{b(t)}\right\} \\
  &=\frac{\sqrt n\{\widehat a_h(t)-a_h(t)\}}{b(t)}-\frac{a_h(t)}{b(t)^2}\sqrt n\{\widehat b(t)-b(t)\}+o_p(1) \\
  &=n^{-1/2}\sum_{i=1}^n\frac{\zeta_{h,i}(t)-\dd\Lambda_h^\star(t)\zeta_{B,i}(t)}{b(t)}+o_p(1).
\end{align*}
Therefore the local influence process, including all first-order nuisance-estimation contributions, is
\[
  \phi_{\Lambda^h,i}(t)=\frac{\zeta_{h,i}(t)-\dd\Lambda_h^\star(t)\zeta_{B,i}(t)}{b(t)},
  \qquad h\in\{D,R\}.
\]
The empirical part \(\dd A_{h,i}-a_h\) and \(B_i-b\) is the variation that would remain if nuisances were known; the derivative terms \(\dot A_{h,t}[i]\) and \(\dot B_t[i]\) are precisely the additional variation from estimating \(K\), \(H\), \(q^D\), \(q^{R,\bw}\), and \(U\).

\subsection{From local increments to \texorpdfstring{\(\mu\), \(\nu\), \(\psi\)}{mu, nu, psi}, and the contrast}
\label{supp:sec:functional_delta_detailed}

Under the double-robust conditions in Theorem~\ref{supp:thm:irt_dr_detailed}, \(\dd\Lambda_D^\star=\dd\Lambda_a^D\) and \(\dd\Lambda_R^\star=\dd\Lambda_a^{R,\bw}\).  The product-integral derivative is
\[
  \phi_{S,i}(t)
  =-S_a(t)\int_{(0,t]}\frac{\dd\phi_{\Lambda^D,i}(u)}{1-\dd\Lambda_a^D(u)},
\]
with the simplification \(-S_a(t)\int_0^t\dd\phi_{\Lambda^D,i}(u)\) when \(\Lambda_a^D\) is continuous.  Integrating the survival influence function yields
\[
  \phi_{\nu,i}^a(\tau)=\int_0^\tau\phi_{S,i}(t)\dd t.
\]
For the recurrent event burden, the Stieltjes product rule gives
\[
  \phi_{\mu,i}^{a,\bw}(\tau)
  =\int_0^\tau S_a(t-)\dd\phi_{\Lambda^R,i}(t)+\int_0^\tau\phi_{S,i}(t-)\dd\Lambda_a^{R,\bw}(t).
\]
The first term is the direct effect of estimating the recurrent-event local rate, while the second term is the indirect effect of estimating survival.  The ratio map \((\mu,\nu)\mapsto\mu/\nu\) gives
\[
  \phi_{\psi,i}^{a,\bw}(\tau)=\frac{\phi_{\mu,i}^{a,\bw}(\tau)}{\nu_a(\tau)}-\frac{\mu_a(\tau;\bw)}{\nu_a(\tau)^2}\phi_{\nu,i}^a(\tau),
  \qquad
  \phi_{\Delta,i}^{\bw}(\tau)=\phi_{\psi,i}^{1,\bw}(\tau)-\phi_{\psi,i}^{0,\bw}(\tau).
\]
Consequently,
\[
  \sqrt n\{\widehat\Delta^{\mathrm{DR}}(\tau;\bw)-\Delta(\tau;\bw)\}
  =n^{-1/2}\sum_{i=1}^n\phi_{\Delta,i}^{\bw}(\tau)+o_p(1).
\]
The analytic influence function variance estimator is
\[
  \widehat V_{\mathrm{IF}}\{\widehat\Delta^{\mathrm{DR}}(\tau;\bw)\}
  =\frac{1}{n(n-1)}\sum_{i=1}^n\left\{\widehat\phi_{\Delta,i}^{\bw}(\tau)-n^{-1}\sum_{r=1}^n\widehat\phi_{\Delta,r}^{\bw}(\tau)\right\}^2.
\]
Every term in \(\widehat\phi_{\Delta,i}^{\bw}\) is obtained by replacing the population quantities in the preceding displays with their fitted or empirical analogues.  This variance therefore includes the variation from the augmented local estimating equations and the variation transmitted through the fitted Cox and LWYY nuisance estimators.



\section{Efficient influence function and efficiency gap in IRTs}
\label{supp:sec:eif}

This section distinguishes two influence function objects.  The first is the semiparametric efficient influence function for the unrestricted observed-data model with known treatment randomization and covariate-dependent right censoring.  The second is the first-order influence function induced by the proposed local Nelson--Aalen estimator when the local nuisance functions are correctly specified.  These two objects coincide only under an additional history-sufficiency condition.  Therefore, the influence function used for variance estimation in the main text should be interpreted as the influence function of the proposed local estimator; it is not, in general, the efficient influence function.

For a fixed treatment arm \(a\), let
\[
  \xi_i^a=\frac{\ind(A_i=a)}{\pi_a},
  \qquad
  K_{0i}^a(t)=\Pbb(C_i^a\ge t\mid\bZ_i),
\]
and let \(\dd M_i^C(t;a)=\dd N_i^C(t)-Y_i^\dagger(t)\dd\Lambda_{C0}^a(t\mid\bZ_i)\) denote the true censoring martingale increment in arm \(a\).  Define the two full data cumulative outcomes
\[
  G_{\mu,i}^a(t)=N_i^{a,\bw}(t),
  \qquad
  G_{\nu,i}^a(t)=\int_0^tY_i^a(s)\dd s.
\]
Thus
\[
  \theta_{\mu,a}=\Ebb\{G_{\mu,i}^a(\tau)\}=\mu_a(\tau;\bw),
  \qquad
  \theta_{\nu,a}=\Ebb\{G_{\nu,i}^a(\tau)\}=\nu_a(\tau).
\]
For \(\ell\in\{\mu,\nu\}\), write
\[
  m_{\ell,a}(\bZ_i)=\Ebb\{G_{\ell,i}^a(\tau)\mid\bZ_i\}.
\]
The observed increments are
\[
  \dd G_{\mu,i}^{\mathrm{obs}}(t)=\dd N_i^{\bw}(t),
  \qquad
  \dd G_{\nu,i}^{\mathrm{obs}}(t)=Y_i(t)\dd t.
\]
Here \(Y_i(t)=\ind(X_i\ge t)\) and \(N_i^{\bw}(t)\) are observed only up to censoring.

If the full data \(\{G_{\ell,i}^a(\tau):a=0,1\}\) were observed, the efficient influence function for \(\theta_{\ell,a}\) would be
\[
  D_{\ell,a}^{\mathrm{full}}=m_{\ell,a}(\bZ_i)-\theta_{\ell,a}+\xi_i^a\{G_{\ell,i}^a(\tau)-m_{\ell,a}(\bZ_i)\}.
\]
To verify this expression, let \(S\) be the full data score along an arbitrary regular parametric submodel.  Decompose the score into the baseline-covariate score and the conditional outcome-history score.  Since the randomization probability is known by design, the treatment law contributes no score.  Then
\begin{align*}
  \dot\theta_{\ell,a}
  &=\frac{\partial}{\partial\varepsilon}\Ebb_\varepsilon\{G_{\ell,i}^a(\tau)\}\bigg|_{\varepsilon=0} \\
  &=\Ebb\!\left[\{m_{\ell,a}(\bZ_i)-\theta_{\ell,a}\}S_{\bZ}(\bZ_i)\right]+\Ebb\!\left[ \Ebb\{G_{\ell,i}^a(\tau)-m_{\ell,a}(\bZ_i)\mid A_i=a,\bZ_i\}S_{G\mid A,\bZ}\right] \\
  &=\Ebb\!\left[\{m_{\ell,a}(\bZ_i)-\theta_{\ell,a}\}S_{\bZ}(\bZ_i)\right]+\Ebb\!\left[\frac{\ind(A_i=a)}{\pi_a}\{G_{\ell,i}^a(\tau)-m_{\ell,a}(\bZ_i)\}S_{G\mid A,\bZ}\right] \\
  &=\Ebb\{D_{\ell,a}^{\mathrm{full}}S\}.
\end{align*}
Moreover,
\[
  \Ebb\{D_{\ell,a}^{\mathrm{full}}\}=\Ebb\{m_{\ell,a}(\bZ_i)-\theta_{\ell,a}\}+\Ebb\!\left[\Ebb\{\xi_i^a(G_{\ell,i}^a(\tau)-m_{\ell,a}(\bZ_i))\mid\bZ_i\}\right]=0.
\]
We next project the full data gradient through right censoring.  Let
\[
  \mathcal F_i^a(u-)=\sigma\!\left[\bZ_i,\{N_i^{D,a}(s),N_i^a(s):0\le s<u\}\right]
\]
be the full event history immediately before \(u\).  Define the remaining outcome from just before \(u\) to \(\tau\) by
\[
  R_{\ell,i}^a(u,\tau)=G_{\ell,i}^a(\tau)-G_{\ell,i}^a(u-),
\]
where
\[
  R_{\mu,i}^a(u,\tau)=N_i^{a,\bw}(\tau)-N_i^{a,\bw}(u-),
  \qquad
  R_{\nu,i}^a(u,\tau)=\int_u^\tau Y_i^a(t)\dd t.
\]
The history specific remaining outcome regression is
\[
  \rho_{\ell,a}^{\mathrm{EIF}}\{u,\tau,\mathcal F_i^a(u-)\}=\Ebb\!\left[ R_{\ell,i}^a(u,\tau)\mid\mathcal F_i^a(u-),Y_i^a(u)=1\right].
\]
By convention, \(\rho_{\ell,a}^{\mathrm{EIF}}\{u,\tau,\mathcal F_i^a(u-)\}=0\) after death.

The fundamental censoring identity is
\[
  \int_{(0,\tau]}\frac{\dd G_{\ell,i}^{\mathrm{obs}}(t)}{K_{0i}^a(t-)}=G_{\ell,i}^a(\tau)-\int_{(0,\tau]}\frac{R_{\ell,i}^a(u,\tau)}{K_{0i}^a(u-)}\dd M_i^C(u;a)
  \qquad \text{on } A_i=a.
\]
Indeed, with \(R_i^C(t)=\ind(C_i^a\ge t)\), the product-integral identity gives
\[
  \dd\!\left\{\frac{R_i^C(t)}{K_{0i}^a(t)}\right\}=-\frac{\dd M_i^C(t;a)}{K_{0i}^a(t-)}.
\]
Since the observed increment equals \(R_i^C(t-)\dd G_{\ell,i}^a(t)\),
\begin{align*}
  \int_{(0,\tau]}\frac{\dd G_{\ell,i}^{\mathrm{obs}}(t)}{K_{0i}^a(t-)}&=\int_{(0,\tau]}\frac{R_i^C(t-)}{K_{0i}^a(t-)}\dd G_{\ell,i}^a(t) \\
  &=\int_{(0,\tau]}\dd G_{\ell,i}^a(t)-\int_{(0,\tau]}\left\{\int_{(u,\tau]}\dd G_{\ell,i}^a(t)\right\}\frac{\dd M_i^C(u;a)}{K_{0i}^a(u-)} \\
  &=G_{\ell,i}^a(\tau)-\int_{(0,\tau]}\frac{R_{\ell,i}^a(u,\tau)}{K_{0i}^a(u-)}\dd M_i^C(u;a).
\end{align*}

The observed-data efficient influence function for \(\theta_{\ell,a}\), \(\ell\in\{\mu,\nu\}\), is
\[
  D_{\ell,a}^{\mathrm{EIF}}=m_{\ell,a}(\bZ_i)-\theta_{\ell,a}+\xi_i^a\left[\int_{(0,\tau]}\frac{\dd G_{\ell,i}^{\mathrm{obs}}(t)}{K_{0i}^a(t-)}-m_{\ell,a}(\bZ_i)+\int_{(0,\tau]}\frac{\rho_{\ell,a}^{\mathrm{EIF}}\{u,\tau,\mathcal F_i^a(u-)\}}{K_{0i}^a(u-)}\dd M_i^C(u;a)\right].
\]
Equivalently, by substituting the preceding censoring identity,
\begin{align*}
  D_{\ell,a}^{\mathrm{EIF}}&=m_{\ell,a}(\bZ_i)-\theta_{\ell,a}+\xi_i^a\left[G_{\ell,i}^a(\tau)-\int_{(0,\tau]}\frac{R_{\ell,i}^a(u,\tau)}{K_{0i}^a(u-)}\dd M_i^C(u;a)-m_{\ell,a}(\bZ_i)\right.\\
  &\hspace{35mm}\left.+\int_{(0,\tau]}\frac{\rho_{\ell,a}^{\mathrm{EIF}}\{u,\tau,\mathcal F_i^a(u-)\}}{K_{0i}^a(u-)}\dd M_i^C(u;a)\right] \\
  &=D_{\ell,a}^{\mathrm{full}}-\xi_i^a\int_{(0,\tau]}\frac{R_{\ell,i}^a(u,\tau)-\rho_{\ell,a}^{\mathrm{EIF}}\{u,\tau,\mathcal F_i^a(u-)\}}{K_{0i}^a(u-)}\dd M_i^C(u;a).
\end{align*}
This display shows both why the observed-data influence function represents the same outcome-law pathwise derivative as the full data influence function and why it is orthogonal to the censoring nuisance tangent space.  For any censoring score of the form
\[
  S_C(c)=\int_{(0,\tau]}c(u,\bZ_i)\dd M_i^C(u;a),
\]
martingale covariance and iterated expectation give
\begin{align*}
  \Ebb\{D_{\ell,a}^{\mathrm{EIF}}S_C(c)\}&=-\Ebb\!\left[\xi_i^a\int_{(0,\tau]}\frac{R_{\ell,i}^a(u,\tau)-\rho_{\ell,a}^{\mathrm{EIF}}\{u,\tau,\mathcal F_i^a(u-)\}}{K_{0i}^a(u-)}c(u,\bZ_i)\dd\langle M_i^C\rangle(u;a)\right] \\
  &=-\Ebb\!\left[ \xi_i^a\int_{(0,\tau]}\frac{ \Ebb[R_{\ell,i}^a(u,\tau)-\rho_{\ell,a}^{\mathrm{EIF}}\{u,\tau,\mathcal F_i^a(u-)\}\mid\mathcal F_i^a(u-),Y_i^a(u)=1]}{K_{0i}^a(u-)}c(u,\bZ_i)\dd\langle M_i^C\rangle(u;a)\right] \\
  &=0.
\end{align*}
Thus \(D_{\ell,a}^{\mathrm{EIF}}\) is the canonical gradient for \(\theta_{\ell,a}\) in the unrestricted observed-data model.

For the numerator and denominator specifically,
\[
  D_{\mu,a}^{\mathrm{EIF}}=m_{\mu,a}(\bZ_i)-\mu_a(\tau;\bw)+\xi_i^a\left[\int_{(0,\tau]}\frac{\dd N_i^{\bw}(t)}{K_{0i}^a(t-)}-m_{\mu,a}(\bZ_i)+\int_{(0,\tau]}\frac{\rho_{\mu,a}^{\mathrm{EIF}}\{u,\tau,\mathcal F_i^a(u-)\}}{K_{0i}^a(u-)}\dd M_i^C(u;a)\right],
\]
where
\[
  m_{\mu,a}(\bZ_i)=\Ebb\{N_i^{a,\bw}(\tau)\mid\bZ_i\},
\]
and
\[
  \rho_{\mu,a}^{\mathrm{EIF}}\{u,\tau,\mathcal F_i^a(u-)\}=\Ebb\{N_i^{a,\bw}(\tau)-N_i^{a,\bw}(u-)\mid\mathcal F_i^a(u-),Y_i^a(u)=1\}.
\]
Similarly,
\[
  D_{\nu,a}^{\mathrm{EIF}}=m_{\nu,a}(\bZ_i)-\nu_a(\tau)+\xi_i^a\left[\int_0^\tau\frac{Y_i(t)}{K_{0i}^a(t-)}\dd t-m_{\nu,a}(\bZ_i)+\int_{(0,\tau]}\frac{\rho_{\nu,a}^{\mathrm{EIF}}\{u,\tau,\mathcal F_i^a(u-)\}}{K_{0i}^a(u-)}\dd M_i^C(u;a)\right],
\]
where
\[
  m_{\nu,a}(\bZ_i)=\Ebb(D_i^a\wedge\tau\mid\bZ_i),
\]
and
\[
  \rho_{\nu,a}^{\mathrm{EIF}}\{u,\tau,\mathcal F_i^a(u-)\}=\Ebb\left\{\int_u^\tau Y_i^a(t)\dd t\mid\mathcal F_i^a(u-),Y_i^a(u)=1\right\}.
\]

The while-alive estimand is the smooth ratio \(\psi_a(\tau;\bw)=\mu_a(\tau;\bw)/\nu_a(\tau)\).  Therefore, when \(\nu_a(\tau)>0\),
\[
  D_{\psi,a}^{\mathrm{EIF}}=\frac{D_{\mu,a}^{\mathrm{EIF}}-\psi_a(\tau;\bw)D_{\nu,a}^{\mathrm{EIF}}}{\nu_a(\tau)}.
\]
The efficient influence function for the treatment contrast is
\[
  D_{\Delta}^{\mathrm{EIF}}=D_{\psi,1}^{\mathrm{EIF}}-D_{\psi,0}^{\mathrm{EIF}}.
\]

We now compare this EIF with the first-order influence function induced by the proposed local Nelson--Aalen estimator.  The proposed estimator uses conditional terminal survival and conditional recurrent-event mean increments given baseline covariates, not the entire recurrent-event history before censoring.  Let
\[
  H_{0i}^a(t\mid\bZ_i)=\Pbb(D_i^a\ge t\mid\bZ_i),
\]
and
\[
  \dd q_{0i}^{R,\bw}(t;a)=\Ebb\{\dd N_i^{a,\bw}(t)\mid\bZ_i\}.
\]
The baseline-and-alive remaining-outcome regressions induced by the local representation are
\[
  \rho_{\nu,a}^{\mathrm{loc}}(u,\tau,\bZ_i)=\frac{\int_u^\tau H_{0i}^a(t\mid\bZ_i)\dd t}{H_{0i}^a(u-\mid\bZ_i)}
\]
and
\[
  \rho_{\mu,a}^{\mathrm{loc}}(u,\tau,\bZ_i)=\frac{\int_{(u,\tau]}\dd q_{0i}^{R,\bw}(t;a)}{H_{0i}^a(u-\mid\bZ_i)}.
\]
At the intersection where the censoring, terminal-event, and recurrent-event nuisance functions are all correctly specified, the first-order influence function associated with the local estimator has the same observed-data form as above but with \(\rho_{\ell,a}^{\mathrm{EIF}}\{u,\tau,\mathcal F_i^a(u-)\}\) replaced by \(\rho_{\ell,a}^{\mathrm{loc}}(u,\tau,\bZ_i)\):
\[
  D_{\ell,a}^{\mathrm{loc}}=m_{\ell,a}(\bZ_i)-\theta_{\ell,a}+\xi_i^a\left[\int_{(0,\tau]}\frac{\dd G_{\ell,i}^{\mathrm{obs}}(t)}{K_{0i}^a(t-)} -m_{\ell,a}(\bZ_i)+\int_{(0,\tau]}\frac{\rho_{\ell,a}^{\mathrm{loc}}(u,\tau,\bZ_i)}{K_{0i}^a(u-)}\dd M_i^C(u;a)\right].
\]
Consequently,
\[
  D_{\ell,a}^{\mathrm{loc}}-D_{\ell,a}^{\mathrm{EIF}}
  =\xi_i^a\int_{(0,\tau]}\frac{\rho_{\ell,a}^{\mathrm{loc}}(u,\tau,\bZ_i)-\rho_{\ell,a}^{\mathrm{EIF}}\{u,\tau,\mathcal F_i^a(u-)\}}{K_{0i}^a(u-)}
  \dd M_i^C(u;a).
\]
This difference is generally nonzero.  It vanishes only if the full recurrent-event history carries no additional prognostic information, beyond baseline covariates and being alive at \(u\), about the remaining outcome:
\[
  \rho_{\mu,a}^{\mathrm{EIF}}\{u,\tau,\mathcal F_i^a(u-)\}=\rho_{\mu,a}^{\mathrm{loc}}(u,\tau,\bZ_i)
\]
and
\[
  \rho_{\nu,a}^{\mathrm{EIF}}\{u,\tau,\mathcal F_i^a(u-)\}=\rho_{\nu,a}^{\mathrm{loc}}(u,\tau,\bZ_i)
\]
for almost every \(u\in(0,\tau]\).  This is a history sufficiency condition.  It can fail, for example, when prior recurrent events reveal latent frailty that also predicts future recurrent events or death.

The variance gap follows from orthogonality of the canonical gradient and the censoring-martingale residual above:
\[
  \Var(D_{\ell,a}^{\mathrm{loc}})-\Var(D_{\ell,a}^{\mathrm{EIF}})=\Var(D_{\ell,a}^{\mathrm{loc}}-D_{\ell,a}^{\mathrm{EIF}}).
\]
Using the martingale isometry,
\[
  \Var(D_{\ell,a}^{\mathrm{loc}})-\Var(D_{\ell,a}^{\mathrm{EIF}})=\Ebb\!\left[(\xi_i^a)^2\int_{(0,\tau]}\frac{\left[\rho_{\ell,a}^{\mathrm{loc}}(u,\tau,\bZ_i)-\rho_{\ell,a}^{\mathrm{EIF}}\{u,\tau,\mathcal F_i^a(u-)\}\right]^2}{\{K_{0i}^a(u-)\}^2}\dd\langle M_i^C\rangle(u;a)\right]\ge 0.
\]
For the ratio,
\[
  D_{\psi,a}^{\mathrm{loc}}=\frac{D_{\mu,a}^{\mathrm{loc}}-\psi_a(\tau;\bw)D_{\nu,a}^{\mathrm{loc}}}{\nu_a(\tau)}.
\]
Thus
\begin{align*}
  D_{\psi,a}^{\mathrm{loc}}-D_{\psi,a}^{\mathrm{EIF}}
  &=\frac{\xi_i^a}{\nu_a(\tau)}\int_{(0,\tau]}\frac{1}{K_{0i}^a(u-)}
  \Bigl[\rho_{\mu,a}^{\mathrm{loc}}(u,\tau,\bZ_i)
  -\rho_{\mu,a}^{\mathrm{EIF}}\{u,\tau,\mathcal F_i^a(u-)\}\\
  &\hspace{37mm}
  -\psi_a(\tau;\bw)\Bigl\{\rho_{\nu,a}^{\mathrm{loc}}(u,\tau,\bZ_i)
  -\rho_{\nu,a}^{\mathrm{EIF}}\{u,\tau,\mathcal F_i^a(u-)\}\Bigr\}\Bigr]
  \dd M_i^C(u;a).
\end{align*}
Componentwise history sufficiency is sufficient for efficiency of the proposed local estimator.  A weaker ratio-specific condition is
\[
  \rho_{\mu,a}^{\mathrm{loc}}(u,\tau,\bZ_i)-\rho_{\mu,a}^{\mathrm{EIF}}\{u,\tau,\mathcal F_i^a(u-)\}=\psi_a(\tau;\bw)\left[\rho_{\nu,a}^{\mathrm{loc}}(u,\tau,\bZ_i)-\rho_{\nu,a}^{\mathrm{EIF}}\{u,\tau,\mathcal F_i^a(u-)\}\right],\]
but this condition is not implied by the assumptions in the main text.

Therefore, the proposed local Nelson--Aalen estimator is doubly robust and asymptotically normal under the conditions stated in the main text, but it is not generally semiparametrically efficient in the unrestricted recurrent-event model.  Its influence function equals the efficient influence function only under the history-sufficiency condition above.  The variance estimator in the main text estimates the asymptotic variance of the proposed local estimator; it should be described as the semiparametric efficiency bound only when the history-sufficiency condition is imposed.
\section{First-order equivalence with the future-mean estimator in \texorpdfstring{\citet{baer2025causal}}{Baer et al. (2025)}}
\label{supp:sec:baer_detailed_equivalence}

\subsection{Restricted local representation}

The proposed estimator is built from local increments \(\dd q_a^{R,\bw}(t\mid\bZ)\) and \(H_a(t\mid\bZ)\), whereas the estimator of \citet{baer2025causal} uses the conditional future mean \(F_a(u,t\mid\bZ)=\Ebb\{\ind(D^a>u)N^{a,\bw}(t)\mid\bZ\}\).  The two representations are first-order comparable when the future mean is induced by the same local law.  Specifically, assume that for \(0<v\le u\le t\le\tau\),
\[
  \Ebb\{\ind(D^a>u)\dd N^{a,\bw}(v)\mid\bZ\}
  =\frac{H_a(u\mid\bZ)}{H_a(v-\mid\bZ)}\dd q_a^{R,\bw}(v\mid\bZ),
\]
with \(0/0=0\).  Define
\[
  \Gamma_a(u,t\mid\bZ)
  =H_a(u\mid\bZ)\int_0^u\frac{\dd q_a^{R,\bw}(v\mid\bZ)}{H_a(v-\mid\bZ)}+\int_u^t\dd q_a^{R,\bw}(v\mid\bZ).
\]
Then \(F_a(u,t\mid\bZ)=\Gamma_a(u,t\mid\bZ)\).  The proof is a direct calculation:
\begin{align*}
  F_a(u,t\mid\bZ)
  &=\Ebb\left\{\ind(D^a>u)\int_{(0,t]}\dd N^{a,\bw}(v)\mid\bZ\right\} \\
  &=\int_{(0,u]}\Ebb\{\ind(D^a>u)\dd N^{a,\bw}(v)\mid\bZ\}+
    \int_{(u,t]}\Ebb\{\ind(D^a>u)\dd N^{a,\bw}(v)\mid\bZ\} \\
  &=H_a(u\mid\bZ)\int_0^u\frac{\dd q_a^{R,\bw}(v\mid\bZ)}{H_a(v-\mid\bZ)}+
    \int_u^t\Ebb\{\dd N^{a,\bw}(v)\mid\bZ\} \\
  &=H_a(u\mid\bZ)\int_0^u\frac{\dd q_a^{R,\bw}(v\mid\bZ)}{H_a(v-\mid\bZ)}+
    \int_u^t\dd q_a^{R,\bw}(v\mid\bZ) \\
  &=\Gamma_a(u,t\mid\bZ),
\end{align*}
where the third equality uses the fact that \(\dd N^{a,\bw}(v)\ne0\) for \(v>u\) implies \(D^a\ge v>u\), so \(\ind(D^a>u)\dd N^{a,\bw}(v)=\dd N^{a,\bw}(v)\).

If \(\dot H_a\) and \(\dd\dot q_a^{R,\bw}\) are first-order perturbations, then the directional derivative of \(\Gamma_a\) is
\[
\begin{aligned}
  \dot\Gamma_a(u,t\mid\bZ)
  &=\dot H_a(u\mid\bZ)\int_0^u\frac{\dd q_a^{R,\bw}(v\mid\bZ)}{H_a(v-\mid\bZ)}
  -H_a(u\mid\bZ)\int_0^u\frac{\dot H_a(v-\mid\bZ)}{H_a(v-\mid\bZ)^2}\dd q_a^{R,\bw}(v\mid\bZ)\\
  &\quad+H_a(u\mid\bZ)\int_0^u\frac{\dd\dot q_a^{R,\bw}(v\mid\bZ)}{H_a(v-\mid\bZ)}+
  \int_u^t\dd\dot q_a^{R,\bw}(v\mid\bZ).
\end{aligned}
\]
The derivative is obtained as a literal first-order expansion.  Let \(H_{a,\epsilon}=H_a+\epsilon\dot H_a\) and \(\dd q_{a,\epsilon}^{R,\bw}=\dd q_a^{R,\bw}+\epsilon\dd\dot q_a^{R,\bw}\).  With
\[
  J_{a,\epsilon}(u\mid\bZ)=\int_0^u\frac{\dd q_{a,\epsilon}^{R,\bw}(v\mid\bZ)}{H_{a,\epsilon}(v-\mid\bZ)},
\]
one has, consecutively,
\[
\begin{aligned}
  \left.\frac{\partial}{\partial\epsilon}J_{a,\epsilon}(u\mid\bZ)\right|_{\epsilon=0}
  &=\left.\frac{\partial}{\partial\epsilon}\int_0^u
      \frac{\dd q_a^{R,\bw}(v\mid\bZ)+\epsilon\dd\dot q_a^{R,\bw}(v\mid\bZ)}{H_a(v-\mid\bZ)+\epsilon\dot H_a(v-\mid\bZ)}\right|_{\epsilon=0} \\
  &=\int_0^u\left\{\frac{\dd\dot q_a^{R,\bw}(v\mid\bZ)}{H_a(v-\mid\bZ)}
      -\frac{\dot H_a(v-\mid\bZ)}{H_a(v-\mid\bZ)^2}\dd q_a^{R,\bw}(v\mid\bZ)\right\},\\
  \left.\frac{\partial}{\partial\epsilon}\{H_{a,\epsilon}(u\mid\bZ)J_{a,\epsilon}(u\mid\bZ)\}\right|_{\epsilon=0}
  &=\dot H_a(u\mid\bZ)J_a(u\mid\bZ)+H_a(u\mid\bZ)
    \left.\frac{\partial}{\partial\epsilon}J_{a,\epsilon}(u\mid\bZ)\right|_{\epsilon=0},\\
  \left.\frac{\partial}{\partial\epsilon}\int_u^t\dd q_{a,\epsilon}^{R,\bw}(v\mid\bZ)\right|_{\epsilon=0}
  &=\int_u^t\dd\dot q_a^{R,\bw}(v\mid\bZ).
\end{aligned}
\]
Adding the last two displays gives the stated \(\dot\Gamma_a(u,t\mid\bZ)\).

\subsection{Denominator equivalence}

Let \(\widehat S_a^{\mathrm B}(t)\) denote the one-step survival estimator of \citet{baer2025causal} evaluated with the same fitted \(\widehat H_a\) and \(\widehat K_a\) used in the proposed estimator.  Its pointwise estimating function is
\[
\begin{aligned}
  \Phi_{\eta,a}(t;\bO;H,K)
  &=\xi^a\frac{\delta\ind(X>t)}{K(X-\mid\bZ)}-(\xi^a-1)H(t\mid\bZ)\\
  &\quad+\xi^a\int_{(0,\infty)}\frac{H(t\vee u\mid\bZ)}{H(u\mid\bZ)K(u\mid\bZ)}\dd M_C^{a,H,K}(u),
\end{aligned}
\]
where \(\dd M_C^{a,H,K}(u)=\dd N^C(u)-Y^\dagger(u)\dd\Lambda_C^{a,K}(u\mid\bZ)\).  A von Mises expansion at \((H_a,K_a)\) yields
\begin{align*}
  \widehat S_a^{\mathrm B}(t)-S_a(t)
  &=(\Pn-P_0)\{\Phi_{\eta,a}(t;\bO;H_a,K_a)\} \\
  &\quad+P_0\{\Phi_{\eta,a}(t;\bO;\widehat H_a,\widehat K_a)-\Phi_{\eta,a}(t;\bO;H_a,K_a)\}+o_p(n^{-1/2}).
\end{align*}
The first term is the empirical-process term.  The second term is the common linear image of \(\widehat H_a-H_a\) and \(\widehat K_a-K_a\).  Since the proposed survival estimator and the Baer survival estimator use the same \(\widehat H_a\) and \(\widehat K_a\), they have the same nuisance perturbation.  Denoting the common pointwise influence function by \(\varphi_{\eta,a}(t;\bO)\),
\begin{align*}
  \sqrt n\{\widehat S_a^{\mathrm B}(t)-S_a(t)\}
  &=\Gn\varphi_{\eta,a}(t;\bO)+o_p(1),\\
  \sqrt n\{\widehat S_a^{\mathrm{DR}}(t)-S_a(t)\}
  &=\Gn\varphi_{\eta,a}(t;\bO)+o_p(1).
\end{align*}
Integrating over \([0,\tau]\),
\begin{align*}
  \sqrt n\{\widehat\nu_a^{\mathrm B}(\tau)-\nu_a(\tau)\}
  &=\int_0^\tau \sqrt n\{\widehat S_a^{\mathrm B}(t)-S_a(t)\}\dd t \\
  &=\int_0^\tau \Gn\varphi_{\eta,a}(t;\bO)\dd t+o_p(1) \\
  &=\Gn\left\{\int_0^\tau\varphi_{\eta,a}(t;\bO)\dd t\right\}+o_p(1),
\end{align*}
and the proposed estimator has the same expansion.  Therefore \(\widehat\nu_a^{\mathrm{DR}}(\tau)-\widehat\nu_a^{\mathrm B}(\tau)=o_p(n^{-1/2})\).

\subsection{Numerator equivalence}

The Baer numerator one-step map can be written as
\[
\begin{aligned}
  \Psi_{\mu,a}(F,H,K;\bO)
  &=\phi_{\mu,a}^{\mathrm{obs}}(\tau;\bw;\bO)-(
  \xi^a-1)F(0,\tau\mid\bZ)\\
  &\quad+\xi^a\int_{(0,\tau]}\frac{F(u,\tau\mid\bZ)}{H(u\mid\bZ)K(u\mid\bZ)}\dd M_C^{a,K}(u),
\end{aligned}
\]
where \(\phi_{\mu,a}^{\mathrm{obs}}\) is the IPCW observed-count term and \(\dd M_C^{a,K}(u)=\dd N^C(u)-Y^\dagger(u)\dd\Lambda_C^{a,K}(u\mid\bZ)\).  Under the restricted local representation, \(F=\Gamma(H,q)\); hence the proposed local estimator is the same first-order map applied to \(\widetilde\Psi_{\mu,a}(H,K,q;\bO)=\Psi_{\mu,a}\{\Gamma(H,q),H,K;\bO\}\).

For an arbitrary perturbation \((\dot F,\dot H,\dot K)\), the derivative of \(\Psi_{\mu,a}\) is
\[
\begin{aligned}
  \dot\Psi_{\mu,a}[\dot F,\dot H,\dot K](\bO)
  &=-\xi^a\frac{\delta N^{\bw}(\tau)}{K_a(X-\mid\bZ)^2}\dot K_a(X-\mid\bZ)-(
  \xi^a-1)\dot F_a(0,\tau\mid\bZ)\\
  &\quad+\xi^a\int_{(0,\tau]}\frac{\dot F_a(u,\tau\mid\bZ)}{H_a(u\mid\bZ)K_a(u\mid\bZ)}\dd M_C^a(u)\\
  &\quad-\xi^a\int_{(0,\tau]}\frac{F_a(u,\tau\mid\bZ)\dot H_a(u\mid\bZ)}{H_a(u\mid\bZ)^2K_a(u\mid\bZ)}\dd M_C^a(u)\\
  &\quad-\xi^a\int_{(0,\tau]}\frac{F_a(u,\tau\mid\bZ)\dot K_a(u\mid\bZ)}{H_a(u\mid\bZ)K_a(u\mid\bZ)^2}\dd M_C^a(u)\\
  &\quad+\xi^a\int_{(0,\tau]}\frac{F_a(u,\tau\mid\bZ)}{H_a(u\mid\bZ)K_a(u\mid\bZ)}\dd\dot M_C^a(u),
\end{aligned}
\]
with \(\dd\dot M_C^a(u)=-Y^\dagger(u)\dd\dot\Lambda_C^a(u\mid\bZ)\).  Because \(F_a=\Gamma_a(H_a,q_a)\), the perturbation \(\dot F_a\) equals \(\dot\Gamma_a[\dot H_a,\dd\dot q_a^{R,\bw}]\).  Therefore the derivative of the composed local map is exactly the chain-rule derivative
\[
  \dot{\widetilde\Psi}_{\mu,a}[\dot H_a,
  \dot K_a,\dd\dot q_a^{R,\bw}](\bO)
  =\dot\Psi_{\mu,a}[\dot\Gamma_a,
  \dot H_a,
  \dot K_a](\bO).
\]
Now expand the estimator of \citet{baer2025causal}:
\begin{align*}
  \widehat\mu_a^{\mathrm B}(\tau;\bw)-\mu_a(\tau;\bw)
  &=(\Pn-P_0)\{\Psi_{\mu,a}(F_a,H_a,K_a;\bO)\}\\
  &\quad+P_0\{\Psi_{\mu,a}(\widehat F_a,\widehat H_a,\widehat K_a;\bO)-\Psi_{\mu,a}(F_a,H_a,K_a;\bO)\}+o_p(n^{-1/2})\\
  &=n^{-1/2}\Gn\{\Psi_{\mu,a}(F_a,H_a,K_a;\bO)-\mu_a(\tau;\bw)\}\\
  &\quad+P_0\{\dot\Psi_{\mu,a}[\widehat F_a-F_a,\widehat H_a-H_a,\widehat K_a-K_a](\bO)\}+o_p(n^{-1/2}).
\end{align*}
Substituting \(\widehat F_a=\widehat\Gamma_a\) and the derivative of \(\Gamma_a\) gives exactly the same expansion as the proposed local estimator based on \(\widetilde\Psi_{\mu,a}\):
\begin{align*}
  \sqrt n\{\widehat\mu_a^{\mathrm B}(\tau;\bw)-\mu_a(\tau;\bw)\}
  &=\Gn\varphi_{\mu,a}(\tau;\bw;\bO)+o_p(1),\\
  \sqrt n\{\widehat\mu_a^{\mathrm{DR}}(\tau;\bw)-\mu_a(\tau;\bw)\}
  &=\Gn\varphi_{\mu,a}(\tau;\bw;\bO)+o_p(1).
\end{align*}
Thus \(\widehat\mu_a^{\mathrm{DR}}(\tau;\bw)-\widehat\mu_a^{\mathrm B}(\tau;\bw)=o_p(n^{-1/2})\).  Combining numerator and denominator equivalence with the smooth ratio map gives
\[
  \widehat\psi_a^{\mathrm{DR}}(\tau;\bw)-\widehat\psi_a^{\mathrm B}(\tau;\bw)=o_p(n^{-1/2}),
  \qquad
  \widehat\Delta^{\mathrm{DR}}(\tau;\bw)-\widehat\Delta^{\mathrm B}(\tau;\bw)=o_p(n^{-1/2}).
\]
This is a first-order equivalence.  It does not imply finite-sample equality when \(F\) is fitted directly by a two-time regression unrelated to the local law, or when the local product-integral update is numerically constrained while the future-mean estimator is not.

\section{Double robustness in CRTs}
\label{supp:sec:crt_dr_detailed}

\subsection{Weighted full data representation}

Let \(\ell\in\{\mathrm{clus},\mathrm{ind}\}\), with \(\omega_i^{\mathrm{clus}}=1\) and \(\omega_i^{\mathrm{ind}}=n_i\).  The target-specific local increments are
\[
\begin{aligned}
  \dd\Lambda_{a,\ell}^D(t)
  &=\frac{\Ebb\left[(\omega_i^\ell/n_i)\sum_{j=1}^{n_i}\dd N_{ij}^{D,a}(t)\right]}{\Ebb\left[(\omega_i^\ell/n_i)\sum_{j=1}^{n_i}Y_{ij}^a(t)\right]},\\
  \dd\Lambda_{a,\ell}^{R,\bw}(t)
  &=\frac{\Ebb\left[(\omega_i^\ell/n_i)\sum_{j=1}^{n_i}\dd N_{ij}^{a,\bw}(t)\right]}{\Ebb\left[(\omega_i^\ell/n_i)\sum_{j=1}^{n_i}Y_{ij}^a(t)\right]}.
\end{aligned}
\]
With \(S_{a,\ell}(t)=\Ebb[(\omega_i^\ell/n_i)\sum_jY_{ij}^a(t)]/\Ebb(\omega_i^\ell)\), the same calculation as in Proposition~\ref{supp:prop:full_data_na} gives \(S_{a,\ell}(t)=\prod_{0<u\le t}\{1-\dd\Lambda_{a,\ell}^D(u)\}\), \(\nu_{a,\ell}(\tau)=\int_0^\tau S_{a,\ell}(t)\dd t\), and \(\mu_{a,\ell}(\tau;\bw)=\int_0^\tau S_{a,\ell}(t-)\dd\Lambda_{a,\ell}^{R,\bw}(t)\).

\subsection{Proof of consistency}

Let \(\xi_i^a=\ind(A_i=a)/\pi_a\).  For each member \((i,j)\), define \(K_{ij}^\star\), \(H_{ij}^\star\), \(U_{ij}^\star\), \(r_{ij}^\star\), and \(\dd q_{ij}^{h,\star}\) exactly as in the IRT proof, but conditionally on the CRT baseline information \(\cV_{ij}\).  The CRT population local terms are
\[
\begin{aligned}
  \dd\mathcal A_{h,\ell}^\star(t)
  &=\Ebb\left[\frac{\omega_i^\ell}{n_i}\sum_{j=1}^{n_i}\left\{\xi_i^a\{K_{ij}^\star(t-)\}^{-1}\dd N_{ij}^h(t)+\{1-\xi_i^aU_{ij}^\star(t)\}\dd q_{ij}^{h,\star}(t)\right\}\right],\\
  \mathcal B_{\ell}^\star(t)
  &=\Ebb\left[\frac{\omega_i^\ell}{n_i}\sum_{j=1}^{n_i}\left\{\xi_i^a\{K_{ij}^\star(t-)\}^{-1}Y_{ij}(t)+\{1-\xi_i^aU_{ij}^\star(t)\}r_{ij}^\star(t)\right\}\right].
\end{aligned}
\]
Condition on the full cluster baseline information \(\cV_i\).  If \(\cC_a\) holds, then for each member \((i,j)\), the IRT identity gives
\begin{align*}
&\Ebb\left[\xi_i^a\{K_{ij,0}(t-)\}^{-1}\dd N_{ij}^h(t)+\{1-\xi_i^aU_{ij}^\star(t)\}\dd q_{ij}^{h,\star}(t)\mid\cV_i\right] \\
&\quad=\dd q_{ij,0}^h(t)+\{1-\Ebb(\xi_i^aU_{ij}^\star(t)\mid\cV_i)\}\dd q_{ij}^{h,\star}(t) \\
&\quad=\dd q_{ij,0}^h(t),\\
&\Ebb\left[\xi_i^a\{K_{ij,0}(t-)\}^{-1}Y_{ij}(t)+\{1-\xi_i^aU_{ij}^\star(t)\}r_{ij}^\star(t)\mid\cV_i\right] \\
&\quad=r_{ij,0}(t).
\end{align*}
Multiplying by \(\omega_i^\ell/n_i\), summing over \(j\), and taking expectation over clusters gives the numerator and denominator of \(\dd\Lambda_{a,\ell}^h(t)\).  If \(\cO_a^D\) holds for \(h=D\), or if \(\cO_a^D\cap\cO_a^R\) holds for \(h=R\), then, with \(R_{K,ij}(t)=K_{ij,0}(t-)/K_{ij}^\star(t-)\),
\begin{align*}
&\Ebb\left[\xi_i^a\{K_{ij}^\star(t-)\}^{-1}\dd N_{ij}^h(t)+\{1-\xi_i^aU_{ij}^\star(t)\}\dd q_{ij,0}^h(t)\mid\cV_i\right] \\
&\quad=R_{K,ij}(t)\dd q_{ij,0}^h(t)+\{1-R_{K,ij}(t)\}\dd q_{ij,0}^h(t) \\
&\quad=\dd q_{ij,0}^h(t),\\
&\Ebb\left[\xi_i^a\{K_{ij}^\star(t-)\}^{-1}Y_{ij}(t)+\{1-\xi_i^aU_{ij}^\star(t)\}r_{ij,0}(t)\mid\cV_i\right] \\
&\quad=R_{K,ij}(t)r_{ij,0}(t)+\{1-R_{K,ij}(t)\}r_{ij,0}(t) \\
&\quad=r_{ij,0}(t).
\end{align*}
Again the cluster-weighted sum recovers the target numerator and denominator.  Therefore \(\widehat\nu_{a,\ell}^{\mathrm{DR}}(\tau)\) is consistent if \(\cC_a\) or \(\cO_a^D\) holds; \(\widehat\mu_{a,\ell}^{\mathrm{DR}}(\tau;\bw)\) is consistent if \(\cC_a\) or \(\cO_a^D\cap\cO_a^R\) holds; and the corresponding while-alive rate and contrast are consistent by the same ratio and linearity argument as in Theorem~\ref{supp:thm:irt_dr_detailed}.

\section{CRT asymptotic expansion and variance}
\label{supp:sec:crt_variance_detailed}
In addition to the probability limits introduced in Section \ref{sec:crt_estimation}, let \(U_{ij}^{a,\star}(t)\) denote the uniform probability limit of \(\widehat U_{ij}^a(t)\), and define \(r_{ij}^{a,\star}(t)=H_{ij}^{a,\star}(t-)\). To characterize the probability limits of the two local estimators, define the limiting cluster-level augmented increments
\begin{align*}
  \dd A_{i,\ell}^{D,a,\star}(t)&=\frac{\omega_i^\ell}{n_i}\sum_{j=1}^{n_i}\left[\xi_i^a\{K_{ij}^{a,\star}(t-)\}^{-1}\dd N_{ij}^D(t)+\{1-\xi_i^a U_{ij}^{a,\star}(t)\}\dd q_{ij}^{D,\star}(t;a)\right], \\
  \dd A_{i,\ell}^{R,a,\star}(t)&=\frac{\omega_i^\ell}{n_i}\sum_{j=1}^{n_i}\left[\xi_i^a\{K_{ij}^{a,\star}(t-)\}^{-1}\dd N_{ij}^{\bw}(t)+\{1-\xi_i^a U_{ij}^{a,\star}(t)\}\dd q_{ij}^{R,\bw,\star}(t;a)\right],
\end{align*}
and the limiting weighted risk set
\[
  B_{i,\ell}^{a,\star}(t)=\frac{\omega_i^\ell}{n_i}\sum_{j=1}^{n_i}\left[\xi_i^a\{K_{ij}^{a,\star}(t-)\}^{-1}Y_{ij}(t)+\{1-\xi_i^a U_{ij}^{a,\star}(t)\}r_{ij}^{a,\star}(t)\right],
\]
with \(B_{a,\ell}^\star(t)=\Ebb\{B_{i,\ell}^{a,\star}(t)\}\). The probability limits of the two CRT local estimators are then
\begin{align}
  \dd\Lambda_{a,\ell}^{D,\star}(t)&=\frac{\Ebb\{\dd A_{i,\ell}^{D,a,\star}(t)\}}{B_{a,\ell}^\star(t)},\label{eq:crt_LamD_star}\\
  \dd\Lambda_{a,\ell}^{R,\bw,\star}(t)&=\frac{\Ebb\{\dd A_{i,\ell}^{R,a,\star}(t)\}}{B_{a,\ell}^\star(t)}.
  \label{eq:crt_LamR_star}
\end{align}
These expressions are the cluster level analogues of the limiting local quantities for the IRT estimator given in \ref{supp:sec:irt_asymptotic_detailed}. The target weight \(\omega_i^\ell\) enters both the numerator and the risk set, so the individual-average and cluster-average estimators generally have different probability limits when cluster size is informative. Throughout this section, $K_{ij}^{a,\star}$, $H_{ij}^{a,\star}$, $\dd q_{ij}^{D,\star}$, and $\dd q_{ij}^{R,\bw,\star}$ denote the uniform probability limits of the fitted nuisance functions, as defined in the main text. 

We impose the following regularity conditions for the CRT estimators.
 
\begin{assumption}[Asymptotic regularity for CRTs]
\label{asm:crt_asymptotic_regularity}
For each \(a\in\{0,1\}\) and \(\ell\in\{\mathrm{clus},\mathrm{ind}\}\), the following conditions hold on \([0,\tau]\).
\begin{enumerate}[label=(\roman*),ref=\ref{asm:crt_asymptotic_regularity}(\roman*)]
\item\label{asm:crt_reg_iid} The clusters \(\bO_1,\ldots,\bO_M\) are independent and identically distributed draws from a superpopulation of clusters.
\item\label{asm:crt_reg_positivity} The design probabilities \(\pi_i^a\), the true censoring survival function \(K_{ij}^a(t)\), the limiting censoring survival function \(K_{ij}^{a,\star}(t)\), and the limiting weighted risk set \(B_{a,\ell}^\star(t)\) are uniformly bounded away from zero.
\item\label{asm:crt_reg_variation} The terminal event cumulative hazards, censoring cumulative hazards, and recurrent event cumulative mean functions have uniformly bounded total variation on \([0,\tau]\).
\item\label{asm:crt_reg_moment} The target weights and weighted cluster totals have finite second moments, in particular, \(\Ebb\{(\omega_i^\ell)^2\}<\infty\) and \(\Ebb\left[\left\{\frac{\omega_i^\ell}{n_i}\sum_{j=1}^{n_i}N_{ij}^{a,\bw}(\tau)\right\}^2\right]<\infty\).
\item\label{asm:crt_reg_nuisance} The fitted censoring, terminal event, and recurrent event working models converge uniformly to their probability limits and admit asymptotically linear representations with cluster level influence functions.
\item\label{asm:crt_reg_hadamard} The product integral map defining \(S_{a,\ell}(t)\) and the Stieltjes integral maps defining \(\mu_{a,\ell}(\tau;\bw)\) and \(\nu_{a,\ell}(\tau)\) are Hadamard differentiable at their limiting values.
\end{enumerate}
\end{assumption}
 
Assumption \ref{asm:crt_asymptotic_regularity} allows arbitrary dependence among participants within a cluster, provided that the corresponding cluster level totals have sufficient moments. This is the standard asymptotic regime for CRTs with many independent clusters \citep{balzer2019hierarchical,balzer2023twostage,wang2024modelrobust,li2025standardization,fang2026estimands}. 

The independent sampling unit in a CRT is the cluster.  Let \(\PM f=M^{-1}\sum_{i=1}^Mf(\bO_i)\).  For \(h\in\{D,R\}\), define
\[
\begin{aligned}
  \dd A_{h,\ell,i}(t)
  &=\frac{\omega_i^\ell}{n_i}\sum_{j=1}^{n_i}\left\{\xi_i^a\{K_{ij}^\star(t-)\}^{-1}\dd N_{ij}^h(t)+\{1-\xi_i^aU_{ij}^\star(t)\}\dd q_{ij}^{h,\star}(t)\right\},\\
  B_{\ell,i}(t)
  &=\frac{\omega_i^\ell}{n_i}\sum_{j=1}^{n_i}\left\{\xi_i^a\{K_{ij}^\star(t-)\}^{-1}Y_{ij}(t)+\{1-\xi_i^aU_{ij}^\star(t)\}r_{ij}^\star(t)\right\}.
\end{aligned}
\]
Let \(a_{h,\ell}(t)=\Ebb\{\dd A_{h,\ell,i}(t)\}\), \(b_\ell(t)=\Ebb\{B_{\ell,i}(t)\}\), and \(\dd\Lambda_{h,\ell}^\star(t)=a_{h,\ell}(t)/b_\ell(t)\).  If the nuisance estimators are fitted from cluster-level estimating equations, their expansion has the form \(\sqrt M(\widehat\eta-\eta^\star)=M^{-1/2}\sum_i\Phi_{\eta,i}+o_p(1)\). For the recurrent event nuisance models, the contribution \(\Phi_{\eta,i}\) is the cluster sum of the complete LWYY mean zero rate residual scores, which include both observed event and compensator contributions and are not interpreted as recurrent event martingales. For the terminal event and censoring models, the contributions are the corresponding cluster sums of Cox martingale scores together with the Breslow baseline contributions. These nuisance contributions are propagated through the augmented local estimators and the while-alive contrast as follows.  Write the cluster-level first-order perturbations induced by cluster \(i\) as \(\dot K_{rj,i}(t)\), \(\dot U_{rj,i}(t)\), \(\dot r_{rj,i}(t)\), and \(\dd\dot q_{rj,i}^h(t)\) for the fitted censoring survival, censoring augmentation, terminal-event survival risk, and full data increment of member \(j\) in cluster \(r\).  For a fixed evaluation cluster \(r\), the perturbation of the memberwise numerator is
\[
\begin{aligned}
  \dd\dot A_{h,rj,t}[i]
  &=-\xi_r^a\{K_{rj}^\star(t-)\}^{-2}\dot K_{rj,i}(t-)\dd N_{rj}^h(t) \\
  &\quad-\xi_r^a\dot U_{rj,i}(t)\dd q_{rj}^{h,\star}(t)+\{1-\xi_r^aU_{rj}^\star(t)\}\dd\dot q_{rj,i}^h(t),
\end{aligned}
\]
and the perturbation of the memberwise risk denominator is
\[
\begin{aligned}
  \dot B_{rj,t}[i]
  &=-\xi_r^a\{K_{rj}^\star(t-)\}^{-2}\dot K_{rj,i}(t-)Y_{rj}(t) \\
  &\quad-\xi_r^a\dot U_{rj,i}(t)r_{rj}^\star(t)+\{1-\xi_r^aU_{rj}^\star(t)\}\dot r_{rj,i}(t).
\end{aligned}
\]
The corresponding cluster-weighted derivatives are therefore
\[
  \dot A_{h,\ell,t}[i]
  =\Ebb_r\left[\frac{\omega_r^\ell}{n_r}\sum_{j=1}^{n_r}\dd\dot A_{h,rj,t}[i]\right],
  \qquad
  \dot B_{\ell,t}[i]
  =\Ebb_r\left[\frac{\omega_r^\ell}{n_r}\sum_{j=1}^{n_r}\dot B_{rj,t}[i]\right],
\]
where \(\Ebb_r\) denotes expectation over an independent evaluation cluster \(r\).  Thus
\[
  \zeta_{h,\ell,i}(t)=\dd A_{h,\ell,i}(t)-a_{h,\ell}(t)+\dot A_{h,\ell,t}[i],
  \qquad
  \zeta_{B,\ell,i}(t)=B_{\ell,i}(t)-b_\ell(t)+\dot B_{\ell,t}[i].
\]
Consecutive quotient expansion gives
\[
\begin{aligned}
  \sqrt M\{\dd\widehat\Lambda_{h,\ell}(t)-\dd\Lambda_{h,\ell}^\star(t)\}
  &=\sqrt M\left\{\frac{\widehat a_{h,\ell}(t)}{\widehat b_\ell(t)}-\frac{a_{h,\ell}(t)}{b_\ell(t)}\right\} \\
  &=\frac{\sqrt M\{\widehat a_{h,\ell}(t)-a_{h,\ell}(t)\}}{b_\ell(t)}
    -\frac{a_{h,\ell}(t)}{b_\ell(t)^2}\sqrt M\{\widehat b_\ell(t)-b_\ell(t)\}+o_p(1) \\
  &=M^{-1/2}\sum_{i=1}^M
    \frac{\zeta_{h,\ell,i}(t)-\dd\Lambda_{h,\ell}^\star(t)\zeta_{B,\ell,i}(t)}{b_\ell(t)}+o_p(1).
\end{aligned}
\]
The local cluster-level influence process is consequently
\[
  \phi_{\Lambda^h,\ell,i}(t)=\frac{\zeta_{h,\ell,i}(t)-\dd\Lambda_{h,\ell}^\star(t)\zeta_{B,\ell,i}(t)}{b_\ell(t)},
  \qquad h\in\{D,R\}.
\]
Let \(S_{a,\ell}^\star(t)=\prod_{0<u\le t}\{1-\dd\Lambda_{D,\ell}^\star(u)\}\).  The product-integral perturbation from cluster \(i\) is
\[
  \phi_{S,a,\ell,i}(t)
  =-S_{a,\ell}^\star(t)
   \int_{(0,t]}\frac{\dd\phi_{\Lambda^D,\ell,i}(u)}{1-\dd\Lambda_{D,\ell}^\star(u)},
\]
with the continuous-hazard simplification \(-S_{a,\ell}^\star(t)\int_0^t\dd\phi_{\Lambda^D,\ell,i}(u)\).  Therefore
\[
  \phi_{\nu,a,\ell,i}(\tau)=\int_0^\tau\phi_{S,a,\ell,i}(t)\dd t.
\]
The quantities \(\phi_{\nu,a,\ell,i}\), \(\phi_{\mu,a,\ell,i}^{\bw}\), \(\phi_{\psi,a,\ell,i}^{\bw}\), and \(\phi_{\Delta,\ell,i}^{\bw}\) are the cluster-level influence functions denoted \(\Phi_{\nu,a,\ell,i}\), \(\Phi_{\mu,a,\ell,i}\), \(\Phi_{\psi,a,\ell,i}\), and \(\Phi_{\Delta,\ell,i}\) in the main text.
For the recurrent-event burden, the Stieltjes product rule yields
\[
  \phi_{\mu,a,\ell,i}^{\bw}(\tau)
  =\int_0^\tau S_{a,\ell}^\star(t-)\dd\phi_{\Lambda^R,\ell,i}(t)
   +\int_0^\tau\phi_{S,a,\ell,i}(t-)\dd\Lambda_{R,\ell}^\star(t).
\]
The while-alive rate influence function is obtained by differentiating \(\mu/\nu\):
\[
  \phi_{\psi,a,\ell,i}^{\bw}(\tau)
  =\frac{\phi_{\mu,a,\ell,i}^{\bw}(\tau)}{\nu_{a,\ell}^\star(\tau)}
   -\frac{\mu_{a,\ell}^\star(\tau;\bw)}{\{\nu_{a,\ell}^\star(\tau)\}^2}\phi_{\nu,a,\ell,i}(\tau).
\]
Finally, because the two arm-specific estimators are formed from the same independent clusters but have disjoint treatment-weighted empirical contributions, the cluster-level contrast influence function is
\[
  \phi_{\Delta,\ell,i}^{\bw}(\tau)=\phi_{\psi,1,\ell,i}^{\bw}(\tau)-\phi_{\psi,0,\ell,i}^{\bw}(\tau),
\]
and hence
\[
  \sqrt M\{\widehat\Delta_\ell^{\mathrm{DR}}(\tau;\bw)-\Delta_\ell(\tau;\bw)\}
  =M^{-1/2}\sum_{i=1}^M\phi_{\Delta,\ell,i}^{\bw}(\tau)+o_p(1).
\]
The analytic cluster-level influence function variance estimator is
\[
  \widehat V_{\mathrm{IF}}\{\widehat\Delta_\ell^{\mathrm{DR}}(\tau;\bw)\}
  =\frac{1}{M(M-1)}\sum_{i=1}^M\left\{\widehat\phi_{\Delta,\ell,i}^{\bw}(\tau)-M^{-1}\sum_{r=1}^M\widehat\phi_{\Delta,\ell,r}^{\bw}(\tau)\right\}^2.
\]
For finite or moderate numbers of clusters, Wald intervals may use a \(t_{M-2}\) critical value.

\section{Details of simulation settings} \label{supp:sec:simulation_settings}

Table \ref{supp:tab:sim_irt_dgp} summarizes the process-specific parameters in the simulation studies for IRT. 
\begin{table}[tbp]
\caption{Continuous-time data-generating parameters for the IRT simulation. Coefficient vectors correspond to \(\bh_i=(Z_{i1},Z_{i2},Z_{i1}Z_{i2})^{\mathsf T}\). The censoring baseline scale \(\lambda_C\) was calibrated separately for each \((\tau,\gamma)\).}
\label{supp:tab:sim_irt_dgp}
\centering
\renewcommand{\arraystretch}{1.08}
\begin{adjustbox}{max width=\textwidth}
\begin{tabular}{lccccc}
\toprule
Process & \(\rho_h\) & \(\lambda_{h,0}\) & \(\lambda_{h,1}\) & \(\bbeta_{h,0}^{\mathsf T}\) & \(\bbeta_{h,1}^{\mathsf T}\) \\
\midrule
Terminal event & \(1.20\) & \(0.055\) & \(0.040\) & \((0.25,0.85,0.75)\) & \((0.25,-0.65,-0.55)\) \\
Recurrent type 1 & \(1.08\) & \(0.22\) & \(0.35\) & \((0.15,1.00,0.80)\) & \((0.15,-0.85,-0.70)\) \\
Recurrent type 2 & \(1.18\) & \(0.14\) & \(0.42\) & \((-0.10,0.75,0.60)\) & \((-0.10,-0.65,-0.55)\) \\
Censoring & \(1.10\) & \(\exp(c_{\tau,\gamma})\) & \(\exp(c_{\tau,\gamma})\) & \((0.12,0.50,0.3125)\) & \((0.12,-0.50,-0.3125)\) \\
\bottomrule
\end{tabular}
\end{adjustbox}
\end{table}

Table \ref{supp:tab:crt_sim_dgp} summarizes the process-specific parameters in the simulation studies for CRT. The coefficient order is \((L_i,Z_{ij1},Z_{ij2},n_i^\star,n_i^\star Z_{ij2},Z_{ij1}Z_{ij2})\).
\begin{table}[tbp]
\caption{Data-generating parameters for the CRT simulation.}
\label{supp:tab:crt_sim_dgp}
\centering
\setlength{\tabcolsep}{3.0pt}
\renewcommand{\arraystretch}{1.08}
\begin{adjustbox}{max width=\textwidth}
\begin{tabular}{lcccll}
\toprule
Process & \(\lambda_0\) & \(\lambda_1\) & $\rho_h$
& \(\bbeta_0^\mathsf{T}\) & \(\bbeta_1^\mathsf{T}\) \\
\midrule
Terminal event
& \(0.045\) & \(0.032\) & \(1.20\)
& \((0.25,0.35,0.45,0.45,0.20,0.20)\)
& \((0.25,0.20,-0.35,-0.35,-0.15,-0.15)\) \\
Recurrent event 1
& \(0.18\) & \(0.13\) & \(1.08\)
& \((0.20,0.35,0.55,0.45,0.25,0.25)\)
& \((0.20,0.20,-0.40,-0.35,-0.20,-0.20)\) \\
Recurrent event 2
& \(0.12\) & \(0.10\) & \(1.18\)
& \((-0.10,0.25,0.45,0.35,0.20,0.20)\)
& \((-0.10,0.15,-0.35,-0.30,-0.15,-0.15)\) \\
Censoring
& \(\lambda_C\) & \(\lambda_C\) & \(1.10\)
& \((0.08,0.12,0.216,0.144,0.048,0.048)\)
& \((0.08,0.10,-0.216,-0.144,-0.048,-0.048)\) \\
\bottomrule
\end{tabular}
\end{adjustbox}
\end{table}

\section{Additional simulation results}
\label{supp:sec:simulation_tables_detailed}

This section reports additional simulation results using the same side-by-side estimator layout as the main-text simulation tables, with all tables kept in portrait orientation. The true value is included directly in each table. The column labeled Cens. is the empirical mean observed censoring percentage. The columns labeled O and C, or the paired column \(O,C\), indicate whether the outcome-side and censoring-side nuisance models are correctly specified, respectively. Relative bias and empirical coverage are reported as percentages, MCSD is the Monte Carlo standard deviation, and ASE is the average estimated standard error.

Table~\ref{supp:tab:irt_n1600} reports the IRT simulation scenarios with \(n=1600\). Table~\ref{supp:tab:irt_n3200} reports the corresponding IRT scenarios with \(n=3200\). The tables include both horizons and both censoring levels, and compare the doubly robust estimator with the IPCW and OR estimators under all four nuisance-model configurations.

Table~\ref{supp:tab:crt_individual_M50} and Table~\ref{supp:tab:crt_individual_M80} report the individual-level CRT estimand with \(M=50\) and \(M=80\) clusters. These tables target the effect for the average participant and therefore use the individual-level weighting scheme from the main text.

The cluster-level CRT estimand with \(M=50\) is reported in Table~\ref{tab:sim_crt_semiparametric} of the main text; Table~\ref{supp:tab:crt_cluster_M80} reports the corresponding results with \(M=80\) clusters. These tables target the effect for the average cluster and are therefore the appropriate counterpart when the scientific question gives equal weight to clusters.

\begin{table}[p]
\caption{Additional Monte Carlo performance for the while-alive treatment contrast in the individually randomized trial with \(n=1600\). Cens. is the mean observed censoring percentage. \(O1\) and \(C1\) denote correct outcome-side and censoring-side working models, respectively. RBias is relative bias in percent, MCSD is the Monte Carlo standard deviation, ASE is the average influence function standard error, and Cov is empirical coverage in percent of the nominal \(95\%\) interval.}
\label{supp:tab:irt_n1600}
\centering
\setlength{\tabcolsep}{2.2pt}
\renewcommand{\arraystretch}{1.04}
\begin{adjustbox}{max width=\textwidth}
\begin{tabular}{@{}cccc*{12}{r}@{}}
\toprule
& & & & \multicolumn{4}{c}{\textsc{dr}}
& \multicolumn{4}{c}{\textsc{ipcw}}
& \multicolumn{4}{c}{\textsc{or}} \\
\cmidrule(lr){5-8}\cmidrule(lr){9-12}\cmidrule(lr){13-16}
\(\tau\) & Cens. & \(O,C\) & \(\Delta\)
& RBias & MCSD & ASE & Cov
& RBias & MCSD & ASE & Cov
& RBias & MCSD & ASE & Cov \\
\midrule
3 & 40.0 & O1C1 & 0.542 & +0.1 & 0.061 & 0.060 & 95.1 & +0.1 & 0.062 & 0.061 & 94.5 & +0.3 & 0.060 & 0.060 & 95.6 \\
 &  & O1C0 & 0.542 & +0.1 & 0.060 & 0.059 & 95.7 & -5.9 & 0.056 & 0.055 & 89.7 & +0.3 & 0.060 & 0.060 & 95.6 \\
 &  & O0C1 & 0.542 & +0.1 & 0.062 & 0.061 & 94.4 & +0.1 & 0.062 & 0.061 & 94.5 & -5.8 & 0.056 & 0.055 & 90.2 \\
 &  & O0C0 & 0.542 & -6.0 & 0.056 & 0.055 & 90.0 & -5.9 & 0.056 & 0.055 & 89.7 & -5.8 & 0.056 & 0.055 & 90.2 \\
\addlinespace[2pt]
3 & 60.0 & O1C1 & 0.542 & -0.9 & 0.069 & 0.069 & 95.7 & -0.6 & 0.071 & 0.071 & 95.6 & -0.3 & 0.067 & 0.067 & 95.4 \\
 &  & O1C0 & 0.542 & -0.7 & 0.067 & 0.067 & 95.8 & -11.5 & 0.057 & 0.057 & 78.1 & -0.3 & 0.067 & 0.067 & 95.4 \\
 &  & O0C1 & 0.542 & -0.5 & 0.071 & 0.071 & 96.0 & -0.6 & 0.071 & 0.071 & 95.6 & -11.2 & 0.057 & 0.057 & 79.0 \\
 &  & O0C0 & 0.542 & -11.5 & 0.057 & 0.057 & 78.2 & -11.5 & 0.057 & 0.057 & 78.1 & -11.2 & 0.057 & 0.057 & 79.0 \\
\addlinespace[2pt]
5 & 40.1 & O1C1 & 0.581 & -0.4 & 0.061 & 0.059 & 94.2 & -0.3 & 0.062 & 0.060 & 93.7 & -0.1 & 0.060 & 0.058 & 94.0 \\
 &  & O1C0 & 0.581 & -0.3 & 0.060 & 0.058 & 94.1 & -7.2 & 0.054 & 0.054 & 85.7 & -0.1 & 0.060 & 0.058 & 94.0 \\
 &  & O0C1 & 0.581 & -0.3 & 0.062 & 0.060 & 93.9 & -0.3 & 0.062 & 0.060 & 93.7 & -7.0 & 0.054 & 0.054 & 86.0 \\
 &  & O0C0 & 0.581 & -7.2 & 0.054 & 0.054 & 85.3 & -7.2 & 0.054 & 0.054 & 85.7 & -7.0 & 0.054 & 0.054 & 86.0 \\
\addlinespace[2pt]
5 & 60.0 & O1C1 & 0.581 & -0.4 & 0.074 & 0.070 & 92.3 & -0.3 & 0.076 & 0.071 & 92.3 & +0.2 & 0.070 & 0.066 & 93.4 \\
 &  & O1C0 & 0.581 & -0.2 & 0.070 & 0.066 & 93.0 & -12.4 & 0.058 & 0.055 & 71.8 & +0.2 & 0.070 & 0.066 & 93.4 \\
 &  & O0C1 & 0.581 & -0.2 & 0.076 & 0.072 & 92.4 & -0.3 & 0.076 & 0.071 & 92.3 & -12.0 & 0.059 & 0.056 & 72.6 \\
 &  & O0C0 & 0.581 & -12.4 & 0.058 & 0.055 & 71.4 & -12.4 & 0.058 & 0.055 & 71.8 & -12.0 & 0.059 & 0.056 & 72.6 \\
\bottomrule
\end{tabular}
\end{adjustbox}
\end{table}

\begin{table}[p]
\caption{Additional Monte Carlo performance for the while-alive treatment contrast in the individually randomized trial with \(n=3200\). Cens. is the mean observed censoring percentage. \(O1\) and \(C1\) denote correct outcome-side and censoring-side working models, respectively. RBias is relative bias in percent, MCSD is the Monte Carlo standard deviation, ASE is the average influence function standard error, and Cov is empirical coverage in percent of the nominal \(95\%\) interval.}
\label{supp:tab:irt_n3200}
\centering
\setlength{\tabcolsep}{2.2pt}
\renewcommand{\arraystretch}{1.04}
\begin{adjustbox}{max width=\textwidth}
\begin{tabular}{@{}cccc*{12}{r}@{}}
\toprule
& & & & \multicolumn{4}{c}{\textsc{dr}}
& \multicolumn{4}{c}{\textsc{ipcw}}
& \multicolumn{4}{c}{\textsc{or}} \\
\cmidrule(lr){5-8}\cmidrule(lr){9-12}\cmidrule(lr){13-16}
\(\tau\) & Cens. & \(O,C\) & \(\Delta\)
& RBias & MCSD & ASE & Cov
& RBias & MCSD & ASE & Cov
& RBias & MCSD & ASE & Cov \\
\midrule
3 & 40.0 & O1C1 & 0.542 & +0.0 & 0.043 & 0.043 & 95.0 & +0.1 & 0.044 & 0.044 & 95.3 & +0.2 & 0.042 & 0.042 & 95.7 \\
 &  & O1C0 & 0.542 & +0.0 & 0.042 & 0.042 & 95.3 & -6.2 & 0.039 & 0.039 & 84.9 & +0.2 & 0.042 & 0.042 & 95.7 \\
 &  & O0C1 & 0.542 & +0.1 & 0.044 & 0.044 & 95.3 & +0.1 & 0.044 & 0.044 & 95.3 & -6.0 & 0.039 & 0.039 & 86.2 \\
 &  & O0C0 & 0.542 & -6.2 & 0.039 & 0.039 & 85.4 & -6.2 & 0.039 & 0.039 & 84.9 & -6.0 & 0.039 & 0.039 & 86.2 \\
\addlinespace[2pt]
3 & 60.0 & O1C1 & 0.542 & -0.4 & 0.050 & 0.050 & 94.3 & -0.3 & 0.051 & 0.051 & 94.2 & +0.1 & 0.048 & 0.048 & 94.3 \\
 &  & O1C0 & 0.542 & -0.3 & 0.047 & 0.047 & 94.1 & -11.3 & 0.040 & 0.040 & 68.1 & +0.1 & 0.048 & 0.048 & 94.3 \\
 &  & O0C1 & 0.542 & -0.3 & 0.051 & 0.051 & 94.3 & -0.3 & 0.051 & 0.051 & 94.2 & -10.9 & 0.040 & 0.040 & 69.7 \\
 &  & O0C0 & 0.542 & -11.3 & 0.040 & 0.040 & 68.2 & -11.3 & 0.040 & 0.040 & 68.1 & -10.9 & 0.040 & 0.040 & 69.7 \\
\addlinespace[2pt]
5 & 40.0 & O1C1 & 0.581 & -0.1 & 0.042 & 0.042 & 94.4 & +0.0 & 0.043 & 0.043 & 94.6 & +0.2 & 0.041 & 0.041 & 94.9 \\
 &  & O1C0 & 0.581 & -0.1 & 0.041 & 0.041 & 94.5 & -7.0 & 0.038 & 0.038 & 80.8 & +0.2 & 0.041 & 0.041 & 94.9 \\
 &  & O0C1 & 0.581 & +0.0 & 0.043 & 0.043 & 94.7 & +0.0 & 0.043 & 0.043 & 94.6 & -6.8 & 0.038 & 0.038 & 81.8 \\
 &  & O0C0 & 0.581 & -7.0 & 0.038 & 0.038 & 80.8 & -7.0 & 0.038 & 0.038 & 80.8 & -6.8 & 0.038 & 0.038 & 81.8 \\
\addlinespace[2pt]
5 & 60.0 & O1C1 & 0.581 & -0.4 & 0.048 & 0.050 & 95.3 & -0.2 & 0.049 & 0.051 & 95.5 & +0.1 & 0.045 & 0.047 & 95.7 \\
 &  & O1C0 & 0.581 & -0.3 & 0.045 & 0.047 & 95.6 & -12.3 & 0.038 & 0.039 & 55.4 & +0.1 & 0.045 & 0.047 & 95.7 \\
 &  & O0C1 & 0.581 & -0.2 & 0.049 & 0.051 & 95.6 & -0.2 & 0.049 & 0.051 & 95.5 & -11.9 & 0.038 & 0.039 & 58.4 \\
 &  & O0C0 & 0.581 & -12.3 & 0.038 & 0.039 & 55.4 & -12.3 & 0.038 & 0.039 & 55.4 & -11.9 & 0.038 & 0.039 & 58.4 \\
\bottomrule
\end{tabular}
\end{adjustbox}
\end{table}

\begin{table}[p]
\caption{Additional Monte Carlo performance for the individual-average while-alive treatment contrast with \(M=50\). Cens. is the mean observed censoring percentage. \(O1\) and \(C1\) denote correct outcome-side and censoring-side working models, respectively. RBias is relative bias in percent, MCSD is the Monte Carlo standard deviation, ASE is the average cluster influence function standard error, and Cov is empirical coverage in percent of the nominal \(95\%\) interval based on a \(t_{M-2}\) critical value.}
\label{supp:tab:crt_individual_M50}
\centering
\setlength{\tabcolsep}{2.2pt}
\renewcommand{\arraystretch}{1.04}
\begin{adjustbox}{max width=\textwidth}
\begin{tabular}{@{}cccc*{12}{r}@{}}
\toprule
& & & & \multicolumn{4}{c}{\textsc{dr}}
& \multicolumn{4}{c}{\textsc{ipcw}}
& \multicolumn{4}{c}{\textsc{or}} \\
\cmidrule(lr){5-8}\cmidrule(lr){9-12}\cmidrule(lr){13-16}
\(\tau\) & Cens. & \(O,C\) & \(\Delta_{\mathrm{ind}}\)
& RBias & MCSD & ASE & Cov
& RBias & MCSD & ASE & Cov
& RBias & MCSD & ASE & Cov \\
\midrule
3 & 40.0 & O1C1 & -0.428 & +0.0 & 0.105 & 0.094 & 91.7 & -0.3 & 0.109 & 0.101 & 91.5 & +0.5 & 0.105 & 0.094 & 92.2 \\
 &  & O1C0 & -0.428 & +0.1 & 0.105 & 0.094 & 92.1 & -6.9 & 0.102 & 0.095 & 88.6 & +0.5 & 0.105 & 0.094 & 92.2 \\
 &  & O0C1 & -0.428 & -0.6 & 0.109 & 0.099 & 91.3 & -0.3 & 0.109 & 0.101 & 91.5 & -6.9 & 0.101 & 0.093 & 88.2 \\
 &  & O0C0 & -0.428 & -7.3 & 0.101 & 0.092 & 88.0 & -6.9 & 0.102 & 0.095 & 88.6 & -6.9 & 0.101 & 0.093 & 88.2 \\
\addlinespace[2pt]
3 & 60.0 & O1C1 & -0.428 & +0.5 & 0.105 & 0.096 & 91.9 & +0.2 & 0.110 & 0.103 & 91.9 & +1.3 & 0.105 & 0.097 & 92.2 \\
 &  & O1C0 & -0.428 & +0.7 & 0.105 & 0.096 & 92.0 & -11.5 & 0.097 & 0.092 & 84.0 & +1.3 & 0.105 & 0.097 & 92.2 \\
 &  & O0C1 & -0.428 & -0.2 & 0.110 & 0.102 & 91.0 & +0.2 & 0.110 & 0.103 & 91.9 & -11.3 & 0.096 & 0.090 & 85.1 \\
 &  & O0C0 & -0.428 & -11.9 & 0.096 & 0.090 & 85.0 & -11.5 & 0.097 & 0.092 & 84.0 & -11.3 & 0.096 & 0.090 & 85.1 \\
\addlinespace[2pt]
5 & 40.0 & O1C1 & -0.403 & -2.2 & 0.101 & 0.089 & 89.2 & -2.4 & 0.104 & 0.095 & 89.5 & -1.4 & 0.101 & 0.090 & 89.7 \\
 &  & O1C0 & -0.403 & -2.0 & 0.101 & 0.089 & 89.4 & -8.3 & 0.098 & 0.090 & 86.7 & -1.4 & 0.101 & 0.090 & 89.7 \\
 &  & O0C1 & -0.403 & -2.7 & 0.103 & 0.093 & 89.2 & -2.4 & 0.104 & 0.095 & 89.5 & -8.1 & 0.098 & 0.088 & 86.6 \\
 &  & O0C0 & -0.403 & -8.6 & 0.097 & 0.088 & 86.3 & -8.3 & 0.098 & 0.090 & 86.7 & -8.1 & 0.098 & 0.088 & 86.6 \\
\addlinespace[2pt]
5 & 60.0 & O1C1 & -0.403 & -1.5 & 0.104 & 0.092 & 90.9 & -1.2 & 0.106 & 0.099 & 91.2 & -0.3 & 0.104 & 0.093 & 91.4 \\
 &  & O1C0 & -0.403 & -1.2 & 0.104 & 0.093 & 91.1 & -11.9 & 0.094 & 0.089 & 85.8 & -0.3 & 0.104 & 0.093 & 91.4 \\
 &  & O0C1 & -0.403 & -2.1 & 0.107 & 0.098 & 90.2 & -1.2 & 0.106 & 0.099 & 91.2 & -12.0 & 0.094 & 0.087 & 84.7 \\
 &  & O0C0 & -0.403 & -12.7 & 0.093 & 0.086 & 84.0 & -11.9 & 0.094 & 0.089 & 85.8 & -12.0 & 0.094 & 0.087 & 84.7 \\
\bottomrule
\end{tabular}
\end{adjustbox}
\end{table}

\begin{table}[p]
\caption{Additional Monte Carlo performance for the individual-average while-alive treatment contrast with \(M=80\). Cens. is the mean observed censoring percentage. \(O1\) and \(C1\) denote correct outcome-side and censoring-side working models, respectively. RBias is relative bias in percent, MCSD is the Monte Carlo standard deviation, ASE is the average cluster influence function standard error, and Cov is empirical coverage in percent of the nominal \(95\%\) interval based on a \(t_{M-2}\) critical value.}
\label{supp:tab:crt_individual_M80}
\centering
\setlength{\tabcolsep}{2.2pt}
\renewcommand{\arraystretch}{1.04}
\begin{adjustbox}{max width=\textwidth}
\begin{tabular}{@{}cccc*{12}{r}@{}}
\toprule
& & & & \multicolumn{4}{c}{\textsc{dr}}
& \multicolumn{4}{c}{\textsc{ipcw}}
& \multicolumn{4}{c}{\textsc{or}} \\
\cmidrule(lr){5-8}\cmidrule(lr){9-12}\cmidrule(lr){13-16}
\(\tau\) & Cens. & \(O,C\) & \(\Delta_{\mathrm{ind}}\)
& RBias & MCSD & ASE & Cov
& RBias & MCSD & ASE & Cov
& RBias & MCSD & ASE & Cov \\
\midrule
3 & 40.0 & O1C1 & -0.428 & -0.7 & 0.082 & 0.075 & 90.3 & -0.6 & 0.086 & 0.081 & 91.7 & -0.1 & 0.083 & 0.076 & 91.2 \\
 &  & O1C0 & -0.428 & -0.6 & 0.082 & 0.075 & 90.4 & -7.2 & 0.080 & 0.076 & 88.5 & -0.1 & 0.083 & 0.076 & 91.2 \\
 &  & O0C1 & -0.428 & -1.0 & 0.085 & 0.079 & 91.5 & -0.6 & 0.086 & 0.081 & 91.7 & -7.2 & 0.079 & 0.074 & 88.1 \\
 &  & O0C0 & -0.428 & -7.6 & 0.079 & 0.074 & 87.7 & -7.2 & 0.080 & 0.076 & 88.5 & -7.2 & 0.079 & 0.074 & 88.1 \\
\addlinespace[2pt]
3 & 60.0 & O1C1 & -0.428 & -1.2 & 0.081 & 0.078 & 93.1 & -1.3 & 0.084 & 0.083 & 91.5 & -0.3 & 0.081 & 0.078 & 93.4 \\
 &  & O1C0 & -0.428 & -1.0 & 0.081 & 0.077 & 92.9 & -13.0 & 0.074 & 0.073 & 83.0 & -0.3 & 0.081 & 0.078 & 93.4 \\
 &  & O0C1 & -0.428 & -1.5 & 0.083 & 0.082 & 91.8 & -1.3 & 0.084 & 0.083 & 91.5 & -12.7 & 0.073 & 0.072 & 83.3 \\
 &  & O0C0 & -0.428 & -13.3 & 0.072 & 0.072 & 82.0 & -13.0 & 0.074 & 0.073 & 83.0 & -12.7 & 0.073 & 0.072 & 83.3 \\
\addlinespace[2pt]
5 & 40.0 & O1C1 & -0.403 & -1.5 & 0.077 & 0.072 & 92.1 & -1.3 & 0.081 & 0.076 & 92.3 & -0.7 & 0.077 & 0.073 & 92.3 \\
 &  & O1C0 & -0.403 & -1.4 & 0.077 & 0.072 & 91.9 & -7.2 & 0.076 & 0.072 & 88.2 & -0.7 & 0.077 & 0.073 & 92.3 \\
 &  & O0C1 & -0.403 & -1.7 & 0.080 & 0.075 & 91.9 & -1.3 & 0.081 & 0.076 & 92.3 & -7.1 & 0.076 & 0.071 & 88.1 \\
 &  & O0C0 & -0.403 & -7.6 & 0.075 & 0.071 & 87.7 & -7.2 & 0.076 & 0.072 & 88.2 & -7.1 & 0.076 & 0.071 & 88.1 \\
\addlinespace[2pt]
5 & 60.0 & O1C1 & -0.403 & -0.3 & 0.078 & 0.075 & 92.9 & -0.3 & 0.081 & 0.078 & 93.2 & +0.9 & 0.078 & 0.075 & 93.0 \\
 &  & O1C0 & -0.403 & +0.0 & 0.078 & 0.074 & 92.8 & -11.1 & 0.073 & 0.070 & 84.7 & +0.9 & 0.078 & 0.075 & 93.0 \\
 &  & O0C1 & -0.403 & -0.6 & 0.082 & 0.078 & 92.9 & -0.3 & 0.081 & 0.078 & 93.2 & -10.7 & 0.073 & 0.069 & 85.3 \\
 &  & O0C0 & -0.403 & -11.5 & 0.072 & 0.069 & 83.9 & -11.1 & 0.073 & 0.070 & 84.7 & -10.7 & 0.073 & 0.069 & 85.3 \\
\bottomrule
\end{tabular}
\end{adjustbox}
\end{table}

\begin{table}[p]
\caption{Additional Monte Carlo performance for the cluster-average while-alive treatment contrast with \(M=80\). Cens. is the mean observed censoring percentage. \(O1\) and \(C1\) denote correct outcome-side and censoring-side working models, respectively. RBias is relative bias in percent, MCSD is the Monte Carlo standard deviation, ASE is the average cluster influence function standard error, and Cov is empirical coverage in percent of the nominal \(95\%\) interval based on a \(t_{M-2}\) critical value.}
\label{supp:tab:crt_cluster_M80}
\centering
\setlength{\tabcolsep}{2.2pt}
\renewcommand{\arraystretch}{1.04}
\begin{adjustbox}{max width=\textwidth}
\begin{tabular}{@{}cccc*{12}{r}@{}}
\toprule
& & & & \multicolumn{4}{c}{\textsc{dr}}
& \multicolumn{4}{c}{\textsc{ipcw}}
& \multicolumn{4}{c}{\textsc{or}} \\
\cmidrule(lr){5-8}\cmidrule(lr){9-12}\cmidrule(lr){13-16}
\(\tau\) & Cens. & \(O,C\) & \(\Delta_{\mathrm{clus}}\)
& RBias & MCSD & ASE & Cov
& RBias & MCSD & ASE & Cov
& RBias & MCSD & ASE & Cov \\
\midrule
3 & 40.0 & O1C1 & -0.309 & -0.3 & 0.070 & 0.065 & 92.2 & +0.2 & 0.073 & 0.070 & 92.8 & +0.4 & 0.070 & 0.066 & 92.3 \\
 &  & O1C0 & -0.309 & -0.2 & 0.070 & 0.065 & 92.3 & -7.0 & 0.069 & 0.065 & 90.4 & +0.4 & 0.070 & 0.066 & 92.3 \\
 &  & O0C1 & -0.309 & -0.1 & 0.072 & 0.069 & 92.9 & +0.2 & 0.073 & 0.070 & 92.8 & +24.7 & 0.077 & 0.072 & 85.5 \\
 &  & O0C0 & -0.309 & -7.5 & 0.068 & 0.065 & 89.9 & -7.0 & 0.069 & 0.065 & 90.4 & +24.7 & 0.077 & 0.072 & 85.5 \\
\addlinespace[2pt]
3 & 60.0 & O1C1 & -0.309 & -0.8 & 0.068 & 0.067 & 94.4 & -0.6 & 0.070 & 0.071 & 94.1 & +0.0 & 0.069 & 0.067 & 94.2 \\
 &  & O1C0 & -0.309 & -0.7 & 0.068 & 0.067 & 94.2 & -13.4 & 0.062 & 0.063 & 87.0 & +0.0 & 0.069 & 0.067 & 94.2 \\
 &  & O0C1 & -0.309 & -0.7 & 0.070 & 0.070 & 94.6 & -0.6 & 0.070 & 0.071 & 94.1 & +17.2 & 0.071 & 0.070 & 91.6 \\
 &  & O0C0 & -0.309 & -13.9 & 0.062 & 0.063 & 86.3 & -13.4 & 0.062 & 0.063 & 87.0 & +17.2 & 0.071 & 0.070 & 91.6 \\
\addlinespace[2pt]
5 & 40.0 & O1C1 & -0.289 & -0.9 & 0.067 & 0.063 & 93.3 & -0.4 & 0.070 & 0.067 & 92.9 & -0.1 & 0.067 & 0.064 & 92.9 \\
 &  & O1C0 & -0.289 & -0.9 & 0.067 & 0.063 & 93.2 & -6.8 & 0.066 & 0.063 & 90.6 & -0.1 & 0.067 & 0.064 & 92.9 \\
 &  & O0C1 & -0.289 & -0.6 & 0.070 & 0.066 & 93.6 & -0.4 & 0.070 & 0.067 & 92.9 & +26.1 & 0.074 & 0.070 & 83.8 \\
 &  & O0C0 & -0.289 & -7.2 & 0.066 & 0.063 & 90.2 & -6.8 & 0.066 & 0.063 & 90.6 & +26.1 & 0.074 & 0.070 & 83.8 \\
\addlinespace[2pt]
5 & 60.0 & O1C1 & -0.289 & +0.4 & 0.067 & 0.065 & 94.3 & +0.8 & 0.070 & 0.068 & 94.4 & +1.6 & 0.067 & 0.066 & 94.6 \\
 &  & O1C0 & -0.289 & +0.6 & 0.067 & 0.065 & 94.0 & -11.0 & 0.063 & 0.061 & 89.0 & +1.6 & 0.067 & 0.066 & 94.6 \\
 &  & O0C1 & -0.289 & +0.7 & 0.070 & 0.068 & 95.0 & +0.8 & 0.070 & 0.068 & 94.4 & +20.9 & 0.071 & 0.068 & 88.3 \\
 &  & O0C0 & -0.289 & -11.5 & 0.063 & 0.061 & 88.4 & -11.0 & 0.063 & 0.061 & 89.0 & +20.9 & 0.071 & 0.068 & 88.3 \\
\bottomrule
\end{tabular}
\end{adjustbox}
\end{table}

\section{Additional data analysis results} \label{supp:sec:data_analysis}

Table \ref{supp:tab:hfaction_while_alive} shows the estimated treatment contrast at several landmark times in HF-ACTION data analysis.

\begin{table}[tbp]
\caption{Doubly robust estimates of the while-alive all-cause hospitalization rate in the HF-ACTION analysis. Rates are expected all-cause hospitalizations per year alive. The contrast is exercise training minus usual care, so a negative value indicates a lower hospitalization burden per year alive under exercise training.}
\label{supp:tab:hfaction_while_alive}
\centering
\setlength{\tabcolsep}{4.5pt}
\renewcommand{\arraystretch}{1.12}
\begin{adjustbox}{max width=\textwidth}
\begin{tabular}{ccccc}
\toprule
Horizon, years
& Usual care, \(\widehat\psi_0\) (95\% CI)
& Exercise training, \(\widehat\psi_1\) (95\% CI)
& Difference, \(\widehat\Delta\) (95\% CI)
& \(p\)-value \\
\midrule
0.5 & 0.706 (0.620, 0.791) & 0.693 (0.607, 0.779) & -0.013 (-0.134, 0.107) & 0.830 \\
1.0 & 0.730 (0.667, 0.793) & 0.760 (0.695, 0.826) &  0.030 (-0.060, 0.120) & 0.509 \\
2.0 & 0.725 (0.675, 0.774) & 0.709 (0.660, 0.758) & -0.016 (-0.084, 0.052) & 0.652 \\
3.0 & 0.608 (0.570, 0.645) & 0.596 (0.557, 0.635) & -0.011 (-0.064, 0.042) & 0.678 \\
4.0 & 0.551 (0.519, 0.583) & 0.526 (0.493, 0.559) & -0.025 (-0.070, 0.020) & 0.271 \\
\bottomrule
\end{tabular}
\end{adjustbox}
\par\vspace{2pt}
{\footnotesize CI, confidence interval. All intervals are pointwise Wald intervals based on the influence function variance estimator.}
\end{table}

\end{document}